\documentclass{article}

\usepackage[utf8]{inputenc}
\usepackage[english]{babel}
\usepackage{geometry}
\usepackage{lmodern}
\usepackage{microtype}
\usepackage{authblk}
\usepackage{titlesec}
\usepackage{ragged2e}
\usepackage{relsize}
\usepackage{amsmath,amsfonts,amssymb,amsthm}
\usepackage{mathtools}
\mathtoolsset{showonlyrefs}
\allowdisplaybreaks
\usepackage{bm}
\usepackage{physics}
\usepackage{dsfont}
\usepackage{bbm}
\usepackage{nicefrac}
\usepackage{stackrel}
\usepackage{scalerel}
\usepackage{letltxmacro}

\usepackage{graphicx}
\graphicspath{{images/}}
\usepackage{xcolor}
\usepackage{float}
\usepackage{wrapfig}
\usepackage[font=footnotesize]{caption}
\usepackage{subcaption}
\usepackage{alphalph}

\usepackage[export]{adjustbox}
\usepackage{multirow}
\usepackage{makecell}
\usepackage{booktabs}
\usepackage{enumitem}
\usepackage{algorithm,algpseudocode}
\usepackage{tikz}
\usetikzlibrary{patterns,decorations.pathreplacing}
\usepackage{etoolbox}
\usepackage{needspace}
\pretocmd{\subsection}{\Needspace{5\baselineskip}}{}{}
\usepackage{csquotes}
\usepackage{epigraph}

\newtheorem{theorem}{Theorem}
\newtheorem{proposition}{Proposition}
\newtheorem{lemma}{Lemma}

\newtheorem{assumption}{Assumption}

\newtheorem{corollary}{Corollary}

\DeclareMathOperator*{\argmin}{arg\,min}
\DeclareMathOperator*{\argmax}{arg\,max}

\DeclareRobustCommand{\Rb}{\mathbb{R}}
\DeclareRobustCommand{\bD}{\boldsymbol{D}}

\DeclareRobustCommand{\bA}{\boldsymbol{A}}

\DeclareRobustCommand{\bE}{\boldsymbol{E}}
\DeclareRobustCommand{\bU}{\boldsymbol{U}}
\DeclareRobustCommand{\bV}{\boldsymbol{V}}
\DeclareRobustCommand{\bS}{\boldsymbol{S}}
\DeclareRobustCommand{\bM}{\boldsymbol{M}}
\DeclareRobustCommand{\bQ}{\boldsymbol{Q}}
\DeclareRobustCommand{\bP}{\boldsymbol{P}}
\DeclareRobustCommand{\bX}{\boldsymbol{X}}

\DeclareRobustCommand{\bB}{\boldsymbol{B}}
\DeclareRobustCommand{\bW}{\boldsymbol{W}}

\newcommand{\bI}{\boldsymbol{I}}

\newcommand{\bY}{\boldsymbol{Y}}

\DeclareRobustCommand{\bSigma}{\boldsymbol{\Sigma}}
\DeclareRobustCommand{\hA}{\widehat{{A}}}

\DeclareRobustCommand{\bhA}{\widehat{\boldsymbol{A}}}
\DeclareRobustCommand{\bhU}{\widehat{\boldsymbol{U}}}
\DeclareRobustCommand{\bhP}{\widehat{\boldsymbol{P}}}
\DeclareRobustCommand{\bhV}{\widehat{\boldsymbol{V}}}
\DeclareRobustCommand{\bhS}{\widehat{\boldsymbol{S}}}

\DeclareRobustCommand{\Hc}{\mathcal{H}}
\DeclareRobustCommand{\Jc}{\mathcal{J}}
\DeclareRobustCommand{\Ic}{\mathcal{I}}
\DeclareRobustCommand{\Ec}{\mathcal{E}}

\DeclareRobustCommand{\Rc}{\mathcal{R}}

\DeclareRobustCommand{\Oc}{\mathcal{O}}
\newcommand{\Gc}{\mathcal{G}}

\newcommand{\Vc}{\mathcal{V}}
\DeclareRobustCommand{\Sc}{\mathcal{S}}

\DeclareRobustCommand{\Rc}{\mathcal{R}}
\DeclareRobustCommand{\BCc}{\mathcal{BC}}

\DeclareRobustCommand{\Rb}{\mathbb R}

\DeclareRobustCommand{\Pb}{\mathbb P}

\DeclareMathOperator{\Cov}{Cov}
\DeclareMathOperator{\Var}{Var}

\newcommand{\bz}{\boldsymbol{z}} 
\newcommand{\bDelta}{\boldsymbol{\Delta}} 
\newcommand{\btheta}{\boldsymbol{\theta}} 
\newcommand{\bXi}{\boldsymbol{\Xi}}
\newcommand{\bomega}{\boldsymbol{\omega}} 
\newcommand{\cE}{\mathcal{E}}

\newcommand{\bvarphi}{\boldsymbol{\varphi}}

\newcommand{\tY}{\widetilde{Y}}

\newcommand{\hsigma}{\widehat{\sigma}}

\newcommand{\btY}{\widetilde{\bY}}

\newcommand{\halpha}{\widehat{\alpha}}

\newcommand{\bhQ}{\widehat{\bQ}}

\newcommand{\distas}[1]{\mathbin{\overset{#1}{\kern\z@\sim}}}%
\newsavebox{\mybox}\newsavebox{\mysim}
\newcommand{\distras}[1]{%
  \savebox{\mybox}{\hbox{\kern3pt$\scriptstyle#1$\kern3pt}}%
  \savebox{\mysim}{\hbox{$\sim$}}%
  \mathbin{\overset{#1}{\kern\z@\resizebox{\wd\mybox}{\ht\mysim}{$\sim$}}}%
}

\newcommand{\htheta}{\widehat{\theta}}

\newcommand{\hbeta}{\widehat{\beta}}

\newcommand{\Ex}{\mathbb{E}}

\newcommand{\hvarphi}{\widehat{\varphi}}

\newcommand{\hs}{\widehat{s}}
\newcommand{\bv}{\boldsymbol{v}}
\newcommand{\bx}{\boldsymbol{x}}
\newcommand{\bu}{\boldsymbol{u}}
\newcommand{\hbu}{\widehat{\bu}}
\newcommand{\hbv}{\widehat{\bv}}

\newcommand{\bbeta}{\boldsymbol{\beta}} 
\newcommand{\balpha}{\boldsymbol{\alpha}} 
\newcommand{\hbbeta}{\widehat{\bbeta}}
\newcommand{\bzero}{\boldsymbol{0}}

\newcommand{\hbalpha}{\widehat{\balpha}}
\newcommand{\nAR}{\normalfont{\AR}}
\newcommand{\nAC}{\normalfont{\AC}} 

\newcommand{\hB}{\widehat{B}}

\newcommand{\txtop}{\text{op}}
\newcommand{\eop}{\emph{op}}

\newcommand{\clip}{\text{clip}}
\newcommand{\eclip}{\emph{clip}} 

\newcommand{\ARC}{\textsf{ARC}} 
\newcommand{\nARC}{\normalfont{\ARC}}
\newcommand{\AR}{\textsf{AR}} 
\newcommand{\AC}{\textsf{AC}}

\newcommand{\NR}{\textsf{NR}} 
\newcommand{\NC}{\textsf{NC}}

\newcommand{\USVT}{\texttt{USVT}}

\newcommand{\SNN}{\texttt{SNN}}
\newcommand{\PCR}{\texttt{PCR}}

\newcommand{\anchor}{\texttt{AnchorSubMatrix}}
\newcommand{\softimpute}{\texttt{softImpute}}

\newcommand{\core}{\text{core}}
\newcommand{\tuser}{\text{user}}

\newcommand{\tmovie}{\text{movie}}
\newcommand{\standard}{\text{standard}}
\newcommand{\Cc}{\mathcal{C}}

\newcommand{\by}{\boldsymbol{y}}
\newcommand{\bbX}{\bar{\bX}}
\newcommand{\bbY}{\bar{\bY}}
\newcommand{\bby}{\bar{\by}}
\newcommand{\bbz}{\bar{\bz}} 
 
\newcommand{\hbX}{\widehat{\bX}} 

\newcommand{\bbW}{\bar{\bW}}

\newcommand{\bxi}{\boldsymbol{\xi}}
\newcommand{\br}{\bar{r}}

\newcommand{\btau}{\bar{\tau}}
\newcommand{\noise}{\texttt{noise}}

\newcommand{\boldb}{\boldsymbol{b}}
\newcommand{\bolda}{\boldsymbol{a}}

\newcommand{\bdelta}{\boldsymbol{\delta}}
\newcommand{\ba}{\boldsymbol{a}}
\newcommand{\Nc}{\mathcal{N}}
\newcommand{\hupsilon}{\widehat{\upsilon}}

\newcommand{\talpha}{\tilde{\alpha}}
\newcommand{\tbalpha}{\tilde{\balpha}}
\newcommand{\homega}{\widehat{\omega}} 
\newcommand{\hbomega}{\widehat{\bomega}}
\newcommand{\hx}{\widehat{x}}
\newcommand{\hbx}{\widehat{\bx}}
\newcommand{\Tc}{\mathcal{T}}
\newcommand{\heta}{\widehat{\eta}}
\newcommand{\YW}{\texttt{Yan-Wainwright}}
\newcommand{\bGamma}{\boldsymbol{\Gamma}}

\usepackage[numbers,square]{natbib}
\usepackage{hyperref}
\hypersetup{
    colorlinks=true,
    linkcolor=blue,
    citecolor=magenta,
    urlcolor=cyan,
    pdfpagemode=FullScreen,
    pdftitle={Synthetic Nearest Neighbors: Extending Synthetic Controls for Matrix Completion with Missing Not at Random Data},
    pdfauthor={Anish Agarwal, Munther Dahleh, Devavrat Shah, Dennis Shen}
}

\title{\scshape Synthetic Nearest Neighbors:\\
Extending Synthetic Controls for Matrix Completion with Missing Not at Random Data}
\author[1]{Anish Agarwal}
\author[2]{Munther Dahleh}
\author[2]{Devavrat Shah}
\author[3]{Dennis Shen}
\affil[1]{\small Department of Industrial Engineering \& Operations Research\\
Columbia University, \texttt{aa5194@columbia.edu}}
\affil[2]{\small Department of Electrical Engineering \& Computer Science\\
Massachusetts Institute of Technology, \texttt{munther@mit.edu}, \texttt{devavrat@mit.edu}}
\affil[3]{\small Department of Data Sciences \& Operations\\
University of Southern California, \texttt{dennis.shen@marshall.usc.edu}}
\date{}

\begin{document}
\maketitle

\vspace{-15pt}
\begin{abstract}
We develop a causal framework for matrix completion under missing not at random (MNAR) data. Drawing on synthetic controls from the econometric panel data literature, our approach relaxes two assumptions common in MNAR matrix completion: positivity and independence of observation indicators. 
Unlike traditional panel data models, which often require prescribed block-sparse geometries, our framework accommodates flexible, heterogeneous observation patterns through target-specific local information structures.
We propose synthetic nearest neighbors (\SNN), a local synthetic-controls-inspired estimator, and establish finite-sample entrywise error bounds and consistency for mean recovery under suitable conditions. 
We further derive asymptotic normality under heteroskedastic noise and develop feasible entrywise inference.
To estimate entry-specific noise variances, we apply the same local principle to squared outcomes, obtaining consistency under bounded noise and asymptotic unbiasedness under general subgaussian noise. 
Simulation studies corroborate the theoretical findings across a range of missingness designs and observation patterns.

\end{abstract}

\medskip
\noindent
{\smaller
\textbf{Keywords:} heteroskedastic variance estimation; panel data; entrywise inference
}

%
\section{Introduction} \label{sec:intro} 
Matrix completion seeks to recover an underlying matrix from partial, often noisy, observations. 
Owing to its broad applicability, this field has witnessed rapid growth. 
Traditionally, theoretical guarantees in matrix completion are established with respect to global error metrics, such as the Frobenius norm, which quantify the average performance across all entries. 
While these metrics are important, they can obscure localized errors and fail to provide actionable insights in settings where individual predictions are consequential. 
For instance, in personalized recommendations, accuracy at the level of specific entries---not just on average---is often of chief concern. 
Moreover, these recovery guarantees typically rest on two core assumptions: 
(i) the underlying matrix has a low-complexity structure (e.g., low-rank), and
(ii) the entries are {\em missing completely at random} (MCAR), i.e., each entry is observed independently with uniform probability $p > 0$, regardless of its value. 
In many practical applications, however, the MCAR assumption is frequently violated. 

One motivating example arises in recommender systems, a canonical application of matrix completion. 
Here, user-item interactions are represented as a matrix, where rows correspond to users, columns to items, 
and the $(i,j)$th entry records user $i$'s rating of item $j$. 
Observations in such systems are often subject to {\em selection bias}. 
Sports enthusiasts are more likely than non-fans to watch {\em The Last Dance}; meat-lovers are more likely than vegetarians to review steakhouses. Platform-side interventions can amplify these patterns: users searching for Grand Canyon trails are more likely to see ads for hiking boots than for wedding heels.
Across these cases, the user's preferences and the system's inferred beliefs about those preferences shape the sparsity pattern in such a way that the observed entries are influenced by the underlying values of the matrix itself---a phenomenon known as {\em missing not at random} (MNAR). 

Recent works have made significant strides in addressing MNAR matrix completion, demonstrating that algorithms explicitly accounting for MNAR mechanisms often outperform standard methods designed under the MCAR assumption.
Yet, on the theoretical front, key questions remain open. 
As highlighted by \cite{ma2019missing}, two major limitations persist in the existing literature: 
(i) {\em positivity}---every entry must have a strictly positive probability of being observed; 
(ii) {\em independence}---entries are assumed to be observed independently of one another. 
This work aims to challenge both assumptions, which can prohibit applicability in many real-world settings. 

To move beyond these constraints, we build on a growing line of research that connects matrix completion with the panel data framework in econometrics. 
In panel data, rows index units (e.g., states) and columns index time-treatment pairs (e.g., economic policies over time) such that the $(i, (d,t))$th entry records the outcome of unit $i$ under treatment $d$ at time $t$. 
Like recommender systems, panel data often exhibits MNAR patterns, as policymakers may selectively implement programs based on expected outcomes.
However, the structure of policy implementation introduces additional sparsity constraints: only one policy can be applied at a time, and policies, once introduced, generally cannot be removed. 
These constraints clearly violate positivity and independence, but they also induce {\em block sparsity patterns} that are more structured and more restrictive than the irregular sparsity patterns that naturally arise in matrix completion problems. 

Thus, the panel data literature offers frameworks that relax positivity and independence, but often under restrictive observation geometries. Standard matrix completion allows richer and more irregular observation patterns, but usually relies on stronger probabilistic assumptions. 
This intersection motivates the central question of this paper: 
{\em can we obtain entrywise guarantees for matrix completion under MNAR data, without relying on the limiting assumptions of positivity and independence, and for a broad class of observation patterns?} 

%

\subsection{How the Missingness Mechanism Affects Matrix Recovery: a Teaser} \label{sec:teaser}
We begin with a simple illustration designed to isolate the role of the missingness mechanism. To remove the confounding effect of noise, we consider a noiseless setting and compare recovery under three observation designs. 
\begin{enumerate} [label=(\alph*)]
	\item {\bf MCAR}: Each entry is observed independently with probability $p=0.35$.
	
	\item {\bf Conventional MNAR}: Each entry is observed with its own probability that is correlated with the underlying value. The standard assumptions of positivity and independence are maintained.
	
	\item {\bf Panel-based MNAR}: Observation probabilities again depend on the underlying values, but now the two classical assumptions fail: some entries may have zero probability of being observed, and observations may be dependent across entries.
\end{enumerate}
Figure~\ref{fig:sparsity_patterns} reveals the observation patterns induced by these three mechanisms. 
A recommender system interpretation provides useful intuition. Under MCAR, users rate movies entirely at random. Under conventional MNAR, users are more likely to rate movies they strongly liked or disliked and less likely to rate movies toward which they were indifferent. 
Under panel-based MNAR, users rate a common set of core movies and, among the remaining titles, only those in their preferred genres. This represents a more extreme form of selection. Detailed constructions of designs (b) and (c) are provided in Appendix~\ref{sec:teaser.details} of the Supplementary Material. 
%

As benchmarks, we consider two widely used matrix completion algorithms: universal singular value thresholding (\USVT) of \cite{Chatterjee15}, a spectral method, and \softimpute~of \cite{softimpute14}, an optimization-based method. Both were originally designed for MCAR data. To adapt them to MNAR settings, we apply standard debiasing techniques following \cite{bhattacharya2021matrix} and \cite{ma2019missing}. We also include a preview of our proposed method, {\em synthetic nearest neighbors} (\SNN), which is introduced formally in Section~\ref{sec:estimator}. 
The regularization hyperparameters for de-biased \softimpute~and \SNN~are selected using five-fold cross-validation.

Figure~\ref{fig:teaser.err} reports the entrywise root mean squared error (RMSE) for each method. Under MCAR, de-biased \softimpute~and \SNN~achieve substantially lower error than modified \USVT. Under the conventional MNAR mechanism, \SNN~remains accurate, whereas both benchmark methods incur considerably larger errors. Under panel-based MNAR, \SNN~continues to have the lowest average error, although its performance is more dispersed across trials. 

These results illustrate that the observation mechanism can materially affect matrix recovery. Even methods equipped with standard debiasing corrections may deteriorate when positivity or independence fails. This motivates the development of procedures that remain reliable across a broad range of MNAR mechanisms and observation patterns.

\begin{figure*}[t!]
	\centering 
	\begin{subfigure}{0.22\linewidth}
		\centering 
		\includegraphics[width=\textwidth]
		{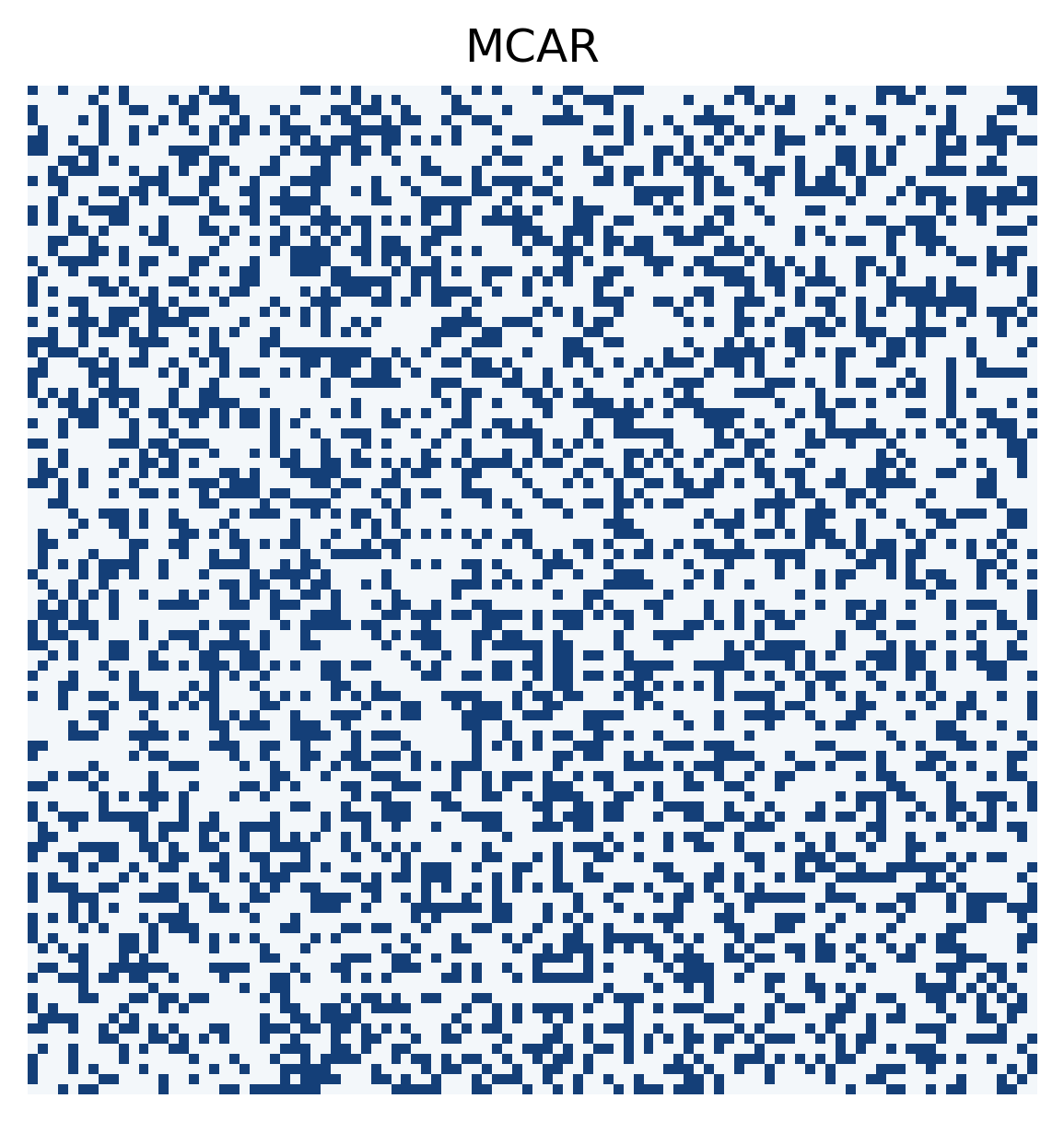}
		\caption{MCAR.} 
		\label{fig:sparsity_MCAR}
	\end{subfigure} 
	\quad	
        \begin{subfigure}{0.22\linewidth}
		\centering 
		\includegraphics[width=\textwidth]
		{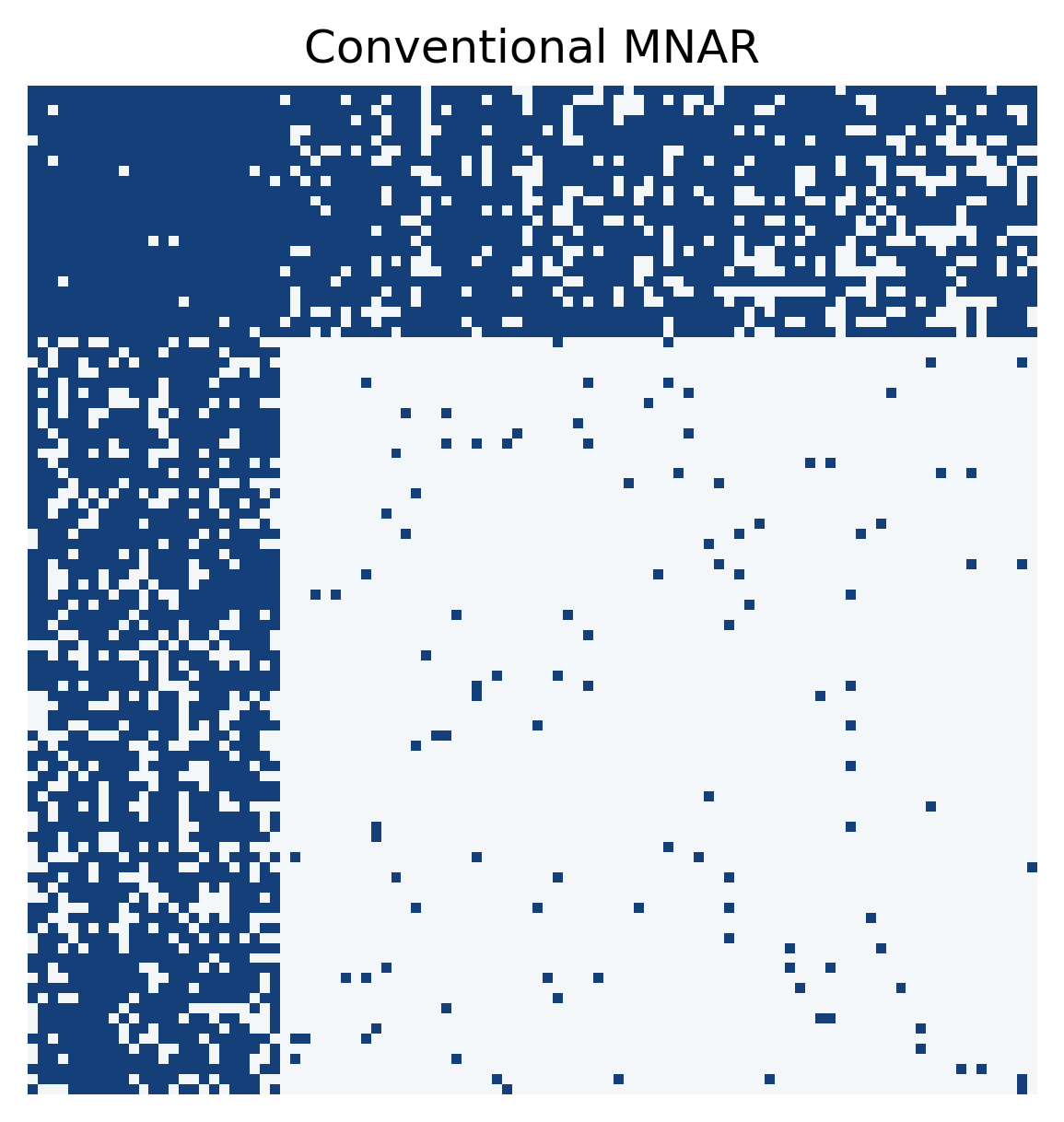}
		\caption{Conventional MNAR.} 
		\label{fig:sparsity_limited_MNAR} 
	\end{subfigure}
	\quad 
	\begin{subfigure}{0.22\linewidth}
		\centering 
		\includegraphics[width=\textwidth]
		{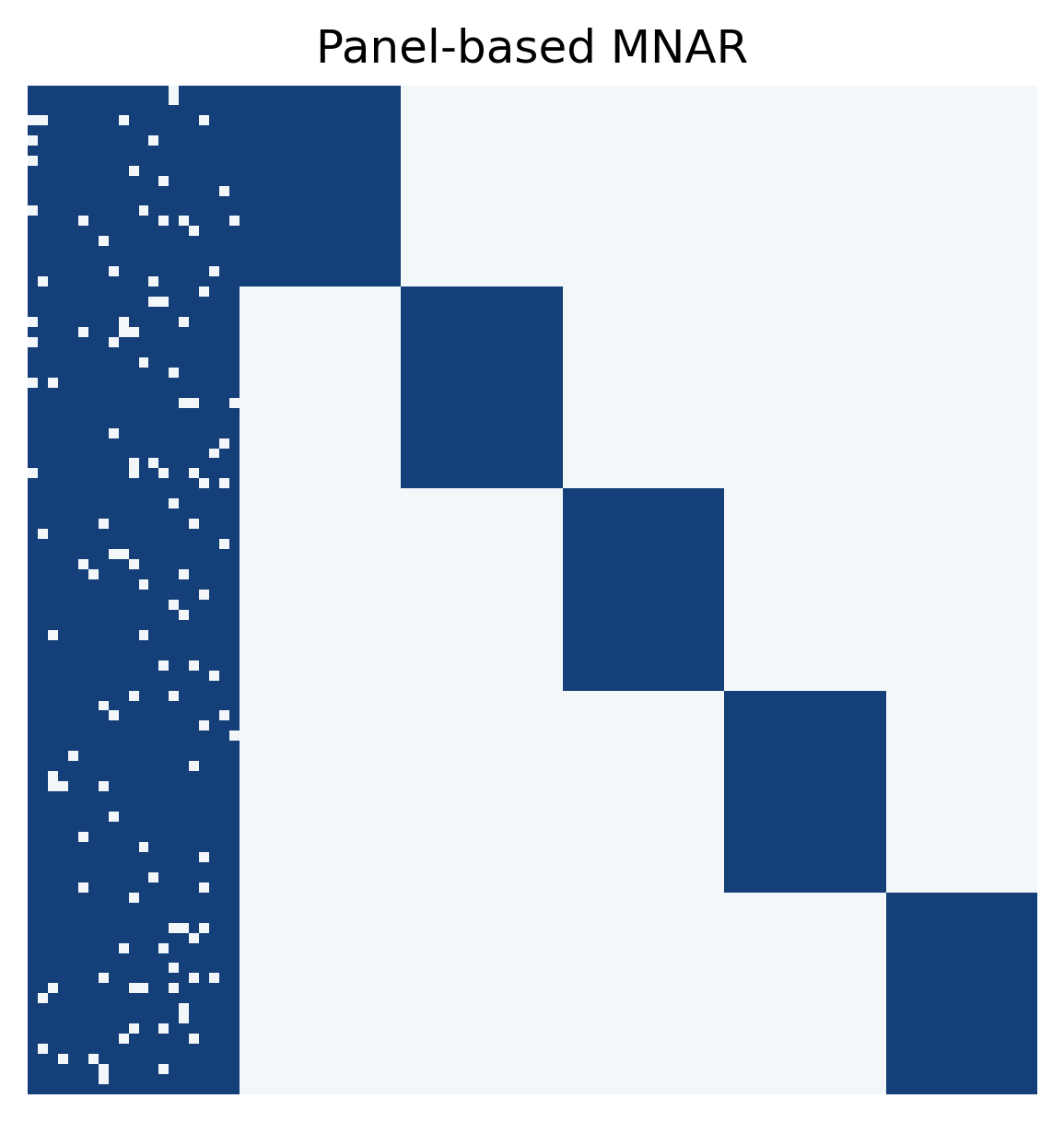}
		\caption{Panel-based MNAR.} 
		\label{fig:sparsity_general_MNAR} 
	\end{subfigure} 
	\caption{Sparsity patterns induced by different missingness mechanisms.}
	\label{fig:sparsity_patterns} 
\end{figure*}

\begin{figure}[t!]
	\centering 
	\begin{subfigure}[b]{0.32\textwidth}
		\centering 
		\includegraphics[width=\linewidth]
		{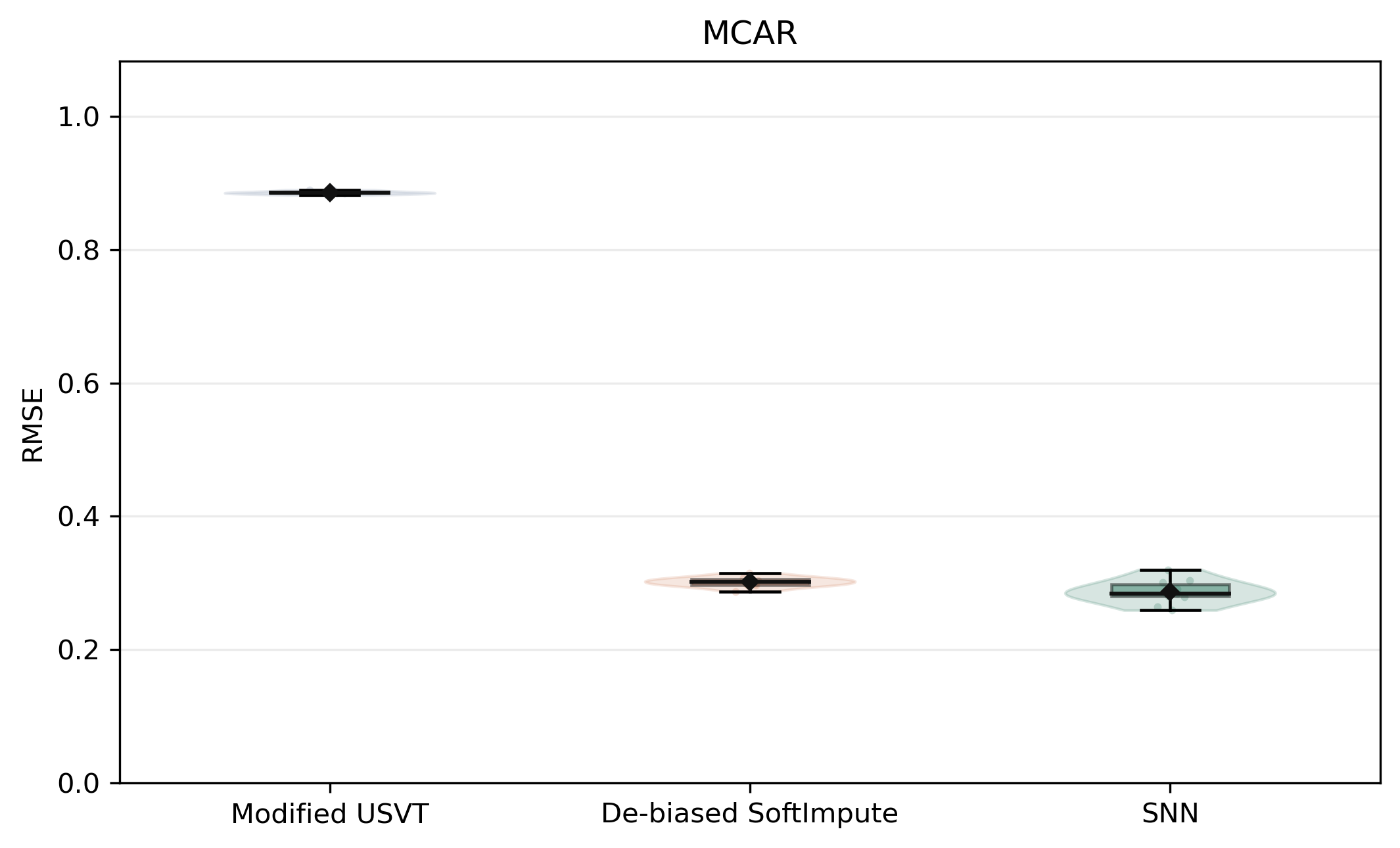}
		\caption{MCAR.} 
		\label{fig:teaser_MCAR_USVT} 
	\end{subfigure}
	\begin{subfigure}[b]{0.32\textwidth}
		\centering 
		\includegraphics[width=\linewidth]
		{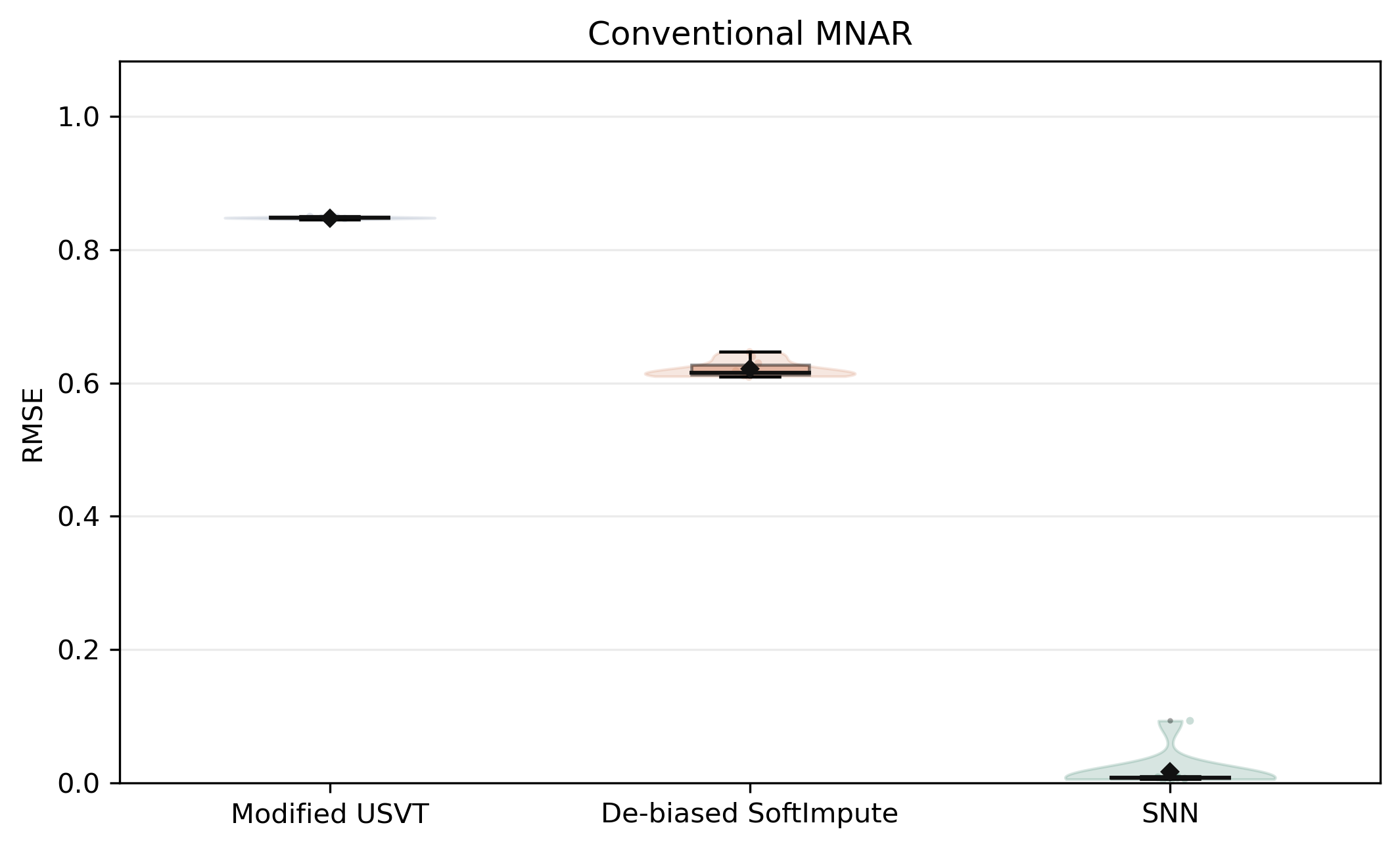}
		\caption{Conventional MNAR.} 
		\label{fig:teaser_MCAR_softimpute} 
	\end{subfigure} 
	\begin{subfigure}[b]{0.32\textwidth}
		\centering 
		\includegraphics[width=\linewidth]
		{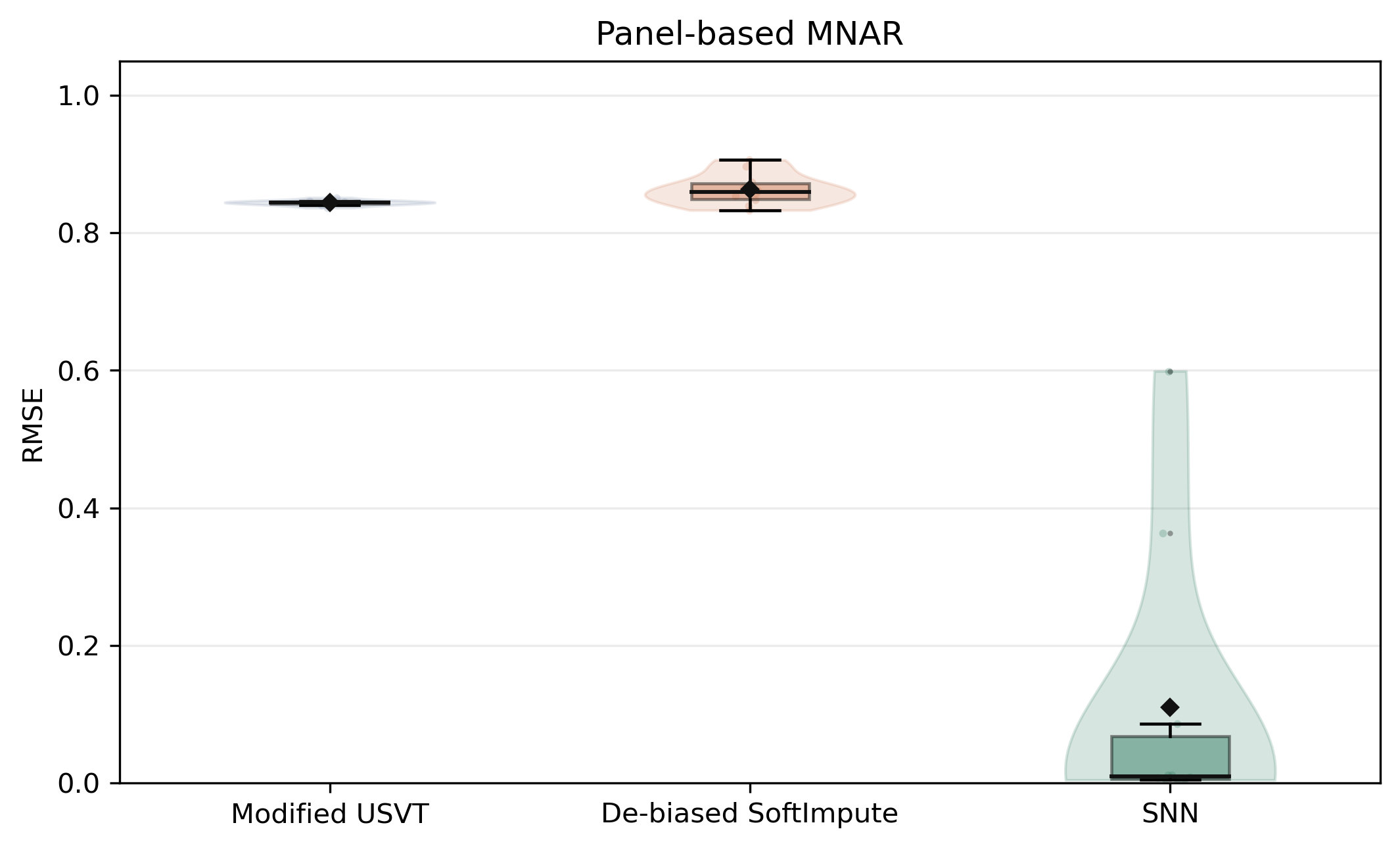}
		\caption{Panel-based MNAR.} 
		\label{fig:teaser_MCAR_SNN} 
	\end{subfigure}
	\caption{Entrywise RMSE for modified \USVT, de-biased \softimpute, and \SNN~under (a) MCAR, (b) conventional MNAR, and (c) panel-based MNAR.}
	\label{fig:teaser.err} 
\end{figure}

\subsection{Contributions} 
This paper develops a causal framework for matrix completion under a broad class of MNAR mechanisms. The framework is inspired by the panel data literature, but is designed for the more irregular sparsity patterns that arise in matrix completion. 

Within a low-rank factor model, we provide an identification principle for individual entries. We show that an entry can be identified from observed data even when its probability of observation may be zero, when observation indicators may be dependent across entries, and when the observation mechanism may be correlated with the underlying signal. 
This directly relaxes the positivity and independence assumptions that underlie much of the existing MNAR matrix completion literature. Relative to standard panel data models, our approach also relaxes the requirement of a prescribed block-sparse geometry. 
Instead, identification is target-specific: a given entry is recoverable whenever the observed mask contains a {\em sufficiently informative local pattern} around that entry (see the conditions of Theorem~\ref{thm:consistency} and the subsequent discussion).

To operationalize this framework, we introduce \SNN, an estimator that blends the local nature of nearest neighbor methods with the reweighting logic of synthetic controls. 
Under suitable assumptions, we establish finite-sample entrywise error bounds and consistency for mean 
recovery (Theorem \ref{thm:consistency}). 
We also show that the theory specializes to the MCAR setting as a benchmark, and we exhibit an observation design under which a shared local information pattern yields uniform entrywise recovery with transparent sample-complexity trade-offs. 
Beyond point estimation, we prove asymptotic normality under heteroskedastic noise (Theorem \ref{thm:normality}) 
and develop feasible studentized inference (Theorem \ref{thm:var.est.asymp}). 
The accompanying variance estimator applies the \SNN~principle to the squared observed matrix. 
We prove that this estimator is consistent under bounded noise and asymptotically unbiased under general subgaussian noise (Theorem \ref{thm:var.est}). 
Simulation studies support the theoretical results across a range of missingness mechanisms and observation patterns.

%
\subsection{Related Work} 

\noindent \textbf{Matrix completion.} 
Low-rank matrix completion has been extensively studied under the MCAR assumption. 
Within this regime, a wide array of methods have been proposed, including optimization-based approaches \citep{CandesTao10, Recht11, softimpute}, spectral techniques \citep{KeshavanMontanariOh10a, donoho14, Chatterjee15}, and collaborative filtering algorithms \citep{goldberg1992using, koren2015advances}.
For a comprehensive overview of standard assumptions and theoretical guarantees, see \cite{ieee_matrix_completion_overview}.  

A more nuanced paradigm is {\em missing at random} (MAR), where the missingness is allowed to depend on observed covariates. 
Conditional on the covariates, the outcomes and missingness mechanism are assumed to be independent, an assumption also known as unconfoundedness or selection on observables in the causal inference literature \citep{imbens_rubin_2015}.  
This naturally leads to methods that estimate propensity scores based on covariates. 
Key contributions in this regime include \cite{Liang2016, schnabelfwang16, WangNeurips2018} and \cite{wang2019}. 

The most general and challenging regime is MNAR, which encompasses all mechanisms beyond MCAR and MAR. 
As discussed in \cite{ma2019missing}, existing MNAR approaches hinge on positivity and independence. 
A common strategy under these assumptions is to model the propensity score matrix as low-rank and estimate it using matrix completion techniques on the observation mask \citep{ma2019missing, bhattacharya2021matrix}. 
This generalizes MAR approaches by removing the dependence on covariate information. 
Additional works addressing MNAR mechanisms include \cite{sportisse2020estimation_PCA, causal_recommender_systems, foucart} and \cite{zhu2019high}. 
Many prior works in the matrix completion literature do not explicitly distinguish MAR from MNAR; for a detailed discussion of this distinction, see \cite{little2019statistical}. 

Most existing analyses focus on global estimation error metrics. A few works provide entrywise error bounds  \citep{LeeLiShahSong16, chen_2020, chen_2021}, but their analyses rely heavily on independent sampling and therefore do not directly cover settings with structured missingness such as ours. Their bounds also scale inversely with a polynomial of the minimum propensity, making positivity indispensable. 
Recently, \cite{kanxu25} investigates matrix completion in the context of matching markets, providing entrywise error guarantees for a nuclear-norm-regularized estimator. 
Their framework accommodates dependent observations but also allows entries to be repeatedly observed. 
This stands in contrast to our setting, where each entry is observed at most once. 

\medskip 
\noindent \textbf{Panel data.}
A growing body of work connects matrix completion with econometric panel data models, which often assume that the potential outcomes matrix is low rank \citep{amjad2018robust, amjad2019mrsc, athey2021matrix, tianyi_1, fernandez2020low, lihua25}. 
This line of research is also closely related to work on factor models \citep{bai2019matrix, CAHAN2023113}. 
\cite{amjad2018robust, amjad2019mrsc, athey2021matrix} provide an important contribution  that connected matrix completion literature with panel data setting and establishing global error bounds under block-sparsity pattern.
\cite{Choi20092024} refine approach of \cite{athey2021matrix} by partitioning missing entries into smaller groups and applying convex estimation to each group, obtaining entrywise error bounds and asymptotic normality for certain statistics.  
\cite{yan2024entrywiseinferencemissingpanel} build on this idea by developing a computationally efficient algorithm with improved entrywise inference for staggered adoption designs. 

Our contribution is complementary to this line of work in scope rather than in dominance. Whereas panel-factor methods generally exploit global treatment-timing geometry, \SNN~instead relies on a target-specific recovery condition. 
Panel-specific methods may therefore yield sharper rates or stronger inference when their global geometry is credible. 
Our framework instead targets settings where the observation geometry is too irregular to be captured by classical panel designs, but still contains enough local information for entrywise recovery.

\medskip 
\noindent \textbf{Earlier versions.}
An earlier version of this work circulated as \cite{agarwal2021causalmatrixcompletion}, and a shorter extended abstract appeared at the {\em Conference on Learning Theory} in 2023. Those versions introduced the causal formulation, a preliminary \SNN~estimator, and an oracle asymptotic normality result. The present article gives a substantially expanded treatment. 
Methodologically, it introduces a simpler estimator for mean recovery, a heteroskedastic variance estimation procedure based on the same local principle, and a more tractable subroutine for finding informative local submatrices. Statistically, it develops a sharper asymptotic normality result, feasible studentized inference under heteroskedastic noise, and a theory for low-rank heteroskedastic variance estimation. It also situates \SNN~more explicitly relative to subsequent work on panel-factor models.

\subsection{Paper Organization} 
Section~\ref{sec:problem_setup} introduces our causal framework and formalizes the causal estimand. 
%
%
Section~\ref{sec:estimator} presents the \SNN~algorithm. 
Section~\ref{sec:theoretical_results} establishes statistical guarantees for \SNN~as an estimator of the mean matrix.
Section~\ref{sec:var.est} develops feasible entrywise inference and extends the analysis to heteroskedastic variance estimation.
Section~\ref{sec:sims} reports simulation studies. 
%
%
Section~\ref{sec:conclusion} concludes. 
%

\subsection{General Notation} 
For a positive integer $n$, let $[n] = \{1, \dots, n\}$. 
For index sets $\Ic \subseteq [m] $ and $\Jc \subseteq [n]$, let $\bM_{\Ic, \Jc}$ denote the $| \Ic | \times | \Jc |$ sub-matrix of $\bM \in \Rb^{m \times n}$ whose rows and columns are indexed by $\Ic$ and $\Jc$.
For a vector $\bv \in \Rb^m$, define $\bv_{\Ic}$ analogously. 
We reserve $\bI$ and $\bzero$ as the identity matrix and matrix/vector of zeros. 
Let $\dagger$ denote the Moore-Penrose pseudoinverse and $\circ$ the entrywise (Hadamard) product. 
For $p \in [1, \infty]$, let $\|\bv\|_p$ denote the $\ell_p$-norm. 
%
%
Denote the operator and Frobenius norms as $\| \bM \|_\txtop$ and $\| \bM \|_F$. 
The subgaussian and sub-exponential norms of a random vector $\bv$ are denoted by $\| \bv \|_{\psi_2}$ and $\|\bv\|_{\psi_1}$. 
Convergence in probability and distribution are denoted as $\xrightarrow{p}$ and $\rightsquigarrow$. 

\section{A Causal Framework for Matrix Completion} \label{sec:problem_setup}
We adopt the potential outcomes framework \citep{neyman, rubin} to formulate matrix completion as a causal estimation problem. 
Let $\btY = [\tY_{ij}] \in \Rb^{m \times n}$ denote the matrix of potential outcomes, where $\tY_{ij}$ is the outcome that would be observed if unit $i \in [m]$ was exposed to treatment $j \in [n]$. 
Let $\bD = [D_{ij}] \in \{0,1\}^{m \times n}$ denote the matrix of treatment assignments. 
The observed outcomes matrix $\bY = [Y_{ij}] \in \{\Rb \cup \{\star\}\}^{m \times n}$, where $\star$ denotes a missing value, is defined entrywise as 
\begin{align}\label{eq:sutva}
	Y_{ij} \coloneqq \tY_{ij} \cdot \boldsymbol{1} \{D_{ij} = 1\} + \left( \star \right) \cdot \boldsymbol{1} \{D_{ij} = 0 \}. 
\end{align} 
The observation law of \eqref{eq:sutva}, known as the {\em stable unit treatment value assumption}, is a standard assumption in the causal inference literature that precludes interference \citep{imbens_rubin_2015}. 

\subsection{Matrix Factor Model}
Without additional assumptions, the available observations offer no information about a missing entry.
To enable a faithful recovery of the underlying matrix, we impose a low-rank structure on the expected outcomes, as is standard in matrix completion.

\begin{assumption}[Low-rank factor model]\label{assump:lfm} 
For every pair $(i,j)$, let $\tY_{ij} = \left\langle \bu_i, \bv_j \right \rangle + \varepsilon_{ij}$, 
where $\bu_i, \bv_j \in \Rb^r$ are latent row and column factors, and $\varepsilon_{ij} \in \Rb$ is idiosyncratic noise. 
\end{assumption} 

With $\varepsilon_{ij}$ denoting (typically zero-mean) noise, $\langle \bu_i, \bv_j \rangle$ represents the underlying signal. 
When $r \ll \min\{m, n\}$, the expected potential outcomes can be explained by significantly fewer factors than the dimensions of the matrix. 
This renders the matrix recovery problem well-posed and identifiable under sensible conditions. 
For these reasons, the signal-plus-noise factor model in Assumption~\ref{assump:lfm} is foundational in both matrix completion \citep{mc_survey} and econometrics factor models \citep{athey2021matrix}. 
Empirical and theoretical work also supports the prevalence of approximate low-rank structure in data science applications \citep{udell2018big}, and smooth nonlinear latent variable models can often be well approximated by low-rank representations \citep{xu2017rates}. 

\subsection{Causal Estimand} \label{sec:framework_causal_parameters}
Anchoring on Assumption~\ref{assump:lfm}, we define the causal estimand for entry $(i,j)$ as $$A_{ij} \coloneqq \Ex\left[\tY_{ij} \mid \bu_i, \bv_j \right],$$ the expected potential outcome for the $(i,j)$th entry, conditional on its latent factors. 
The choice of $(i,j)$ is arbitrary: the analysis targets entrywise recovery, allowing the error associated with an individual missing entry to be studied directly. 
%
Appendix~\ref{sec:add.params} of the Supplementary Material extends this framework to linear functionals of the mean matrix, including weighted averages and contrasts.

\subsubsection{Additional Notation} 
%
Let $\bU \in \Rb^{m \times r}$ and $\bV \in \Rb^{n \times r}$ collect the latent row and column factors, and let $\bE = [\varepsilon_{ij}] \in \Rb^{m \times n}$ denote the noise matrix. 
Define $\Ec = \{\bU, \bV, \bD\}$ as the information set, which we treat as fixed. 
Further, define the neighborhood rows of column $j$ and neighborhood columns of row $i$ by $\NR(j) =  \{\ell \in [m]: D_{\ell j} = 1\}$ and $\NC(i) = \{q \in [n]: D_{iq} = 1\}$. 
We call $\AR \subseteq \NR(j)$ and $\AC \subseteq \NC(i)$ a set of {\em anchor rows} and {\em anchor columns} for the pair $(i,j)$ if $D_{\ell q} = 1$ for all $(\ell, q) \in \AR \times \AC$.  
Thus, anchor rows are observed at the target column, anchor columns are observed for the target row, and the anchor submatrix $\nAR \times \nAC$ is fully observed. 
Together, they form a local observed cross surrounding the missing entry $(i,j)$.
For a visualization, see Figure~\ref{fig:obs_pattern}. 

\begin{figure} [!t]
	\centering 
		\includegraphics[width=0.65\linewidth]
		{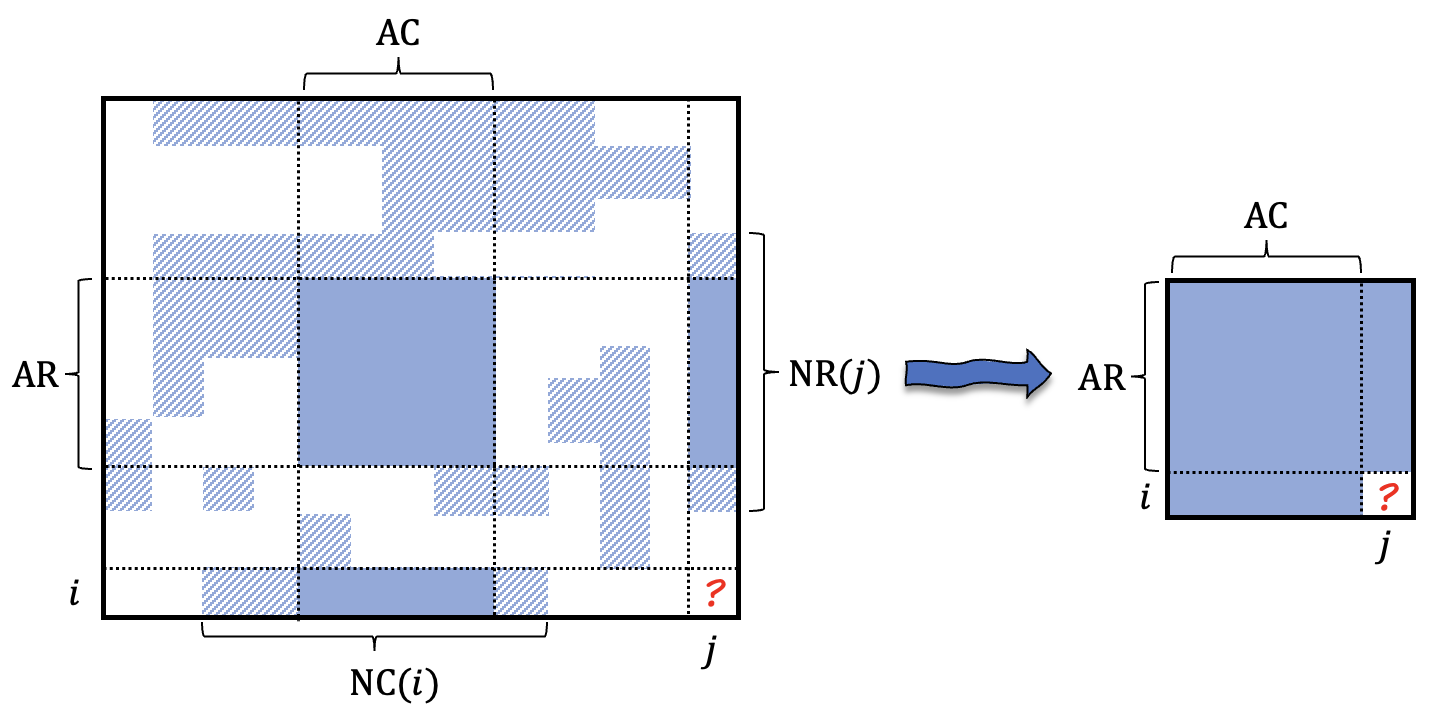}
	\caption{Visualization of the observed matrix $\bY$, highlighting the neighborhood sets $(\NR(j), \NC(i))$ and anchor sets $(\nAR, \nAC)$ associated with the target $(i,j)$ pair. Unobserved entries are depicted in white.}
	\label{fig:obs_pattern} 
\end{figure}

\subsubsection{Assumptions for Estimation} 
We state assumptions to estimate $A_{ij}$ from observed data. 

\begin{assumption}[Selection on latent factors] \label{assump:mean_ind} 
Let $\Ex[ \bE \mid \Ec] = \bzero$. 
\end{assumption} 
By the tower law, Assumption~\ref{assump:mean_ind} implies $\Ex[\bE \mid \bU, \bV] = \bzero$. 
Coupled with Assumption~\ref{assump:lfm}, this implies $\bA = \Ex[\btY \mid \bU, \bV]$ has rank $r$, with $r$ typically much smaller than $m$ and $n$. 
Assumptions~\ref{assump:lfm} and \ref{assump:mean_ind} also jointly imply $\Ex[\btY \mid \bU, \bV] = \Ex[\btY \mid \Ec]$, so the potential outcomes are mean independent of the treatment assignments, conditioned on the latent factors. 
Hence, the latent factors serve as latent confounders, playing an analogous role to observed covariates under the
classical selection on observables assumption.
This form of latent ignorability is conceptually aligned with assumptions explored in panel data models \citep{athey2021matrix, avi_eli_21, agarwal2021synthetic} and related contexts \citep{kallus2018causal}. 
As with any assumption concerning unobserved confounding, Assumption~\ref{assump:mean_ind} is fundamentally untestable. 
Its justification relies on domain expertise and context-specific reasoning about the data-generating process. 

\begin{assumption}[Linear row span inclusion] \label{assump:linear_span}  
Conditioned on $\Ec$ and given anchor rows $\nAR$, we have $\bu_i \in {\normalfont \text{span}}(\{ \bu_\ell: \ell \in \nAR \})$. 
\end{assumption}
Assumption~\ref{assump:linear_span} requires the latent factor of the target row to be expressible as a linear combination of the latent factors associated with the anchor rows. 
This condition arises naturally under the low-rank structure in Assumption~\ref{assump:lfm}. If $\text{span}(\{\bu_\ell: \ell \in \AR\}) = \Rb^r$, then Assumption~\ref{assump:linear_span} follows immediately because $\bu_i \in \Rb^r$. 
In practical terms, enough rows must be observed at column $j$ so that their latent factors span a rich enough subspace to include the target row factor. Although this condition cannot be verified directly because the factors are latent, its plausibility can be assessed empirically through training error diagnostics. 

\subsubsection{Estimation Result} 
%
With our assumptions in hand, we next establish that our causal estimand $A_{ij}$ can be expressed using quantities that are, in principle, estimable from observed data.
\begin{proposition}\label{thm:identification}
%
%
Let Assumptions~\ref{assump:lfm}--\ref{assump:linear_span} hold. 
Then, there exists a $\bbeta \in \Rb^{|\nAR |}$ such that (a) $A_{ij} = \sum_{\ell \in \nAR} \beta_\ell \cdot \Ex[Y_{\ell j} \mid \Ec]$, and (b) $\Ex[Y_{iq} \mid \cE] = \sum_{\ell \in \nAR} \beta_\ell \cdot \Ex[ Y_{\ell q} \mid \Ec]$ for every $q \in \nAC$. 
\end{proposition}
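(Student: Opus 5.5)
The plan is to take $\bbeta$ directly from the span condition and then push it through conditional expectations, using the fact that anchor entries are observed.

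\emph{Step 1: construct $\bbeta$.} By Assumption~\ref{assump:linear_span}, conditioned on $\Ec$ there exist coefficients $\bbeta \in \Rb^{|\nAR|}$ with $\bu_i = \sum_{\ell \in \nAR} \beta_\ell \bu_\ell$. Since $\Ec$ is treated as fixed, $\bbeta$ is a deterministic function of $\Ec$; if the anchor factors are linearly dependent, $\bbeta$ is not unique, and we simply fix any such choice (for instance the minimum-norm solution). The same $\bbeta$ will serve for both (a) and (b).

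\emph{Step 2: compute the mean of an observed entry.} Fix any pair $(\ell,q)$ with $D_{\ell q}=1$. By \eqref{eq:sutva}, $Y_{\ell q} = \tY_{\ell q}$. By Assumption~\ref{assump:lfm}, $\tY_{\ell q} = \langle \bu_\ell, \bv_q\rangle + \varepsilon_{\ell q}$. Taking $\Ex[\cdot \mid \Ec]$ and using Assumption~\ref{assump:mean_ind} gives
\begin{align}
\Ex[Y_{\ell q} \mid \Ec] = \langle \bu_\ell, \bv_q\rangle.
\end{align}
Here it matters that $\bD \in \Ec$, so conditioning on $\Ec$ fixes which entries are observed. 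Likewise, $A_{ij} = \langle \bu_i, \bv_j\rangle$, because $\Ex[\varepsilon_{ij}\mid \bu_i,\bv_j]=0$; this follows from the tower law consequence $\Ex[\bE \mid \bU,\bV]=\bzero$ noted after Assumption~\ref{assump:mean_ind}. This last step is the only delicate point: the conditioning set shrinks from $(\bU,\bV)$ to $(\bu_i,\bv_j)$. I would handle it either with a further application of the tower property, or by reading $A_{ij}$ as conditioned on the fixed factors, consistent with the paper's convention.

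\emph{Step 3: prove (a).} Every $\ell \in \nAR \subseteq \NR(j)$ has $D_{\ell j}=1$, so Step 2 applies to $(\ell,j)$. By bilinearity,
\begin{align}
A_{ij} = \langle \bu_i, \bv_j\rangle = \sum_{\ell\in\nAR}\beta_\ell \langle \bu_\ell,\bv_j\rangle = \sum_{\ell\in\nAR}\beta_\ell \Ex[Y_{\ell j}\mid\Ec].
\end{align}

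\emph{Step 4: prove (b).} Fix $q\in\nAC$. Then $q \in \NC(i)$ gives $D_{iq}=1$, and the anchor property gives $D_{\ell q}=1$ for all $\ell\in\nAR$. Applying Step 2 to $(i,q)$ and to each $(\ell,q)$, and using bilinearity again,
\begin{align}
\Ex[Y_{iq}\mid\Ec] = \langle \bu_i,\bv_q\rangle = \sum_{\ell\in\nAR}\beta_\ell\langle \bu_\ell,\bv_q\rangle = \sum_{\ell\in\nAR}\beta_\ell\Ex[Y_{\ell q}\mid\Ec].
\end{align}

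Everything here is routine linear algebra. The main thing to state carefully is that the conditional expectations are of observed values, so they are well defined; this holds precisely because of the anchor and neighborhood definitions.
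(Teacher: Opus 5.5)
Your proposal is correct and follows essentially the same route as the paper: take $\bbeta$ from the row-span condition, then use the observation law, the factor model, and mean-zero noise to write each observed anchor entry's conditional mean as $\langle \bu_\ell, \bv_q\rangle$, and conclude (a) and (b) by linearity. Your tower-property step, passing from $\Ex[\bE\mid\Ec]=\bzero$ to $\Ex[\varepsilon_{ij}\mid\bu_i,\bv_j]=0$, spells out a step the paper simply attributes to Assumption~\ref{assump:mean_ind}.
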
 

The first display states that the target estimand $A_{ij}$ can be recovered from the anchor rows once the coefficient vector $\bbeta$ is known. The second display suggests a natural way to estimate $\bbeta$: regress the observed entries of the target row over the anchor columns on the corresponding observed entries of the anchor rows. Section~\ref{sec:estimator} develops this idea into a concrete algorithm.

This framework accommodates a broad class of MNAR mechanisms. In particular, the assignment matrix $\bD$ may depend arbitrarily on the latent factors $\bU$ and $\bV$. Since $\bU \bV^\top$ is the signal matrix of expected potential outcomes, each observation indicator $D_{ij}$
 may be a random or deterministic function of the underlying value $A_{ij}$. 
Moreover, the indicators in $\bD$ may be dependent across entries, and positivity is not required. 
The flexibility of the observation mechanism is supplied by the factor model assumptions. The framework is therefore outcome-model-based, rather than design-based: identification follows from the structure of the potential outcomes and the availability of an informative local anchor cross, not from probabilistic restrictions on the assignment matrix.

\section{An Algorithm for Matrix Completion with MNAR Data} \label{sec:estimator}

%
We introduce {\em synthetic nearest neighbors} (\SNN), an algorithm for matrix completion under MNAR data.
Motivated by Proposition~\ref{thm:identification}, \SNN~takes the following steps to estimate the $(i,j)$th entry. 
%
%
\begin{enumerate} 

	\item {\bf Anchor set discovery}: Obtain anchor sets $(\nAR, \nAC)$ such that the anchor submatrix $\bY_{\AR, \AC} \coloneqq [Y_{\ell q}: (\ell, q) \in \AR \times \AC] \in \Rb^{|\AR| \times |\AC|}$, target-row vector $\bY_{i, \AC} \coloneqq [Y_{iq}: q \in \AC] \in \Rb^{|\AC|}$, and target-column vector $\bY_{\nAR, j} \coloneqq [Y_{qj}: q \in \AC] \in \Rb^{|\AC|}$ are all fully observed. 
	
	\item {\bf Spectral denoising}: Let the singular value decomposition of the anchor submatrix be $\bY_{\AR, \AC} = \sum_{\ell \ge 1} \hs_\ell \hbu_\ell \hbv_\ell^\top$, where $\hs_\ell \in \Rb$ are the singular values in decreasing order, and $\hbu_\ell \in \Rb^{|\AR|}$, $\hbv_\ell \in \Rb^{|\AC|}$ are the corresponding left and right singular vectors. 
Given a spectral threshold $t \le \min\{|\nAR|, |\nAC|\}$, define the rank-$t$ approximation by $\bY^{(t)}_{\AR, \AC} \coloneqq \sum_{\ell = 1}^{t} \hs_\ell \hbu_\ell \hbv_\ell^\top$. 

	\item {\bf Parameter estimation}: Estimate the coefficients by principal component regression (\PCR)  
	\begin{align} 
		\hbbeta&\in \argmin_{\bbeta \in \Rb^{|\AR|}} ~\left\| \bY_{i, \AC} -  \left( \bY^{(t)}_{\AR, \AC} \right)^\top \bbeta \right\|_2^2 
		= \left( \bY^{(t)}_{\AR, \AC} \right)^{\top, \dagger} \cdot \bY_{i, \AC}.
		\label{eq:beta.hat}  
	\end{align} 
	
	\item {\bf Point estimation}: Estimate the target entry by $\hA_{ij} \coloneqq \sum_{\ell \in \AR} \hbeta_\ell \cdot Y_{\ell j} =  \left \langle \bY_{\AR, j}, \hbbeta \right \rangle$. 
	
	\item {\bf Clipping:} If $A_{ij} \in [a, b]$ is known, define 
	$\hA^{\clip}_{ij} \coloneqq \min\left\{ b, \max\left\{ a, \hA_{ij} \right\} \right\}$.

\end{enumerate}

\subsection{Interpretation} \label{sec:snn.intuition}
As illustrated by Figure~\ref{fig:obs_pattern}, \SNN~reduces the original $m \times n$ completion problem to the local array 
\begin{align}
	\begin{pmatrix}
		&\bY_{\nAR, \nAC}~ & \bY_{\AR, j}
		\\
		& \bY^\top_{i, \AC}~ & \star
	\end{pmatrix}. 
\end{align}
Within this local submatrix, all entries are observed except the target entry $(i,j)$. These observed entries provide the information used to estimate $A_{ij}$. 
\SNN, however, does not operate directly on the raw entries. 
Under Assumption~\ref{assump:lfm}, each outcome contains both signal and idiosyncratic noise, whereas Assumption~\ref{assump:mean_ind} implies that the conditional mean of the anchor block is low rank. The leading singular directions of $\bY_{\nAR, \nAC}$ are therefore interpreted as carrying signal, while the smaller singular values primarily reflect noise. Accordingly, \SNN~performs spectral denoising to extract the dominant components, $\bY^{(t)}_{\AR, \AC}$, which serves as an estimate of the low-rank signal $\Ex[\bY_{\nAR, \nAC} \mid \Ec]$. 


After denoising, \SNN~follows the intuition of nearest-neighbor collaborative filtering \citep{goldberg1992using, linden2003amazon}, but with a key modification. Standard nearest-neighbor methods search for one or more rows that are individually close to row $i$. \SNN~instead constructs a synthetic neighbor for row $i$ by reweighting the rows in $\AR$ according to $\hbbeta$. Thus, \SNN~does not require any single anchor row to resemble the target row. Rather, it only requires that a weighted combination of anchor rows approximate the target row.

From this perspective, \SNN~is conceptually aligned with synthetic controls \citep{abadie1, abadie2}. In its canonical form, synthetic controls constrains the elements of $\hbbeta$ to be nonnegative and sum to one. By contrast, \SNN~constrains the fitted weights through \PCR: $\hbbeta$ lies in the linear subspace spanned by the top $t$ left singular vectors of the anchor submatrix. 
This restriction is naturally adapted to the low-rank factor model. 
%
In short, \SNN~localizes estimation to an observed anchor neighborhood around a target entry, as in nearest-neighbor methods, but replaces uniform or distance-based averaging with PCR-based synthetic-control weighting.

Although \SNN~is written in row-side form, the same point estimate admits an equivalent column-side representation. Define the column-side \PCR~estimate by 
\begin{align}
	\hbalpha \coloneqq \left(\bY^{(t)}_{\nAR,\nAC}\right)^\dagger \cdot \bY_{\nAR, j}. 
	\label{eq:alpha.hat} 
\end{align}
By \cite[Theorem 1 and Corollary 1]{sameroot}, $\hA_{ij} \coloneqq \big\langle \bY_{\AR, j}, \hbbeta \big\rangle = \big\langle \bY_{i, \AC}, \hbalpha \big\rangle = \big\langle \hbalpha, \big(\bY^{(t)}_{\nAR,\nAC}\big)^\top \hbbeta \big\rangle$. 
These identities give three equivalent interpretations of \SNN: a row-side synthesis, a column-side synthesis, and a two-sided synthesis that uses both simultaneously through the denoised anchor block.
This symmetry recurs throughout the theory. Assumption~\ref{assump:subspace} in Section~\ref{sec:additional_assumptions} formalizes the column-side span condition, analogous to Assumption~\ref{assump:linear_span}; the inferential theory in Section~\ref{sec:normality} shows how both sides enter the first-order uncertainty of $\hA_{ij}$; and Section~\ref{sec:var.est} turns the resulting decomposition into feasible studentization. 
For concreteness, we retain the row-side notation.

\subsection{Practical Considerations} \label{sec:fine.print}

\noindent \textbf{Discovering anchor sets.} 
The first step of \SNN~is to identify a valid anchor pair $(\nAR, \nAC)$. 
Such pairs arise naturally in panel applications with block-sparse observation patterns, but similar structures also occur in less regular settings.
For instance, \cite{ma2019missing} documented pronounced block patterns in recommender systems datasets, such as the \texttt{Coat} \citep{schnabelfwang16} and \texttt{MovieLens-100k} \citep{movielens}. 
Spectral biclustering \citep{spectral_bicluster} offers one computationally tractable way to locate coherent row-column blocks. Alternatively, the observation mask can be viewed as the incidence matrix of a bipartite graph, in which every valid anchor pair corresponds to a biclique. 
Appendix~\ref{sec:max_biclique} in the Supplementary Material formalizes this reduction and reviews algorithms for finding maximal bicliques. We use these approaches in the examples of
Section~\ref{sec:teaser}. Corollary~\ref{cor:mcar} further shows that valid anchor pairs also arise with high probability under MCAR sampling when the observation probability is sufficiently large. 

\medskip \noindent \textbf{Choosing the spectral threshold.} 
The parameter $t$ controls the degree of spectral regularization. Choosing $t$ too small may discard relevant signal; choosing it too large may reintroduce noise.
A simple diagnostic is to inspect the singular-value spectrum and select $t$ near its elbow. A more data-driven alternative is cross-validation on held-out observed entries. One may also use a universal singular-value cutoff, retaining only those components above a prescribed threshold, as in \cite{donoho14} and \cite{Chatterjee15}. 
%

\medskip \noindent \textbf{Ensemble \SNN.} Appendix~\ref{sec:ensemble} of the Supplementary Material introduces an ensemble extension that constructs synthetic neighbors from multiple anchor crosses and aggregates the resulting estimates.

\section{Entrywise Guarantees for Mean Recovery} \label{sec:theoretical_results} 
This section establishes statistical guarantees for \SNN~as an estimator of the target mean $A_{ij}$. 
Throughout, for any random object $\bW$, write $\bbW \coloneqq \Ex[\bW \mid \Ec]$. 

\subsection{Additional Assumptions for Mean Recovery} \label{sec:additional_assumptions}
We state additional assumptions utilized in our analysis. 

\begin{assumption}[Subgaussian noise] \label{assump:subg}
Conditioned on $\Ec$, $\varepsilon_{ij}$ are independent subgaussian random variables satisfying $\Var(\varepsilon_{ij})= \sigma_{ij}^2 \le \sigma^2$ and $\| \varepsilon_{ij} \|_{\psi_2} \le C_\varepsilon \sigma_{ij}$ for constants $C_\varepsilon, \sigma > 0$. 
\end{assumption} 

While the latent row and column factors may be arbitrarily correlated, Assumption~\ref{assump:subg} imposes independence and subgaussian tail behavior on the noise. Although this is a standard assumption in the matrix completion literature, it may be restrictive in certain applications. A formal treatment under more general noise models remains an important direction for future research.

\begin{assumption}[Bounded signal] \label{assump:bounded}
Let $\langle \bu_i, \bv_j \rangle \in [-1,1]$.  
\end{assumption} 
The bound $[-1,1]$ can be extended to $[a,b]$ for any $a, b \in \Rb$ with $a \le b$.

\begin{assumption}[Well-balanced spectra]\label{assump:spectra}
Given anchor sets $(\nAR, \nAC)$, the condition number $\kappa$ of $\bbY_{\nAR, \nAC}$ satisfies $\kappa^{-1} \ge c_y$, and $\| \bbY_{\nAR, \nAC} \|_F^2 \ge c_y' \cdot |\nAR| \cdot |\nAC|$ for constants $c_y, c_y' > 0$.
\end{assumption} 

Assumption~\ref{assump:spectra} requires the nonzero singular values of the anchor signal matrix to be neither too small nor too uneven. While potentially restrictive, it ensures the low-rank structure is visible above the noise floor. This condition parallels familiar pervasiveness, incoherence, and beta-min requirements from factor models, high-dimensional inference, and matrix completion \citep{chamberlainfactor, beta_min, fan2018eigenvector, bai_matrix, cai2021nonconvex}. Empirically, it can be assessed through the singular-value profile of $\bY_{\nAR, \nAC}$, as discussed in Section~\ref{sec:fine.print}. 

\begin{assumption}[Linear column span inclusion]\label{assump:subspace}
Conditioned on $\Ec$ and given anchor columns $\nAC$, we have $\bv_j \in {\normalfont \text{span}}(\{ \bv_q: q \in \nAC \})$. 
\end{assumption} 

Assumption~\ref{assump:subspace} is the column-side counterpart of Assumption~\ref{assump:linear_span}. Assumption~\ref{assump:linear_span} ensures that the target row $i$ can be synthesized from the anchor rows $\nAR$; Assumption~\ref{assump:subspace} ensures that the target column $j$ can be extrapolated from the anchor columns $\nAC$. Together, they make the local anchor cross informative enough to identify the target entry. 

To see why Assumption~\ref{assump:subspace} is needed, recall that \SNN~estimates the row-side weights by regressing $\bY_{i, \nAC}$ on the denoised anchor block $\big(\bY^{(t)}_{\nAR, \nAC}\big)^\top$. If the population anchor block $\bbY_{\nAR, \nAC}$ is rank deficient, the coefficients $\bbeta$ satisfying the row representation in Proposition~\ref{thm:identification} are not unique. \PCR~instead identifies the projection of any valid representer onto the column space of $\bbY_{\nAR, \nAC}$, denoted as 
\begin{align}
	\bbeta^* \coloneqq \bbY_{\nAR, \nAC}^{\top, \dagger} \bbY_{\nAR, \nAC}^\top \cdot \bbeta = \bbY_{\nAR, \nAC}^{\top, \dagger}  \bbY_{i, \nAC}. 
	\label{eq:beta.star}
\end{align}
Assumption~\ref{assump:subspace} ensures that this projection preserves the prediction at the target column. Accordingly, Assumption~\ref{assump:linear_span} provides a representation of the target row, while Assumption~\ref{assump:subspace} validates its extrapolation.
Viewed more broadly, Assumption~\ref{assump:subspace} is a local generalization condition: the held-out factor $\bv_j$ must lie in the span of the training-column factors $\{\bv_q : q \in \AC\}$. 
This is a linear-algebraic analogue of covariate similarity assumptions in statistical learning. 
%

\subsection{Non-asymptotic Error}
Below, we use $\lesssim$ to reflect inequalities that ignore constants factors independent of our model parameters.
For convenience, we denote $|\ARC| \coloneqq |\AR| \cdot |\AC|$ and $L \coloneqq \log(|\nARC|)$. 

\begin{theorem} \label{thm:consistency}
%
%
Let Assumptions~\ref{assump:lfm}--\ref{assump:subspace} hold with $t = \br \coloneqq  \emph{rank}(\Ex[\bY_{\nAR, \nAC}\mid\Ec])$. Suppose that 
\begin{align}
	 |\nARC | \ge C \sigma^2 \br  \left( \sqrt{|\nAR|} + \sqrt{|\nAC|} + \sqrt{L} \right)^2 \label{eq:rank.sep}
\end{align}
and $c \sigma^2 L \le \min\{|\nAR|, |\nAC|\}$ for sufficiently large constants $C, c > 0$. 
Then, conditional on $\Ec$, with probability at least $1 - \mathcal{O}(|\nARC|^{-10})$, 
\begin{align}
	\left| \hA_{ij} - A_{ij} \right| &\lesssim  \Phi \coloneqq \frac{\sigma \br}{\min \left\{ \sqrt{|\nAR|}, \sqrt{|\nAC|} \right\} } 
	+ \frac{\sqrt{\sigma \br} L^{1/4}}{\min\left\{ |\nAC|, |\nAR|^2 / \left(\sigma^2 L \right) \right\}^{1/4}}.
	\label{eq:thm.hp.special}
\end{align}
\end{theorem}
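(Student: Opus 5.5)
We decompose the error along the two-sided structure of \SNN. Write $\bbY_{\nAR,\nAC}$ for the population anchor block, $\bY^{(t)}_{\nAR,\nAC}$ for its rank-$\br$ estimate, and $\bbeta^*$ as in \eqref{eq:beta.star}. Proposition~\ref{thm:identification} together with Assumption~\ref{assump:subspace} gives $A_{ij} = \langle \bbY_{\nAR,j}, \bbeta^*\rangle$. Indeed, $\bv_j = \sum_{q} \alpha_q \bv_q$ with $\alpha$ chosen in the row space of $\bbY_{\nAR,\nAC}$. Then $\bbY_{\nAR,j} = \bbY_{\nAR,\nAC}\alpha$ and $A_{ij} = \langle \bbY_{i,\nAC},\alpha\rangle$, so the projection defining $\bbeta^*$ does not change the prediction. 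We then write
\begin{align}
\hA_{ij} - A_{ij} = \langle \bbY_{\nAR,j}, \hbbeta - \bbeta^*\rangle + \langle \bE_{\nAR,j}, \bbeta^*\rangle + \langle \bE_{\nAR,j}, \hbbeta - \bbeta^*\rangle,
\end{align}
and bound the three terms separately, conditional on $\Ec$.

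\textbf{Spectral step.} By Assumption~\ref{assump:subg} and standard subgaussian matrix bounds, with probability $1-\mathcal{O}(|\nARC|^{-10})$ we have $\|\bE_{\nAR,\nAC}\|_\txtop \lesssim \sigma(\sqrt{|\nAR|}+\sqrt{|\nAC|}+\sqrt{L})$. Assumption~\ref{assump:spectra} gives $s_{\br}(\bbY_{\nAR,\nAC}) \gtrsim \sqrt{|\nARC|/\br}$. Condition \eqref{eq:rank.sep} then says the noise is at most a small constant times $s_{\br}$. Weyl's inequality therefore separates the top $\br$ singular values, and Wedin's theorem controls the subspace errors. Both $\|\hbu\hbu^\top - \bu\bu^\top\|_\txtop$ and $\|\hbv\hbv^\top - \bv\bv^\top\|_\txtop$ are at most $\|\bE\|_\txtop/s_{\br} \lesssim \sigma\sqrt{\br}\,(\sqrt{|\nAR|}+\sqrt{|\nAC|})/\sqrt{|\nARC|}$. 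This is of order $\sigma\sqrt{\br}/\min\{\sqrt{|\nAR|},\sqrt{|\nAC|}\}$.

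\textbf{Bounding the terms.} For the middle term, $\|\bbeta^*\|_2 \le \|\bbY_{i,\nAC}\|_2/s_{\br} \lesssim \sqrt{|\nAC|}\sqrt{\br/|\nARC|} = \sqrt{\br/|\nAR|}$. A Hoeffding bound then gives $|\langle \bE_{\nAR,j},\bbeta^*\rangle| \lesssim \sigma\sqrt{\br L/|\nAR|}$. We absorb this into $\Phi$ using $c\sigma^2 L\le |\nAR|$, which yields the fourth-root form once we also bound it by $\sqrt{\sigma\br}\,L^{1/4}|\nAR|^{-1/2}$.

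For the first term, the key point is that $\bbY_{\nAR,j}$ lies in the column space of $\bbY_{\nAR,\nAC}$. So only the in-subspace part of $\hbbeta-\bbeta^*$ matters, and it equals the gap between PCR fits. The standard PCR perturbation expansion applies. Write $\hbbeta - \bbeta^* = (\bY^{(t)})^{\top,\dagger}\bE_{i,\nAC} + [(\bY^{(t)})^{\top,\dagger} - \bbY^{\top,\dagger}]\bbY_{i,\nAC}$. The pseudoinverse difference is controlled by $\|\bY^{(t)}-\bbY\|_\txtop/s_{\br}^2$ plus the subspace-angle terms. Pairing with $\bbY_{\nAR,j} = \bbY\alpha$, with $\|\alpha\|_2\lesssim\sqrt{\br/|\nAC|}$, converts these into products of subspace errors and the factor $\|\bbY_{i,\nAC}\|_2\|\alpha\|_2\cdot s_1/s_{\br}$. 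Using $\kappa\le c_y^{-1}$, this yields the $\sigma\br/\min\{\sqrt{|\nAR|},\sqrt{|\nAC|}\}$ term. The noise piece $\langle \bbY\alpha,(\bY^{(t)})^{\top,\dagger}\bE_{i,\nAC}\rangle$ is approximately $\langle \alpha, P_{\hbv}\bE_{i,\nAC}\rangle$. This is a linear form in noise independent of the anchor block, hence of order $\sigma\|\alpha\|_2\sqrt{L}$.

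\textbf{Main obstacle.} The hardest term is the cross term $\langle \bE_{\nAR,j},\hbbeta-\bbeta^*\rangle$. Here $\hbbeta$ depends on $\bE_{\nAR,\nAC}$ and $\bE_{i,\nAC}$, but not on $\bE_{\nAR,j}$, since column $j\notin\nAC$. We will condition on everything except column $j$ and apply Hoeffding. This gives $\lesssim \sigma\sqrt{L}\,\|\hbbeta-\bbeta^*\|_2$. The difficulty is a sharp $\ell_2$ bound on $\hbbeta-\bbeta^*$, which includes the out-of-subspace component. We expect $\|\hbbeta-\bbeta^*\|_2^2 \lesssim \br\sigma\sqrt{L}\,/\,(\sigma^2 L\cdot\min\{|\nAC|,|\nAR|^2/(\sigma^2L)\}^{1/2})$. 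Multiplying out then produces exactly the second term of $\Phi$. To get there, we will first bound $\|\hbbeta\|_2$ and $\|\bbeta^*\|_2$ by $\sqrt{\br/|\nAR|}$. We will then bound the difference through the Wedin angle times $\|\bbeta^*\|_2$, plus $\|\bE_{i,\nAC}\|$ projected onto $\hbv$ divided by $\hs_{\br}$. A geometric-mean (AM–GM) step between the two resulting rates gives the fourth-root form. A union bound over the $\mathcal{O}(1)$ events completes the high-probability statement.
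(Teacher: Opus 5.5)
Your proposal is correct. It shares the paper's skeleton: the same split of $\hA_{ij}-A_{ij}$ (the paper merges your last two terms into $\langle\bxi_z,\hbbeta\rangle$), the same Weyl/Wedin/operator-norm step, and the same Hoeffding argument conditional on $(\bX,\by)$ for the target-column noise. The real difference is how the PCR error is controlled.

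\textbf{The paper's route.} It bounds $|\langle\bbz,\bDelta_\beta\rangle|\le\sqrt{|\nAR|}\,\|\bDelta_\beta\|_2$ and obtains $\|\bDelta_\beta\|_2$ in Lemma~\ref{lemma:param.est.1} from the optimality (basic) inequality of PCR. There, the terms $\langle\bhQ\bby,\bxi_y\rangle$ and $\langle\hbX^\top\bbeta^*,\bxi_y\rangle$ are bounded separately, each of order $\sigma\sqrt{|\nAC|L}$. Dividing by $\hs_{\br}^2$ and taking a square root is precisely what creates the $L^{1/4}|\nAC|^{-1/4}$ term of $\Phi$.

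\textbf{Your route.} You expand $\hbbeta-\bbeta^*=\hbX^{\top,\dagger}\bxi_y+(\hbX^{\top,\dagger}-\bbX^{\top,\dagger})\bby$ explicitly. This uses an equal-rank pseudoinverse perturbation bound (standard, though not among the paper's auxiliary lemmas) and a projected-noise bound on $\|\bhQ\bxi_y\|_2$ (e.g., Hanson--Wright conditional on $\bX$). It gives $\|\bDelta_\beta\|_2\lesssim\sigma\br/(\sqrt{|\nAR|}\min\{\sqrt{|\nAR|},\sqrt{|\nAC|}\})+\sigma\sqrt{\br}(\sqrt{\br}+\sqrt{L})/\sqrt{|\nARC|}$. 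Ultimately you get $|\hA_{ij}-A_{ij}|\lesssim\sigma\sqrt{\br}(\sqrt{\br}+\sqrt{L})/\min\{\sqrt{|\nAR|},\sqrt{|\nAC|}\}$. This is dominated by $\Phi$ and is row--column symmetric, which matches the paper's remark that its asymmetry is an artifact of the analysis.

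The gain comes from the explicit expansion, not from the column representer: Cauchy--Schwarz combined with your $\ell_2$ bound already lands inside $\Phi$. So your final AM--GM step and the ``expected'' fourth-root rate for $\|\bDelta_\beta\|_2$ are unnecessary; any bound dominated by $\Phi$ suffices.

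Two small repairs:
\begin{itemize}
\item The noise piece is exactly the linear form $\langle\hbX^{\dagger}\bbz,\bxi_y\rangle$, with $\bX$-measurable coefficients of norm at most $\|\bbz\|_2/\hs_{\br}\lesssim\sqrt{\br/|\nAC|}$. No approximation is needed.
\item Your inequality $\sigma\sqrt{\br L/|\nAR|}\le\sqrt{\sigma\br}L^{1/4}|\nAR|^{-1/2}$ would require $\sigma^2L\lesssim1$. The absorption is immediate anyway, since $\sigma\sqrt{\br L/|\nAR|}=\sqrt{\sigma\br}L^{1/4}\big(|\nAR|^2/(\sigma^2L)\big)^{-1/4}$.
\end{itemize}

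What the paper's basic-inequality route buys is that it never perturbs a pseudoinverse, a step that relies on exact rank matching $t=\br$. It is the standard PCR template and adapts more readily to approximate low rank or rank misspecification, at the price of the slower rate.
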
 

\noindent \textbf{Rate implications.}
Theorem~\ref{thm:consistency} gives a non-asymptotic entrywise error bound that suggests \SNN~is consistent whenever $\Phi = o(1)$. 
In the balanced regime $|\nAR| \asymp |\nAC|$, the bound reduces to $\Phi = \Oc\left(\br |\nAR|^{-1/2} + \sqrt{\br}L^{1/4}  |\nAR|^{-1/4} \right)$. Thus, a sufficient effective-rank condition is $\br = o(\sqrt{|\nAR| / L} )$. In the favorable asymmetric regime $|\nAC| \asymp |\nAR|^2$, the rate becomes $\Phi = \Oc\left( |\nAR|^{-1/2} \cdot (\br + \sqrt{\br L}) \right)$, yielding $\br = \tilde{o}(\sqrt{|\nAR|})$. Hence, for fixed $\br$, \SNN~approaches the near-parametric $|\nAR|^{-1/2}$ scale. 

The apparent advantage of the asymmetric regime should not be interpreted as an intrinsic benefit of having more anchor columns than anchor rows. It arises from the current perturbation analysis, which controls the row-side \PCR~error using conservative operator-norm bounds. Since \SNN~is symmetric in rows and columns, as highlighted in Section~\ref{sec:snn.intuition}, a sharper analysis may yield more symmetric sufficient conditions.

The auxiliary condition $c \sigma^2 L \le \min\{|\nAR|, |\nAC|\}$ is primarily used to simplify the displayed rate. The more substantive requirement is the signal-to-noise condition in \eqref{eq:rank.sep}, which ensures that the smallest nonzero singular value of the population anchor block dominates the operator-norm fluctuation of the noisy anchor block. 
Finally, Theorem~\ref{thm:consistency} is stated for the oracle choice $t = \br$. 
A formal analysis under misspecified rank is left for future work, although existing results suggest that overestimating the rank is generally less damaging than underestimating it \citep{pcr_aos}.

\medskip \noindent \textbf{Connection to MCAR.}
The next result specializes Theorem~\ref{thm:consistency} to the classical MCAR setting.
\begin{corollary} \label{cor:mcar}
Fix a target entry $(i,j)$, $\delta > 0$, and $0 < \eta < 1$. Let $m = n$, and let $d = o\big(\log(n)\big)$ with $d \rightarrow \infty$. Consider the MCAR regime with the observation probability $p$ satisfying
\begin{align}
	p \ge \left[\frac{C\log(2/\eta)}{\binom{n-1}{d}^2}\right]^{1/[d(d+2)]} 
\end{align}
for a sufficiently large constant $C > 0$. 
Then, for all sufficiently large $n$, with probability at least $1 - \eta/2$, a valid $d \times d$ anchor pair $(\nAR, \nAC)$ exists. 
Suppose further that a mask-measurable selection of such an anchor cross satisfies the conditions of Theorem~\ref{thm:consistency} and $d \ge c \sigma^2 \br^2 \max\{\delta^{-2}, \delta^{-4} \log(d)\}$ for a sufficiently large $c > 0$. Then, over both the MCAR mask and outcome noise, 
\begin{align}
	\Pb\left( \left|\hA_{ij}- A_{ij} \right| \le \delta \mid \bU, \bV \right) \ge 1 - \eta. 
\end{align}
%
%
%
\end{corollary}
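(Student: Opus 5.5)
The argument has two parts. First we show that a $d\times d$ anchor cross exists with probability at least $1-\eta/2$ over the MCAR mask. Then, conditional on a mask-measurable selection of such a cross, we apply Theorem~\ref{thm:consistency} and show that $\Phi\le\delta$. For existence, the natural tool is a second-moment (Paley--Zygmund / Chebyshev) argument on the number of valid crosses. Fix the target $(i,j)$ and restrict attention to candidate row sets $R\subseteq[n]\setminus\{i\}$ and column sets $Q\subseteq[n]\setminus\{j\}$, each of size $d$. The pair $(R,Q)$ is a valid anchor cross exactly when the following $d^2+2d=d(d+2)$ entries are all observed: those of $R\times Q$, those of $R\times\{j\}$, and those of $\{i\}\times Q$. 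Hence
\begin{align}
	\Pb\bigl((R,Q)\text{ valid}\bigr)=p^{d(d+2)}.
\end{align}
Let $Z$ count the valid pairs. Then $\Ex Z=\binom{n-1}{d}^2 p^{d(d+2)}$, and the hypothesis on $p$ says precisely that $\Ex Z\ge C\log(2/\eta)$.

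Upgrading this to $\Pb(Z>0)\ge 1-\eta/2$ is the main obstacle. A plain Chebyshev bound requires $\Var Z=o((\Ex Z)^2)$. This in turn means controlling overlaps: pairs $(R,Q)$ and $(R',Q')$ with $|R\cap R'|=a$ and $|Q\cap Q'|=b$ share $ab+a+b$ entries. One must then bound
\begin{align}
	\sum_{(a,b)\neq(0,0)}\frac{\binom{d}{a}\binom{n-1-d}{d-a}\binom{d}{b}\binom{n-1-d}{d-b}}{\binom{n-1}{d}^2}\,p^{-(ab+a+b)}.
\end{align}
With $d=o(\log n)$, the combinatorial factors give roughly $(d^2/n)^{a+b}$. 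Meanwhile $p^{-(ab+a+b)}$ is at most $p^{-d(d+2)}$, and by the bound on $p$ this is only sub-polynomial in $n$. So every term with $(a,b)\neq(0,0)$ should vanish for large $n$. The worst cases are the near-full overlaps $a,b\approx d$; there I would bound the ratio directly by $1/\Ex Z$. That would yield $\Var Z/(\Ex Z)^2\lesssim 1/\Ex Z+o(1)$, which is enough to conclude. The appearance of $\log(2/\eta)$ suggests the authors may instead use Janson's inequality, which gives $\Pb(Z=0)\le\exp(-\mu^2/(2\Delta))$ or $\exp(-\mu+\Delta/2)$. I would try Janson first, since it turns $\Ex Z\ge C\log(2/\eta)$ into the $\eta/2$ bound directly once $\Delta=o(\Ex Z)$.

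For the error part, condition on the mask event and on the selected cross. Because the selection is mask-measurable, and the noise is independent of $\Ec$ by Assumption~\ref{assump:subg}, Theorem~\ref{thm:consistency} applies verbatim. With $|\nAR|=|\nAC|=d$ we have $L=2\log d$, and the balanced-regime simplification gives
\begin{align}
	\Phi\lesssim \frac{\sigma\br}{\sqrt d}+\frac{\sqrt{\sigma\br}\,(\log d)^{1/4}}{d^{1/4}}.
\end{align}
The condition $d\ge c\sigma^2\br^2\delta^{-2}$ makes the first term at most $\delta/2$. The condition $d\ge c\sigma^2\br^2\delta^{-4}\log d$ makes the second term at most $\delta/2$, after tracking powers of $\sigma$ loosely and absorbing constants into $c$. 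Hence $|\hA_{ij}-A_{ij}|\le\delta$ with probability at least $1-\mathcal{O}(d^{-20})$. Since $d\to\infty$, this failure probability is at most $\eta/2$ for large $n$. A union bound with the mask-existence failure gives overall probability at least $1-\eta$, unconditional over the mask and given $\bU,\bV$.
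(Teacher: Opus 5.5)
The overall plan matches the paper: count the valid $d\times d$ crosses, control overlaps through $s(a,b)=ab+a+b$, apply Janson, then invoke Theorem~\ref{thm:consistency} on a mask-measurable cross. Your second half is correct: you check both terms of $\Phi$ against the two lower bounds on $d$, get failure probability $\mathcal{O}(d^{-20})$, and union-bound.

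\textbf{The gap is in the overlap estimate.} This is the actual content of the existence claim.

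Your assertion that $p^{-d(d+2)}$ is ``only sub-polynomial in $n$'' is false. At the threshold $p_0$,
$p_0^{-d(d+2)}=\binom{n-1}{d}^2/(C\log(2/\eta))\approx(n/d)^{2d}$,
which is super-polynomial because $d\to\infty$. This is exactly why the $(d,d)$ term equals $1/\mu$, which contradicts your claim that every term with $(a,b)\neq(0,0)$ vanishes. With the crude bound $p^{-s}\le p^{-d(d+2)}$, even the $(1,0)$ term is of order $(d^2/n)(n/d)^{2d}\to\infty$.

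Only $p_0^{-1}=n^{\mathcal{O}(1/d)}$ is sub-polynomial, so you must bound term by term. Set $N=\binom{n-1}{d}^2$ and $Q=d(d+2)$. Then $p_0^{-s}\le N^{s/Q}\le(C''n/d)^{2s/(d+2)}$. Combine this with $\rho(a)\coloneqq\binom{d}{a}\binom{n-1-d}{d-a}/\binom{n-1}{d}\le(C'd^2/n)^a/a!$. The net power of $n$ in the $(a,b)$ term is then $-h(a,b)$, where $h(a,b)=\{d(a+b)-2ab\}/(d+2)$.

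The missing idea is that $h(a,b)\ge d/(d+2)$ for every $(a,b)\notin\{(0,0),(d,d)\}$. For fixed $a$ the numerator is linear in $b$, so it suffices to check $b\in\{0,d\}$ and the edges $a\in\{0,d\}$. Each of the $(d+1)^2$ terms is then at most $e^{\mathcal{O}(d)}(d/n)^{d/(d+2)}$, so $\Delta/\mu^2=o(1)$. This is exactly where $d=o(\log n)$ is used.

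In particular, the off-diagonal near-full overlaps such as $(d,d-1)$ are $o(1)$, not of order $1/\mu$. Bounding them ``by $1/\Ex Z$'', as you suggest, would only give $\Delta\lesssim K\mu$ with $K$ the number of such pairs. That does not deliver the $\Delta\le\mu$ that Janson's $\exp(-\mu+\Delta/2)$ requires.

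Two smaller points.

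First, your Chebyshev fallback gives $\Pb(Z=0)\lesssim1/\mu+o(1)$. This is at most $\eta/2$ only if $\mu\gtrsim1/\eta$, i.e.\ only if $C$ may grow like $1/\{\eta\log(2/\eta)\}$. The stated $\log(2/\eta)$ threshold with an absolute $C$ genuinely needs Janson's exponential bound: $\Pb(Z=0)\le\exp(-\mu+\Delta/2)\le e^{-\mu/2}=(\eta/2)^{C/2}\le\eta/2$ once $\Delta\le\mu$.

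Second, before applying that form of Janson, reduce to $p=p_0$, using that $\{Z\ge1\}$ is increasing in the mask. For $p\gg p_0$, $\mu$ grows, and $\Delta=o(\mu^2)$ no longer implies $\Delta\le\mu$.
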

Corollary~\ref{cor:mcar} shows that valid anchor crosses arise even under fully random observation patterns, provided the observation probability is large enough (though still scaling as $o(1)$). This is not intended as a sharp MCAR guarantee. Rather, it serves as a bridge: the same local-anchor logic that drives \SNN~under MNAR also recovers the familiar MCAR setting as a special case.

The comparison also clarifies the price of robustness. Suppressing logarithmic factors, classical MCAR matrix completion can exploit global random sampling to obtain a Frobenius error $(1/n) \| \bhA - \bA \|_F = \tilde{\Oc}(\delta)$ with $\tilde{\Oc}(nr \delta^{-2})$ samples. 
Under MNAR sampling, such global sample-complexity statements are generally harder to formulate. 

A more transparent comparison arises when every target can use a shared anchor pair $(\nAR, \nAC)$, as in Figure~\ref{fig:Lshape}. In a recommender system, the anchor rows may represent highly active viewers or critics, while the anchor columns may represent widely viewed canonical films, such as those used during onboarding. 
Up to logarithmic factors, uniform versions of Theorem~\ref{thm:consistency} give $\Oc(\delta)$ entrywise recovery when $|\nAR| = \tilde{\Omega}(\br^2 \delta^{-2})$ and $|\nAC| = \tilde{\Omega}(\br^2 \delta^{-4})$ using $\tilde{\Oc}(n \br^2 \delta^{-4})$ samples.
Thus, when $\br \asymp r$, \SNN~pays an additional factor of order $r \delta^{-2}$ relative to the MCAR benchmark, but provides entrywise guarantees under substantially more general missingness mechanisms. 
A sharper account of the trade-offs between MCAR-tailored and MNAR-robust procedures is left for future work.

\begin{figure} [!t]
	\centering 
		\includegraphics[width=0.3\linewidth]
		{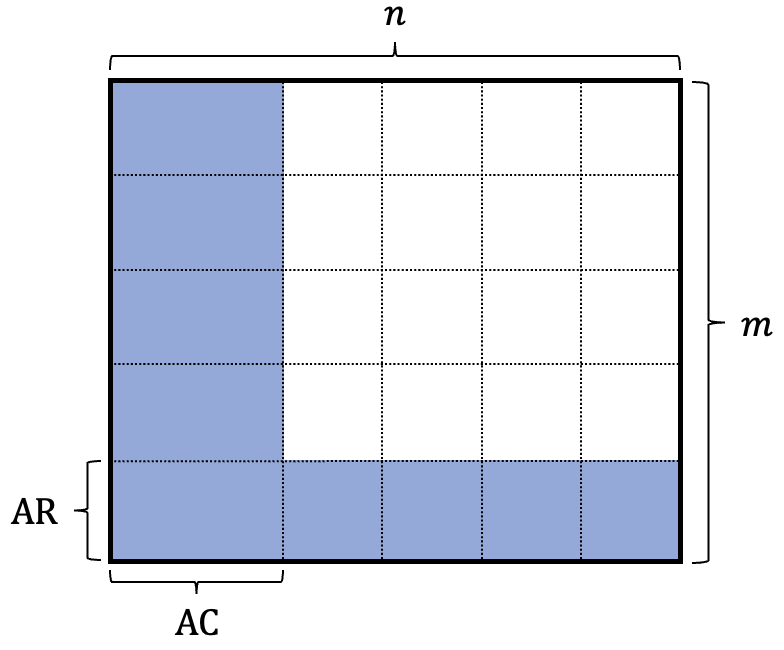}
	\caption{Proposed observation pattern to achieve entrywise matrix recovery for \SNN.}
	\label{fig:Lshape} 
\end{figure}

\subsection{Asymptotic Normality} \label{sec:normality}
We next refine the finite-sample guarantee of Theorem~\ref{thm:consistency} into an asymptotic distributional statement. 
Recall the row-side population representer $\bbeta^*$ defined in \eqref{eq:beta.star}, and define its column counterpart by $\balpha^* \coloneqq \bbY_{\nAR,\nAC}^\dagger \bbY_{\nAR, j}$. 
To facilitate our presentation below, define
\begin{align}
	\upsilon^2 \coloneqq \sum_{\ell \in \nAR} \left(\beta^*_\ell\right)^2 \sigma^2_{\ell j} 
	+ \sum_{q \in \nAC} \left(\alpha^*_q\right)^2 \sigma^2_{iq}
	+ \sum_{\ell \in \nAR} \sum_{q \in \nAC} \left(\beta^*_\ell \alpha^*_q\right)^2 \sigma^2_{\ell q}. 
	\label{eq:var.asymp}
\end{align}

\begin{theorem} \label{thm:normality}
Let the setup of Theorem~\ref{thm:consistency} hold. 
Conditional on $\Ec$, $\hA_{ij} - A_{ij} = G + R$. 
If Assumption~\ref{assump:subg} is specialized to Gaussian noise, then $G / \upsilon \sim \Nc(0,1)$; for general subgaussian noise, if 
\begin{align}
	\frac{\max\left\{ \max_{\ell \in \nAR} \left| \beta^*_\ell \right| \sigma_{\ell j}, ~\max_{q \in \nAC} \left| \alpha^*_q \right| \sigma_{iq}, ~\max_{(\ell, q) \in \nAR \times \nAC} \left| \beta^*_\ell \alpha^*_q \right| \sigma_{\ell q}   \right\}}{\upsilon} = o(1),
	\label{eq:clt}
\end{align}
then $G / \upsilon \rightsquigarrow \Nc(0,1)$. 
Moreover, with probability at least $1 - \Oc(|\nARC|^{-10})$, 
\begin{align}
	\left| R \right| \lesssim 
	\Psi \coloneqq 
	\sigma \Phi \cdot \left( \frac{ \max\left\{\sqrt{\br}, \sqrt{L}\right\}}{\min \left\{ \sqrt{|\nAR|}, \sqrt{|\nAC|} \right\}} + \frac{ \sqrt{\br L}}{\sqrt{|\nARC|}} 
	\right),
	\label{eq:psi}
\end{align}
where $\Phi$ is defined in \eqref{eq:thm.hp.special}. 
%
Consequently, if $\Psi / \upsilon = o(1)$, then as $|\nAR|, |\nAC| \rightarrow \infty$, 
\begin{align}
	\frac{\hA_{ij} - A_{ij}}{\upsilon} \rightsquigarrow \mathcal{N}(0,1). 
\end{align}
\end{theorem}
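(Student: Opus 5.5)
Write $\bM \coloneqq \bbY_{\nAR,\nAC}$, $\bW \coloneqq \bY_{\nAR,\nAC} - \bM$, $\bw_r \coloneqq \bY_{i,\nAC} - \bbY_{i,\nAC}$ and $\bw_c \coloneqq \bY_{\nAR,j} - \bbY_{\nAR,j}$. These three noise blocks are independent conditional on $\Ec$ (Assumption~\ref{assump:subg}). The plan is a first-order perturbation expansion of $\hA_{ij}$ around the population quantities. By Proposition~\ref{thm:identification} together with Assumption~\ref{assump:subspace}, we have $A_{ij} = \langle \bbY_{\nAR,j}, \bbeta^*\rangle = \langle \bbY_{i,\nAC}, \balpha^*\rangle = \langle \balpha^*, \bM^\top \bbeta^*\rangle$. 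Using the two-sided representation $\hA_{ij} = \langle \bY_{i,\AC}, \hbalpha\rangle$ and expanding $(\bY^{(t)}_{\nAR,\nAC})^\dagger$ around $\bM^\dagger$ (rank $\br$, $t=\br$), the linear part of the error should be
\begin{align}
G \coloneqq \langle \bw_c, \bbeta^*\rangle + \langle \bw_r, \balpha^*\rangle - \langle \bbeta^*, \bW \balpha^*\rangle .
\end{align}
The sign of the last term comes from $\delta(\bM^\dagger) \approx -\bM^\dagger (\delta \bM) \bM^\dagger$ restricted to the signal subspaces, combined with $\bbeta^{*\top}\bM \bM^\dagger = \bbeta^{*\top}$ and $\bM^\dagger \bM \balpha^* = \balpha^*$. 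All remaining terms are collected in $R \coloneqq \hA_{ij} - A_{ij} - G$.

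\textbf{Step 1 (the Gaussian/CLT part).} $G$ is a weighted sum of independent zero-mean noises with variance exactly $\upsilon^2$ as in \eqref{eq:var.asymp}. Under Gaussian noise, $G/\upsilon \sim \Nc(0,1)$ exactly. For subgaussian noise we apply Lindeberg--Feller. The summands $\xi_k$ have $\|\xi_k\|_{\psi_2} \lesssim$ (coefficient)$\cdot\sigma_k$, so $\Ex[\xi_k^2 \mathbf 1\{|\xi_k|>\epsilon\upsilon\}]$ is controlled by $\mathrm{coef}_k^2\sigma_k^2 \exp(-c\epsilon^2\upsilon^2/\max_k \mathrm{coef}_k^2\sigma_k^2)$. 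Summing over $k$ and dividing by $\upsilon^2$ gives a quantity that tends to $0$ under \eqref{eq:clt}.

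\textbf{Step 2 (remainder bound).} Decompose $R$ into three kinds of terms: (i) the projection error from $\bY^{(\br)}$ versus $\bM$ acting on signal, i.e.\ second-order Wedin/Davis--Kahan terms such as $\bbeta^{*\top}\bW(\hbV\hbV^\top - \bV\bV^\top)\cdots$; (ii) cross terms like $\langle \bw_c, \hbbeta - \bbeta^*\rangle$; (iii) the noise-times-noise term $\bw_r^\top(\ldots)\bw_c$. For these we reuse the high-probability events from the proof of Theorem~\ref{thm:consistency}:
\begin{itemize}
\item $\|\bW\|_\txtop \lesssim \sigma(\sqrt{|\nAR|}+\sqrt{|\nAC|}+\sqrt L)$;
\item $s_{\br}(\bM) \gtrsim \sqrt{|\nARC|/\br}$, from Assumption~\ref{assump:spectra}; under \eqref{eq:rank.sep} this exceeds $2\|\bW\|_\txtop$;
\item $\|\hbbeta-\bbeta^*\|$-type bounds that underlie $\Phi$.
\end{itemize}
Each cross term has the form (noise vector)$\cdot$(error direction). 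Independence of $\bw_c$ from the block that determines $\hbbeta$ lets us condition and use a subgaussian tail. This contributes a factor $\sigma\sqrt L$ times the $\ell_2$ error, which is of order $\Phi$ divided by a $\sqrt{\min\{|\nAR|,|\nAC|\}}$ normalization, or $\sqrt{\br}$ from projecting onto an $\br$-dimensional space. The noise-times-noise and doubly-projected terms give the $\sqrt{\br L}/\sqrt{|\nARC|}$ piece. Collecting terms yields $|R|\lesssim\Psi$ with probability $1-\Oc(|\nARC|^{-10})$, via a union bound over the events.

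\textbf{Step 3 (conclusion).} Write $(\hA_{ij}-A_{ij})/\upsilon = G/\upsilon + R/\upsilon$. The second term is $o_p(1)$ when $\Psi/\upsilon=o(1)$, so Slutsky's lemma finishes the argument.

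\textbf{Main obstacle.} The hardest part is Step 2, in particular terms where $\bW$ enters both the estimated singular subspace and the linear functional, so that independence fails. For these I would first try a leave-one-out / decoupling argument. Failing that, I would fall back to crude bounds $\|\bbeta^{*\top}\bW\|_2\lesssim\sigma\|\bbeta^*\|_2\sqrt{|\nAC|+L}$ combined with $\|\hbV\hbV^\top-\bV\bV^\top\|_\txtop \lesssim \|\bW\|_\txtop/s_{\br}(\bM)$. This fallback is what produces the $\max\{\sqrt{\br},\sqrt L\}/\min\{\sqrt{|\nAR|},\sqrt{|\nAC|}\}$ factor multiplying $\sigma\Phi$.
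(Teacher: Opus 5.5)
Your proposal is correct and reaches exactly the paper's leading term $G$: your $\bM$, $\bW$ and row/column noise vectors are the paper's $\bbX$, $\bXi_X$, $\bxi_y$, $\bxi_z$. It organizes the expansion differently, though.

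\textbf{The paper's route.} The paper never expands $\hbX^\dagger$. It uses the residual-corrected identity $\hA_{ij}=\langle\bz,\hbbeta\rangle+\langle\tbalpha,\by-\bX^\top\hbbeta\rangle$ with the proof-only representer $\tbalpha=\bhQ\balpha^*$; the correction term is identically zero. This makes the non-negligible nuisance term $\langle\bbz,\bDelta_\beta\rangle$ cancel against $\langle\balpha^*,\bbX^\top\bDelta_\beta\rangle$. The six remaining terms are products of $\bDelta_\beta$ or $\bDelta_{\talpha}$ with noise or with each other. They are bounded by Lemma~\ref{lemma:param.est.1}, Wedin, and conditional Hoeffding. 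The CLT uses Lyapunov with fourth moments rather than your Lindeberg argument; both work under \eqref{eq:clt}.

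\textbf{Making your route rigorous.} Replace the heuristic $\delta(\bM^\dagger)\approx-\bM^\dagger(\delta\bM)\bM^\dagger$ by exact algebra. This matters because $\hbX$ is a rank truncation of a full-rank perturbation of a rank-deficient matrix. With $\bby=\bbX^\top\bbeta^*$, $\bbz=\bbX\balpha^*$, $\bbX=\bX-\bXi_X$, and the exact relations $\hbX^\dagger\bX=\bhQ$, $\bX\hbX^\dagger=\bhP$, $\bhP\bX=\hbX$, one gets
\begin{align*}
\bby^\top\hbX^\dagger\bbz-A_{ij} &= -\langle\bbeta^*,\bXi_X\balpha^*\rangle \\
&\quad+\bbeta^{*\top}\bigl[(\bI-\bhP)\bXi_X+\bXi_X(\bI-\bhQ)-(\bI-\bhP)\bX(\bI-\bhQ)+\bXi_X\hbX^\dagger\bXi_X\bigr]\balpha^* .
\end{align*}
In addition, $\hbX^\dagger\bbz-\balpha^*=-(\bI-\bhQ)\balpha^*-\hbX^\dagger\bXi_X\balpha^*$, and symmetrically for $\hbX^{\top,\dagger}\bby-\bbeta^*$. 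Hence $\by^\top\hbX^\dagger\bz-A_{ij}$ equals $G$ plus three kinds of terms:
\begin{itemize}
\item terms quadratic in $\bXi_X$, each $\lesssim\|\bXi_X\|_\txtop^2\|\bbeta^*\|_2\|\balpha^*\|_2/s_{\br}$ by Weyl and Wedin;
\item the linear forms $\langle\bxi_y,\hbX^\dagger\bbz-\balpha^*\rangle$ and $\langle\bxi_z,\hbX^{\top,\dagger}\bby-\bbeta^*\rangle$, whose coefficients are $\bX$-measurable;
\item the bilinear term $\bxi_y^\top\hbX^\dagger\bxi_z$.
\end{itemize}
The last two kinds are handled by conditional Hoeffding. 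So the step you flag as the main obstacle needs no leave-one-out argument. Wherever $\bXi_X$ enters both the subspace and the functional, the term is second order and crude operator-norm bounds suffice; the paper does the same.

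\textbf{What each route buys.} The paper's organization makes the ``first-stage error times perturbation'' structure of $\Psi$ explicit. It also reuses the \PCR~lemma behind Theorem~\ref{thm:consistency} and produces the influence-function form used for ensembles and linear functionals. Your route is more elementary. Carried out as above, it does not need Lemma~\ref{lemma:param.est.1} at all and gives
\[
|R|\lesssim\sqrt{\br}\,\bigl(\|\bXi_X\|_\txtop^2\|\bbeta^*\|_2\|\balpha^*\|_2+\sigma\sqrt{L}\,\|\bXi_X\|_\txtop(\|\bbeta^*\|_2+\|\balpha^*\|_2)+\sigma^2\sqrt{L}(\sqrt{\br}+\sqrt{L})\bigr)/\sqrt{|\nARC|}.
\]
This is $\lesssim\Psi$ under the theorem's conditions. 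In the balanced regime it is of order $\sigma^2\sqrt{\br}(\sqrt{\br}+\sqrt{L})^2/|\nAR|$. It therefore avoids the slower second term of $\Phi$, which enters $\Psi$ only through the conservative bound $\Lambda_\beta$.
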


\noindent \textbf{Decomposition interpretation.} The variance decomposition in \eqref{eq:var.asymp} makes the row-column symmetry of \SNN~statistically consequential. Its first term reflects noise in the target column, weighted by the row representer $\bbeta^*$; its second reflects noise in the target row, weighted by the column representer $\balpha^*$; and its third reflects noise in the anchor block, weighted by their products. Thus, even though the point estimate $\hA_{ij}$ may be written from either side, its first-order uncertainty depends on both. 
The remainder envelope $\Psi$ in \eqref{eq:psi} collects the higher-order effects of estimating the two representers and the anchor subspace. 



\medskip \noindent \textbf{Rate implications.} 
%
Two ingredients are needed for asymptotic normality.
First, the leading fluctuation must satisfy a central limit theorem; condition \eqref{eq:clt} is a Lyapunov-type notion that prevents any single weighted noise entry from dominating the leading term. Second, the remainder must vanish relative to the standard-error scale so that $\Psi / \upsilon = o(1)$. 

For illustration, suppose the relevant variances are bounded below and the population representers are nondegenerate: $\| \bbeta^* \|_2 \gtrsim |\nAR|^{-1/2}$ and $\| \balpha^* \|_2 \gtrsim |\nAC|^{-1/2}$. In the balanced regime $|\nAR| \asymp |\nAC|$, $\Psi / \upsilon = \Oc\left( (\sqrt{\br} + \sqrt{L}) \Phi \right)$, yielding the condition $\br = o(|\nAR|^{1/4} L^{-1/4} )$. 
Thus, inference imposes a stricter effective-rank requirement than consistency: the point estimator must not only converge, but admit a first-order linear approximation on the standard-error scale. The product structure of $\Psi$---first-stage error $\Phi$ multiplied by a perturbation factor---makes this distinction explicit. 

\subsection{Interpretation and Scope}
The guarantees in Theorems~\ref{thm:consistency} and \ref{thm:normality} are inherently local. Recovery of $A_{ij}$ depends on the existence of an informative anchor cross around that entry---not on positivity, independent sampling, staggered adoption, or any prescribed geometry of the full observation mask. Outside the anchor structure, the missingness pattern may be deterministic, dependent, and irregular, subject to our factor model. 

This locality defines both the strength and the limitation of \SNN. The method replaces global design assumptions with local outcome-model conditions: suitable row and column spans and a sufficiently strong anchor signal. When the observation pattern has exploitable global structure, procedures tailored to that structure may be sharper. When the global pattern is irregular but contains informative local crosses, \SNN~provides an entrywise route to recovery and inference.

\section{Feasible Entrywise Inference} \label{sec:var.est}
Theorem~\ref{thm:normality} gives an oracle normal approximation for \SNN~in terms of the unknown leading standard error $\upsilon$. We now convert that result into a feasible entrywise inference procedure. This requires two pieces: an estimator of $\upsilon^2$, and estimators of the heteroskedastic noise variances that enter $\upsilon^2$. 

\subsection{Plug-in Studentization} 
Recall the \PCR~estimates $(\hbbeta, \hbalpha)$ from \eqref{eq:beta.hat} and \eqref{eq:alpha.hat}. 
We estimate $\upsilon^2$ using the natural plug-in estimator 
\begin{align}
	\hupsilon^2 \coloneqq \sum_{\ell \in \nAR} \hbeta_\ell^2 \hsigma^2_{\ell j} 
	+ \sum_{q \in \nAC} \halpha_q^2 \hsigma^2_{iq}
	+ \sum_{\ell \in \nAR} \sum_{q \in \nAC} \hbeta^2_\ell \halpha_q^2 \hsigma^2_{\ell q},
	\label{eq:hvar.asymp}
\end{align}
where each $\hsigma^2$ estimates the corresponding noise variance and is constrained to lie in $[0, \sigma^2_+]$. 

\begin{theorem} \label{thm:var.est.asymp}
Let the setup of Theorem~\ref{thm:normality} hold. Assume further that there exists a deterministic sequence $\Lambda_\sigma$ such that, with probability at least $1-p_\sigma$, 
\begin{align}
	\max\left\{ 
	\max_{\ell \in \nAR} \left| \hsigma^2_{\ell j} - \sigma^2_{\ell j} \right|, 
	\max_{q \in \nAC} \left| \hsigma^2_{iq} - \sigma^2_{iq} \right|, 
	\max_{\ell \in \nAR, q \in \nAC} \left| \hsigma^2_{\ell q} - \sigma^2_{\ell q} \right|
	\right\} \le \Lambda_\sigma.
	\label{eq:var.bound}
\end{align}
%
Then, conditional on $\Ec$, with probability at least $1 - p_\sigma - \Oc(|\nARC|^{-10})$, 
\begin{align}
	\left|\hupsilon^2 - \upsilon^2 \right| 
	\lesssim 
	\Gamma 
	&\coloneqq 
	\left(\sigma^2_+ + \Lambda_\sigma \right) 
	\left\{ \frac{\br\Phi + \Phi^2}{\min\left\{|\nAR|, |\nAC| \right\}} 
	\left( 1 + \frac{\br + \Phi^2}{\min\left\{|\nAR|, |\nAC|\right\}} \right) 
		\right\}
	+  \frac{\br \Lambda_\sigma}{\min\{|\nAR|, |\nAC|\}}. 
	\label{eq:gamma} 
\end{align}
Consequently, if $\Gamma / \upsilon^2 = o(1)$, then as $|\nAR|, |\nAC| \rightarrow \infty$, $\hupsilon / \upsilon \xrightarrow{p} 1$ and thus, 
\begin{align}
	\frac{\hA_{ij} - A_{ij}}{\hupsilon} \rightsquigarrow \Nc(0,1). 
\end{align}
\end{theorem}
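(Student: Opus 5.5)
The plan is to decompose $\hupsilon^2 - \upsilon^2$ into a part driven by the weight estimation error and a part driven by the variance estimation error, and then bound each part on the intersection of two events. The first is the event in \eqref{eq:var.bound}, which has probability at least $1-p_\sigma$. The second is the high-probability event underlying Theorems~\ref{thm:consistency} and \ref{thm:normality}. On that second event I will use the perturbation bounds for the \PCR~weights produced in those proofs:
\begin{align}
\|\hbbeta - \bbeta^*\|_2 \lesssim \frac{\Phi}{\sqrt{|\nAR|}}, \qquad \|\hbalpha - \balpha^*\|_2 \lesssim \frac{\Phi}{\sqrt{|\nAC|}},
\end{align}
together with the norm bounds $\|\bbeta^*\|_2^2 \lesssim \br/|\nAR|$ and $\|\balpha^*\|_2^2 \lesssim \br/|\nAC|$. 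The norm bounds should follow from Assumption~\ref{assump:spectra}, since $\bbeta^* = \bbY^{\top,\dagger}_{\nAR,\nAC}\bbY_{i,\nAC}$ has norm at most $\|\bbY_{i,\nAC}\|_2/s_{\min} \lesssim \sqrt{|\nAC|}\cdot\sqrt{\br}/\sqrt{|\nARC|}$, using bounded signal and $s_{\min}^2 \gtrsim |\nARC|/\br$. The weight bounds are the normalized forms of the estimates that drive $\Phi$. The $\sqrt{\cdot}$ scaling is the one that makes $\br\Phi/\min$ and $\Phi^2/\min$ appear in \eqref{eq:gamma}.

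For each of the three sums I add and subtract $\sigma^2$. For the first sum this gives
\begin{align}
\sum_\ell \hbeta_\ell^2\hsigma^2_{\ell j} - \sum_\ell (\beta^*_\ell)^2\sigma^2_{\ell j} = \sum_\ell \hbeta_\ell^2(\hsigma^2_{\ell j}-\sigma^2_{\ell j}) + \sum_\ell (\hbeta_\ell^2 - (\beta^*_\ell)^2)\sigma^2_{\ell j}.
\end{align}
\textbf{Variance-error term.} The first piece is at most $\Lambda_\sigma\|\hbeta\|_2^2$. Since $\|\hbbeta\|_2^2 \le 2\|\bbeta^*\|_2^2 + 2\|\hbbeta-\bbeta^*\|_2^2 \lesssim (\br+\Phi^2)/|\nAR|$, this produces the $\br\Lambda_\sigma/\min$ term plus a $\Lambda_\sigma\Phi^2/\min$ term, which is absorbed into the first bracket of $\Gamma$.

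\textbf{Weight-error term.} The second piece uses $|\hbeta_\ell^2 - (\beta^*_\ell)^2| = |\hbeta_\ell-\beta^*_\ell|\,|\hbeta_\ell+\beta^*_\ell|$. With $\sigma^2_{\ell j} \le \sigma^2 \le \sigma^2_+$ and Cauchy--Schwarz, it is at most
\begin{align}
\sigma^2_+\|\hbbeta-\bbeta^*\|_2\bigl(2\|\bbeta^*\|_2+\|\hbbeta-\bbeta^*\|_2\bigr) \lesssim \sigma^2_+\frac{\sqrt{\br}\Phi+\Phi^2}{|\nAR|}.
\end{align}
This is dominated by $\sigma^2_+(\br\Phi+\Phi^2)/\min$. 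The second sum is handled identically with $\balpha$.

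\textbf{Product sum.} The third sum is the most delicate because both weight vectors enter. I bound it by
\begin{align}
\Bigl|\sum_{\ell,q}\bigl(\hbeta_\ell^2\halpha_q^2 - (\beta^*_\ell\alpha^*_q)^2\bigr)\sigma^2_{\ell q}\Bigr| \le \sigma^2_+\Bigl(\bigl|\|\hbbeta\|_2^2-\|\bbeta^*\|_2^2\bigr|\,\|\hbalpha\|_2^2 + \|\bbeta^*\|_2^2\,\bigl|\|\hbalpha\|_2^2-\|\balpha^*\|_2^2\bigr|\Bigr),
\end{align}
using the factorization $ab - a^*b^* = (a-a^*)b + a^*(b-b^*)$ entrywise, after first using $\sigma^2_{\ell q}\le\sigma_+^2$ to dominate by the product of the norm differences. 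Substituting the bounds gives
\begin{align}
\sigma^2_+\frac{\br\Phi+\Phi^2}{\min}\cdot\frac{\br+\Phi^2}{\min}.
\end{align}
This is exactly the cross factor $(1 + (\br+\Phi^2)/\min)$ in $\Gamma$. Its variance-error analogue, $\Lambda_\sigma\|\hbbeta\|_2^2\|\hbalpha\|_2^2$, is of the same or smaller order.

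Summing the pieces gives $|\hupsilon^2-\upsilon^2| \lesssim \Gamma$ on the intersection event, and a union bound gives the stated probability. If $\Gamma/\upsilon^2 = o(1)$, then $\hupsilon^2/\upsilon^2 \xrightarrow{p} 1$, so $\hupsilon/\upsilon \xrightarrow{p} 1$ by continuous mapping. Slutsky's lemma applied to Theorem~\ref{thm:normality} then yields $(\hA_{ij}-A_{ij})/\hupsilon \rightsquigarrow \Nc(0,1)$.

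The main obstacle is the first step: extracting the $\ell_2$ perturbation bounds for $\hbbeta$ and $\hbalpha$ at the correct normalization from the proof of Theorem~\ref{thm:consistency}. That theorem states only a bound on the prediction error. I would first try to reproduce its \PCR~argument, bounding $\|\hbbeta-\bbeta^*\|_2$ through the projection-perturbation decomposition with the Wedin-type operator-norm control implied by \eqref{eq:rank.sep}. For $\hbalpha$ I would invoke the row/column symmetry of Section~\ref{sec:snn.intuition}.
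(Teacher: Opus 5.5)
Your route matches the paper's proof. You split $\hupsilon^2-\upsilon^2$ into a variance-error part and a weight-error part, and control the weight errors with $\sum_\ell|\hx_\ell^2-x_\ell^2|\le\|\hbx-\bx\|_2(2\|\bx\|_2+\|\hbx-\bx\|_2)$. You then insert $\|\bbeta^*\|_2^2\lesssim\br/|\nAR|$, $\|\balpha^*\|_2^2\lesssim\br/|\nAC|$ and the \PCR~error bounds; the paper's Lemma~\ref{lemma:param.est.1} gives exactly $\|\hbbeta-\bbeta^*\|_2\lesssim\Lambda_\beta\le\Phi/\sqrt{|\nAR|}$, and the transposed regression gives the $\balpha$ analogue. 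Finally you union-bound and apply Slutsky.

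The one structural difference is where the hats sit. The paper writes $\sum_s\omega_s(\hsigma^2_s-\sigma^2_s)+\sum_s(\homega_s-\omega_s)\hsigma^2_s$, so the weight error multiplies $\hsigma^2_s\in[0,\sigma^2_+]$, which holds by construction. You multiply it by the true $\sigma^2_s$ and invoke $\sigma^2\le\sigma^2_+$, which is not a hypothesis. The fix is that on the event \eqref{eq:var.bound}, $\sigma^2_s\le\hsigma^2_s+\Lambda_\sigma\le\sigma^2_++\Lambda_\sigma$, which is exactly the prefactor in $\Gamma$.

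The step that fails as written is the product sum. The bound
\[
\Bigl|\sum_{\ell,q}\bigl(\hbeta_\ell^2\halpha_q^2-(\beta^*_\ell\alpha^*_q)^2\bigr)\sigma^2_{\ell q}\Bigr|
\le\sigma^2_+\Bigl(\bigl|\|\hbbeta\|_2^2-\|\bbeta^*\|_2^2\bigr|\,\|\hbalpha\|_2^2+\|\bbeta^*\|_2^2\,\bigl|\|\hbalpha\|_2^2-\|\balpha^*\|_2^2\bigr|\Bigr)
\]
holds only when $\sigma^2_{\ell q}$ is constant, i.e.\ in the homoskedastic case the theorem is meant to go beyond. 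You cannot pull $\sigma^2_+$ out of a signed sum with varying weights. The entrywise differences $\hbeta_\ell^2-(\beta^*_\ell)^2$ can cancel inside $\|\hbbeta\|_2^2-\|\bbeta^*\|_2^2$ without cancelling in the weighted sum. For example, if $\hbalpha=\balpha^*\neq\bzero$ and $\hbbeta$ is a coordinate permutation of $\bbeta^*$, the right side is $0$ while the left side is generally not.

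The repair is to take absolute values entrywise first: $|\hbeta_\ell^2\halpha_q^2-(\beta^*_\ell\alpha^*_q)^2|\le|\hbeta_\ell^2-(\beta^*_\ell)^2|\,\halpha_q^2+(\beta^*_\ell)^2\,|\halpha_q^2-(\alpha^*_q)^2|$. Summing gives $\|\hbalpha\|_2^2\sum_\ell|\hbeta_\ell^2-(\beta^*_\ell)^2|+\|\bbeta^*\|_2^2\sum_q|\halpha_q^2-(\alpha^*_q)^2|$, which is the paper's \eqref{eq:var.t3}. The Cauchy--Schwarz bound you already used for the single sums then controls each of these sums by the same quantity you were plugging in. So your rate $\sigma^2_+(\br\Phi+\Phi^2)(\br+\Phi^2)/\min\{|\nAR|,|\nAC|\}^2$ survives unchanged.

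A caution on the step you flagged as the main obstacle. Transposing the \PCR~argument does not literally give $\|\hbalpha-\balpha^*\|_2\lesssim\Phi/\sqrt{|\nAC|}$. The Hoeffding term now involves $\|\bbz\|_2\le\sqrt{|\nAR|}$, so the second term carries $|\nAR|^{1/4}$ where $\Lambda_\beta$ carries $|\nAC|^{1/4}$. This is the paper's $\Lambda_\alpha$, namely $\Phi$ with the roles of rows and columns swapped, divided by $\sqrt{|\nAC|}$. When $|\nAC|\gg|\nAR|$ (e.g.\ $|\nAC|\asymp|\nAR|^2$ with $\br,\sigma$ fixed), it exceeds $\Phi/\sqrt{|\nAC|}$. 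The paper makes the same simplification. The envelope $\Gamma$ is still correct, because every $\balpha$-error term carries a factor $1/|\nAC|$ rather than $1/\min\{|\nAR|,|\nAC|\}$. Establishing this needs a short explicit comparison using $c\sigma^2L\le\min\{|\nAR|,|\nAC|\}$, not an appeal to symmetry.

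Likewise, passing from $\Gamma/\upsilon^2=o(1)$ to $\hupsilon/\upsilon\xrightarrow{p}1$ tacitly requires $p_\sigma\to0$, in your argument as in the paper's.
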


\noindent \textbf{Rate implications.} 
Theorem~\ref{thm:var.est.asymp} states that $\hupsilon^2$ consistently estimates $\upsilon^2$, and therefore permits feasible studentization, once two requirements are met. First, the local variance estimates must satisfy the uniform bound in \eqref{eq:var.bound} over the variance entries that appear in $\upsilon^2$. Second, the variance-perturbation envelope $\Gamma$ must be negligible relative to $\upsilon^2$. 
The envelope $\Gamma$ has two sources. The term multiplied by $(\sigma^2_+ + \Lambda_\sigma)$ captures the effect of estimating the \PCR~row and column weights. The final term measures the direct contribution of estimating the local heteroskedastic variances. 

To interpret $\Gamma / \upsilon^2 = o(1)$, consider the same nondegenerate setting used after Theorem~\ref{thm:normality}. In the balanced regime $|\nAC| \asymp |\nAR|$, feasible studentization requires $\br = \tilde{o}(|\nAR|^{1/6})$, which is stronger than the sufficient condition for oracle normality. 
This reflects the additional need to estimate both representers and the local variance profile accurately on the variance scale. 

\medskip 
\noindent \textbf{An entrywise confidence interval.} For any significance level $\gamma \in (0,1)$, Theorem~\ref{thm:var.est.asymp} justifies the entrywise $(1-\gamma)100\%$ confidence interval for $A_{ij}$:
\begin{align}
	\texttt{CI}(\gamma) \coloneqq \left[ \hA_{ij} \pm z_{1-\gamma/2} \cdot \hupsilon \right], \label{eq:conf.iv} 
\end{align}
where $z_{1-\gamma/2}$ is the $(1-\gamma/2)$ quantile of the standard normal distribution.

\subsection{Estimating Heteroskedastic Noise} 
Theorem~\ref{thm:var.est.asymp} reduces feasible entrywise inference to estimating the variance entries that appear in the leading variance formula. We now provide such an estimation procedure.

\subsubsection{Moment Construction} \label{sec:b.strategy}
Theorem~\ref{thm:consistency} establishes recovery of the first moment $A_{ij}$. For variance estimation, we leverage the elementary identity $\sigma^2_{ij} = B_{ij} - A_{ij}^2$, where $B_{ij} \coloneqq \Ex[\tY^2_{ij}\mid\bu_i, \bv_j]$ is the conditional second moment. Let $\bSigma = [\sigma^2_{ij}]$ and $\bB = [B_{ij}]$. If both $\bSigma$ and $\bA$ are low-rank, then $\bB$ inherits a low-rank structure as well, since 
\begin{align}
	\rank(\bB) \le \rank(\bSigma) + \rank(\bA)^2. \label{eq:rankB}
\end{align} 
This observation suggests the following \SNN-based procedure. 

\begin{enumerate}

	\item {\bf First moment estimation}: Apply \SNN~to $\bY$ and clip the resulting estimate to $\hA^\clip_{ij} \in [-1,1]$. 
		
	\item {\bf Second moment estimation}: Apply \SNN~to the elementwise squared matrix $\bY \circ \bY$, obtaining the raw estimate $\hB_{ij}$ and its clipped counterpart $\hB^\clip_{ij} \in [0, 1 +\sigma^2]$, where $\sigma$ is the noise bound from Assumption~\ref{assump:subg}. Let $t_b$ denote the spectral threshold used for this second-moment regression. 

	\item {\bf Variance estimation}: Define   
	$\hsigma^2_{ij} \coloneqq \hB_{ij} -  \big( \hA^{\clip}_{ij} \big)^2$ and $\hsigma^{2, \clip}_{ij} \coloneqq \hB^\clip_{ij} - \big( \hA^{\clip}_{ij} \big)^2$.
	
\end{enumerate}
This moment-subtraction strategy is not specific to \SNN. In principle, it may be paired with any matrix completion method capable of estimating first and second moments under the relevant observation structure.

\subsubsection{Second-Moment Conditions} \label{sec:var.est.assump} 
We impose second-moment analogues of the assumptions used for the mean matrix. 
Write $B_{ij} \coloneqq \langle \btheta_i, \bvarphi_j \rangle$, where $\btheta_i, \bvarphi_j \in \Rb^\zeta$ are latent factors associated with row $i$ and column $j$, and $\zeta$ is compatible with the rank bound in \eqref{eq:rankB}. 

\begin{assumption} \label{assump:linear.span.B}  
Conditioned on $\Ec$ and given anchor sets $(\nAR, \nAC)$, let $\btheta_i \in {\normalfont \text{span}}(\{ \btheta_\ell: \ell \in \nAR \})$ and $\bvarphi_j \in {\normalfont \text{span}}(\{ \bvarphi_q: q \in \nAC \})$. 
\end{assumption}

\begin{assumption} \label{assump:spectra.B}
Given anchor sets $(\nAR, \nAC)$,  the condition number $\kappa$ of $\Ex[\bY_{\nAR, \nAC} \circ  \bY_{\nAR, \nAC} \mid \Ec]$ satisfies $\kappa^{-1} \ge c$, and $\| \Ex[\bY_{\nAR, \nAC} \circ  \bY_{\nAR, \nAC} \mid \Ec] \|_F^2 \ge c' \cdot | \nAR| \cdot |\nAC|$ for constants $c, c' > 0$.
\end{assumption} 

Assumption~\ref{assump:linear.span.B} is the second-moment counterpart of the row- and column-span conditions in Assumptions~\ref{assump:linear_span} and \ref{assump:subspace}. Assumption~\ref{assump:spectra.B}, analogous to Assumption~\ref{assump:spectra}, is the corresponding spectral condition: it requires the second-moment anchor block to have a sufficiently strong low-rank signal.

\subsubsection{Variance-Recovery Guarantees.} \label{sec:var.est.results}
We now state guarantees for heteroskedastic variance estimation in two regimes: bounded noise and general subgaussian noise.

\begin{theorem} \label{thm:var.est}
Let the setup of Theorem~\ref{thm:consistency} hold, and suppose Assumptions~\ref{assump:linear.span.B} and \ref{assump:spectra.B} also hold. 
Let $t_b = \bar{\lambda} \coloneqq \rank(\Ex[\bY_{\nAR, \nAC} \circ \bY_{\nAR, \nAC}\mid\Ec])$, where $\bar{\lambda} \le \btau + \br^2$ and $\btau \coloneqq \rank(\bSigma_{\nAR, \nAC})$.
\begin{enumerate} [label=(\roman*)]
\item \underline{Bounded noise.} Suppose the noise entries $\varepsilon_{ij}$ are bounded. Then, conditional on $\Ec$, with probability at least $1 - \Oc(|\nARC|^{-10})$ 
\begin{align} 
	\left| \hsigma^{2,\eclip}_{ij} - \sigma^2_{ij} \right|
	 \lesssim 
	 \frac{\left(1+ \sigma^2\right) \left(\br^2 + \btau \right)}{ \min \left\{ \sqrt{|\nAR|}, \sqrt{|\nAC|} \right\} } 
     	 + \frac{(1 + \sqrt{\sigma})(\br + \sqrt{\btau}) L^{1/4}}{\min\left\{ |\nAC|, |\nAR|^2 / \left((1+\sigma^2) L \right) \right\}^{1/4}}.
	  \label{eq:var.est.bounded}
\end{align} 

\item \underline{General subgaussian noise.} Suppose the noise entries $\varepsilon_{ij}$ satisfy Assumption~\ref{assump:subg} and $|\hB_{ij} - B_{ij}\mid\le M$ for some deterministic $M > 0$. Then 
\begin{align}
	\left| \Ex\left[ \hsigma^2_{ij} - \sigma^2_{ij} \mid \Ec \right] \right|
	  &\lesssim \frac{\left(1 + \sigma^4 \right) \left(\br^2 + \btau \right) }{\min \left\{ \sqrt{|\nAR|} L^{-3/2}, \sqrt{|\nAC|} \right\}}
	+ \frac{ \left(1 + \sigma^6 \right) \left(\br^2 + \btau \right)^{3/2}}{\min \left\{ |\nAR|   L^{-3}, |\nAC| \right\}} 
    + \frac{M}{|\nARC|^{10}}.
	\label{eq:var.est.general} 
\end{align}
\end{enumerate}
\end{theorem}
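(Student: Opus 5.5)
The plan is to split $\hsigma^2_{ij} - \sigma^2_{ij} = (\hB_{ij} - B_{ij}) - \big((\hA^{\clip}_{ij})^2 - A_{ij}^2\big)$ and control each piece. Since both $\hA^\clip_{ij}$ and $A_{ij}$ lie in $[-1,1]$,
\[
\big|(\hA^\clip_{ij})^2 - A_{ij}^2\big| \le 2\big|\hA^\clip_{ij} - A_{ij}\big| \le 2\big|\hA_{ij} - A_{ij}\big| \lesssim \Phi
\]
on the event of Theorem~\ref{thm:consistency}. This term is dominated by the right-hand sides of \eqref{eq:var.est.bounded} and \eqref{eq:var.est.general}, because $\br \le \br^2 + \btau$ and $\sqrt{\sigma\br} \lesssim (1+\sqrt{\sigma})(\br + \sqrt{\btau})$. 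The substantive work is therefore the second-moment error $\hB_{ij} - B_{ij}$. The idea is to rerun the proof of Theorem~\ref{thm:consistency} with $\bY$ replaced by $\bY \circ \bY$. The same rerun also gives the expectation bound for the first-moment piece in part (ii).

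For the second-moment \SNN, write $Y_{\ell q}^2 = B_{\ell q} + \xi_{\ell q}$ with $\xi_{\ell q} \coloneqq Y_{\ell q}^2 - \Ex[Y_{\ell q}^2 \mid \Ec] = 2A_{\ell q}\varepsilon_{\ell q} + \varepsilon_{\ell q}^2 - \sigma^2_{\ell q}$. Conditional on $\Ec$, the $\xi_{\ell q}$ are independent and mean zero, so the structural inputs of Theorem~\ref{thm:consistency} are in place. Proposition~\ref{thm:identification} holds for $\bB$ via Assumption~\ref{assump:linear.span.B}, which supplies both span inclusions. The spectral condition holds via Assumption~\ref{assump:spectra.B}. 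The rank is $\bar\lambda \le \btau + \br^2$ by \eqref{eq:rankB}, and the signal is bounded by $1+\sigma^2$.

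In case (i), bounded noise makes $\xi_{\ell q}$ bounded, hence subgaussian with scale $\lesssim 1+\sigma^2$. Theorem~\ref{thm:consistency} then applies verbatim with $\sigma \mapsto 1+\sigma^2$ and $\br \mapsto \bar\lambda$, after rescaling the signal by $1+\sigma^2$ to match Assumption~\ref{assump:bounded}. This gives the stated bound: the factor $\bar\lambda \lesssim \br^2+\btau$ gives the first term, and $\sqrt{\bar\lambda} \le \br + \sqrt{\btau}$ gives the second. Clipping $\hB_{ij}$ to $[0,1+\sigma^2]$ only reduces the error, and a union bound with the mean event keeps the probability at $1 - \Oc(|\nARC|^{-10})$.

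Case (ii) is the main obstacle. For general subgaussian noise, $\varepsilon^2$ is only sub-exponential, so the operator-norm and projection concentration steps in Theorem~\ref{thm:consistency} must be redone with sub-exponential (Bernstein-type) matrix and vector bounds. These bounds carry extra $\log$ factors, which is where $\sqrt{|\nAR|}L^{-3/2}$ and $|\nAR|L^{-3}$ come from. The noise scale becomes $\|\xi\|_{\psi_1} \lesssim \sigma + \sigma^2$, which produces the $(1+\sigma^4)$ and $(1+\sigma^6)$ prefactors.

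In place of a high-probability bound, I would target the conditional bias. On the good event $\Omega$, I expand $\hB_{ij} - B_{ij}$ as in the decomposition $G + R$ of Theorem~\ref{thm:normality}. The leading linear term $G$ is a weighted sum of the mean-zero $\xi$'s, so it has zero conditional mean. Its contribution to the bias therefore comes only from $\Ex[G\,\mathbf{1}_{\Omega^c}]$, which is tiny by Cauchy--Schwarz. The bias is thus governed by the expected remainder $\Ex[R\,\mathbf{1}_\Omega]$. This is bounded by the envelope $\Psi$ from \eqref{eq:psi}, computed with the sub-exponential rates, which gives the product-form second term $(\br^2+\btau)^{3/2}/\min\{\cdot\}$. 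The first term arises from the first-order piece of the PCR weight error.

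Off $\Omega$, the deterministic bound $|\hB_{ij} - B_{ij}| \le M$ gives a contribution of $M\,\Pb(\Omega^c) \lesssim M|\nARC|^{-10}$. For the first-moment piece, $(\hA^\clip_{ij})^2 - A_{ij}^2$ is bounded by $2$ everywhere. Its expectation is therefore at most $\Ex[\Phi\text{-type error}\,\mathbf{1}_\Omega] + 2\Pb(\Omega^c)$, which is absorbed into the first displayed term. The delicate bookkeeping is showing that the $\xi$-dependent randomness in the estimated subspace does not bias $G$ at first order. I would handle this via the same leave-structure/independence between $\bY_{\nAR,j}$ and the anchor block used in Theorem~\ref{thm:normality}.
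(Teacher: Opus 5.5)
\textbf{Part (i)} follows the paper's argument. You split off $|(\hA^{\clip}_{ij})^2-A_{ij}^2|\le 2|\hA^{\clip}_{ij}-A_{ij}|\lesssim\Phi$ and rerun Theorem~\ref{thm:consistency} on $\bY\circ\bY$ with $\sigma\mapsto 1+\sigma^2$ and $\br\mapsto\bar\lambda$. Absorbing $\Phi$ into \eqref{eq:var.est.bounded} is fine there.

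\textbf{Part (ii), second-moment piece: a different route.} The paper conditions on the event $\Gc$ that $\|\bX\circ\bX-\bB_{\nAR,\nAC}\|_{\txtop}$ is controlled (Lemma~\ref{lemma:dg}). It then bounds $\Ex[\hB_{ij}-B_{ij}\mid\Gc]$ by rerunning the consistency proof with $\psi_1$ noise, and charges $M\,\Pb(\Gc^c)$ to the complement. You instead apply the identity of Lemma~\ref{lemma:norm.decomp} to $\bB$, use $\Ex[G\mid\Ec]=0$, and reduce the bias to $\Ex[R\,\boldsymbol{1}_\Omega]$ plus a Cauchy--Schwarz term and $M\,\Pb(\Omega^c)$. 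This is legitimate. The identity is exact once Assumption~\ref{assump:linear.span.B} supplies both span conditions, and $G$ involves only the population representers, so it is exactly centered with no leave-out argument. It is also sharper: the bias of $\hB_{ij}$ comes out of product ($\Psi$-type) order rather than first order. The price is that you need sub-exponential, Bernstein-type versions of Lemmas~\ref{lemma:param.est.1} and \ref{lemma:norm.remainder}, not only of Theorem~\ref{thm:consistency}.

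\textbf{The gap: the first-moment piece in (ii).} You bound $\Ex|(\hA^{\clip}_{ij})^2-A_{ij}^2|$ by $\Phi+2\,\Pb(\Omega^c)$ and declare it absorbed into the first term of \eqref{eq:var.est.general}. In your scheme this is in fact the only possible source of that term, since the linear \PCR-error piece is exactly what $G+R$ cancels. It is not absorbed. $\Phi$ contains
$\sqrt{\sigma\br}\,L^{1/4}/\min\{|\nAC|,|\nAR|^2/(\sigma^2L)\}^{1/4}$.
For $|\nAR|\asymp|\nAC|\asymp N$ and fixed $\sigma,\br,\btau,M$, this is of order $L^{1/4}N^{-1/4}$. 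The right side of \eqref{eq:var.est.general}, by contrast, is $O(L^{3/2}N^{-1/2}+L^{3}N^{-1}+MN^{-20})$. Your comparison $\sqrt{\sigma\br}\lesssim(1+\sqrt{\sigma})(\br+\sqrt{\btau})$ matches numerators, not rates. It works for (i) only because \eqref{eq:var.est.bounded} carries the same quarter power. The paper's proof makes the same assertion, namely that the right side of \eqref{eq:var.est.general.7} dominates $\epsilon=\Phi$. So you have reproduced its step, but the step is not valid.

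\textbf{Repair.} Bound $\Ex|\hA^{\clip}_{ij}-A_{ij}|$ directly in expectation rather than through the high-probability bound.
\begin{enumerate}
\item The quarter-power term comes only from the high-probability deviations, of size $\sigma\sqrt{L|\nAC|}$, of the mean-zero linear terms $\langle\bhQ\bby,\bxi_y\rangle$ and $\langle\hbX^\top\bbeta^*,\bxi_y\rangle$ in the proof of Lemma~\ref{lemma:annoying}. These vanish in conditional expectation, and \eqref{eq:thm1.lem3.1} gives $\Ex[\langle\hbX^\top\bDelta_\beta,\bxi_y\rangle\mid\sigma(\Ec,\bX)]\le\sigma^2\br$.
\item Define the $\sigma(\Ec,\bX)$-measurable event $\Omega_X\coloneqq\{\|\bXi_X\|_{\txtop}\lesssim\sigma(\sqrt{|\nAR|}+\sqrt{|\nAC|}+\sqrt{L})\}$. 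On $\Omega_X$ the projection term in \eqref{eq:delta.k.1} is already deterministically controlled. Taking this conditional expectation in \eqref{eq:thm1.2} on $\Omega_X$ and applying Jensen gives $\sqrt{|\nAR|}\,\Ex[\|\bDelta_\beta\|_2\boldsymbol{1}_{\Omega_X}]\lesssim\sigma\br/\min\{\sqrt{|\nAR|},\sqrt{|\nAC|}\}$.
\item Similarly, $\Ex[|\langle\bxi_z,\hbbeta\rangle|\boldsymbol{1}_{\Omega_X}]\le\sigma\,\Ex[\|\hbbeta\|_2\boldsymbol{1}_{\Omega_X}]\lesssim\sigma\sqrt{\br/|\nAR|}$.
\item Combine these with $|\hA^{\clip}_{ij}-A_{ij}|\le\min\{|\hA_{ij}-A_{ij}|,2\}$ and $\Pb(\Omega_X^c)\lesssim|\nARC|^{-10}$.
\end{enumerate}
The resulting first-moment contribution is genuinely dominated by the first term of \eqref{eq:var.est.general}.
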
 

\noindent \textbf{Bounded noise.} 
When $\br^4 + \btau^2 = \tilde{o}( \min \{ |\nAR|, |\nAC| \})$, the clipped variance estimator is consistent. 
This result parallels Theorem~\ref{thm:consistency}: under bounded noise, the squared deviation $Y_{ij}^2 - B_{ij}$ remains subgaussian, allowing the techniques from Theorem~\ref{thm:consistency} to extend. 
In this regime, Theorem~\ref{thm:var.est}(i) can be used inside Theorem~\ref{thm:var.est.asymp} by taking $\Lambda_\sigma$ to be the right-hand side of \eqref{eq:var.est.bounded}. A union bound over the variance entries appearing in \eqref{eq:var.bound} then gives $p_\sigma \coloneqq \Oc(|\nARC|^{-9})$.

\medskip \noindent \textbf{General subgaussian noise.} In this broader regime, squared observations are sub-exponential rather than subgaussian. This changes the concentration behavior of the anchor block and prevents a direct high-probability analogue of the bounded-noise result. 
The theorem therefore establishes asymptotic unbiasedness, not consistency. In particular, it does not by itself provide the uniform variance estimation bound $\Lambda_\sigma$ required for Theorem~\ref{thm:var.est.asymp}.

A sufficient condition for asymptotic unbiasedness is $\br^2 + \btau = \tilde{o}( \min \{ \sqrt{|\nAR|}, \sqrt{|\nAC|} \} )$ and $M = o\left(|\nARC|^{10} \right)$. In contrast to the bounded noise setting, the balanced regime $|\nAC| \asymp |\nAR|$ is most favorable for the stated unbiasedness guarantee. 

\subsection{Scope of the Heteroskedastic Extension} 
Recent panel-factor inference results have primarily emphasized homoskedastic, or effectively homoskedastic, noise. In this respect, Theorems~\ref{thm:var.est.asymp} and \ref{thm:var.est} provide a useful heteroskedastic extension. They allow uncertainty to be calibrated at the same local, target-specific level as the point estimator itself, which is meaningful when some rows, columns, or anchor blocks are intrinsically noisier than others. This refinement should not be read as a universal replacement for existing inference methods: when homoskedasticity is credible, or when the observation design has exploitable global structure, specialized panel methods may remain simpler or sharper. Moreover, feasible heteroskedastic inference carries a real cost: it requires estimating the local variance profile in addition to the two representers, and the strongest guarantees currently available rely on bounded noise.

Theorem~\ref{thm:var.est} may also be useful beyond its role in Theorem~\ref{thm:var.est.asymp}. 
The variance matrix describes irreducible outcome uncertainty, identifies entries whose outcomes are intrinsically unstable, supports risk-sensitive decisions, and can serve as a nuisance input for other inferential procedures. From this perspective, Theorem~\ref{thm:var.est} addresses a distinct statistical problem: recovering a heteroskedastic variance matrix from a single partially observed matrix under MNAR sampling. Thus, beyond enabling studentized \SNN~intervals, the heteroskedastic extension may be serviceable whenever entry-specific uncertainty is itself a substantive object of interest.

\section{Simulation Studies} \label{sec:sims}
We conduct two simulation studies to evaluate the finite-sample performance of \SNN. The first examines the coverage and length of the feasible entrywise confidence intervals developed in Section~\ref{sec:var.est} under heteroskedastic noise. The second compares \SNN~with the panel-factor estimator of \cite{yan2024entrywiseinferencemissingpanel} in a staggered adoption setting specifically tailored to their method.

\subsection{Coverage Study}
We first evaluate the finite-sample performance of the feasible \SNN~confidence interval in \eqref{eq:conf.iv} under heteroskedastic noise. 

\subsubsection{Setup} For each matrix dimension $n \in \{25, 50, 100, 250, 500\}$, we generate an $n \times n$ outcome matrix. We designate the lower-right entry $(n,n)$ as the target, mask that entry, and observe all remaining entries. The signal matrix is deterministic for each $n$ and is constructed from smooth Fourier row and column features. Define $\rho_i \coloneqq (i-1/2)/n$, $\theta_j \coloneqq (j-1/2)/n+0.173$, and
$f(t) \coloneqq \big[1,\sqrt{2}\sin(2\pi t),\sqrt{2}\cos(2\pi t),\sqrt{2}\sin(4\pi t),\sqrt{2}\cos(4\pi t)\big]^\top$. 
Let the diagonal matrix of Fourier coefficients be $\bGamma \coloneqq \operatorname{diag}(0.50,0.38,0.30,0.23,0.17)$ and define the unscaled signal by
$A^{\mathrm{raw}}_{ij} \coloneqq f(\rho_i)^\top \bGamma f(\theta_j)$.
The coefficients place the greatest weight on the intercept and first harmonic, with smaller weights on the second harmonic. The phase shift in $\theta_j$ distinguishes the row and column factors while retaining the same smooth Fourier basis. We then multiply the entire matrix by a common scaling constant so that $|A_{ij}| \le 0.8$ for every $(i,j)$. By construction, $\rank(\bA) = 5$, and the target row and target column belong to the spans of the corresponding observed rows and columns.


For each observed entry, the outcome is generated as $Y_{ij} = A_{ij} + \varepsilon_{ij}$. We consider two noise distributions. In the Gaussian setting, $\varepsilon_{ij} \sim \Nc(0, \sigma^2_{ij})$, whereas in the bounded noise setting, $\varepsilon_{ij} \sim \texttt{Uniform}\big[-\sqrt{3} \sigma_{ij}, \sqrt{3} \sigma_{ij} \big]$. Both distributions have variance $\sigma^2_{ij}$. The heteroskedastic variance matrix is generated as a bounded, positive, low-rank Fourier surface and normalized to satisfy $(1/n^2) \sum_{i=1}^n \sum_{j=1}^n \sigma^2_{ij} = \sigma^2$, where $\sigma = 0.3$. Thus, the two experiments share the same entrywise variance profile and differ only in the shape of the noise distribution. 

For each matrix dimension and noise distribution, we conduct 100 independent replications. \SNN~is supplied with the oracle signal rank of five so that the experiment isolates the performance of the inferential procedure rather than the accuracy of rank selection. In each replication, we construct the nominal 90\% confidence interval $\big[ \hA_{nn} \pm z_{0.95} \hupsilon \big]$, where $\hupsilon$ is obtained from the plug-in variance estimator in \eqref{eq:hvar.asymp}. Its entrywise noise-variance inputs are estimated using the moment construction in Section~\ref{sec:b.strategy}, which applies \SNN~to both the observed outcome matrix and its entrywise square. 

\subsubsection{Results} Figure~\ref{fig:inference} reports three measures of finite-sample performance. Panel (a) shows the RMSE of the estimated leading variance, Panel (b) shows empirical coverage, and Panel (c) shows average interval length. The variance estimation error declines sharply as $n$ grows. At the same time, empirical coverage remains close to the nominal 90\% level under both noise distributions, while the the average interval length steadily decreases. Taken together, these results indicate that increasing the local sample size improves precision without producing visible undercoverage. 

The uniform noise experiment lies within the bounded-noise setting of Theorem~\ref{thm:var.est}(i), and its results are consistent with the feasible inference guarantee in Theorem~\ref{thm:var.est.asymp}. The Gaussian experiment exhibits similar behavior but lies outside the scope of the current feasible inference theory. The Gaussian results therefore provide empirical evidence that the procedure may perform well beyond the bounded noise regime, rather than a formal verification of coverage in that broader setting.

\begin{figure}[t!]
	\centering 
	\begin{subfigure}[b]{0.32\textwidth}
		\centering 
		\includegraphics[width=\linewidth]
		{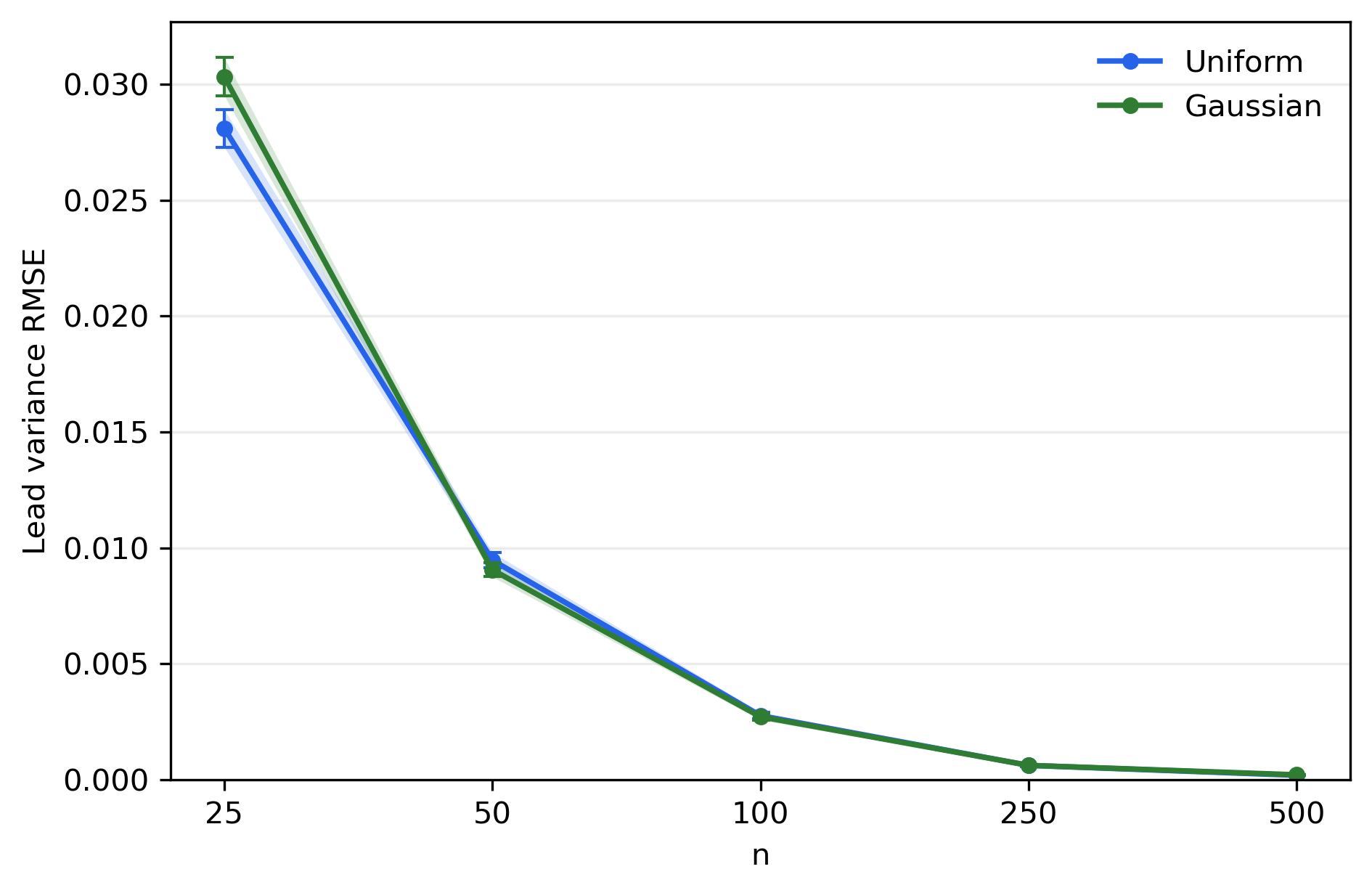}
		\caption{Lead variance estimation.} 
		\label{fig:lead.var} 
	\end{subfigure} 
	\begin{subfigure}[b]{0.32\textwidth}
		\centering 
		\includegraphics[width=\linewidth]
		{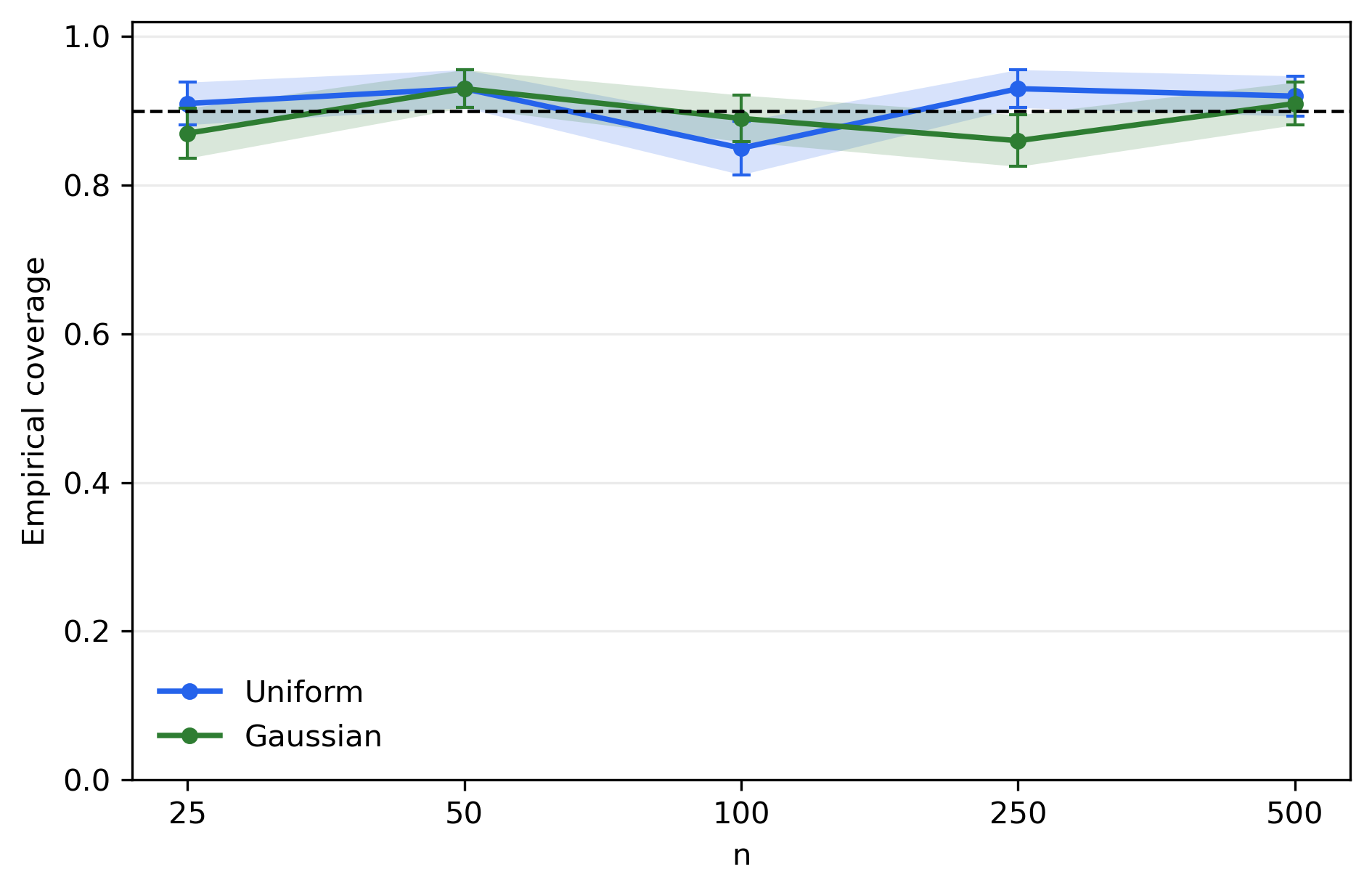}
		\caption{Empirical coverage.} 
		\label{fig:coverage} 
	\end{subfigure}
	\begin{subfigure}[b]{0.32\textwidth}
		\centering 
		\includegraphics[width=\linewidth]
		{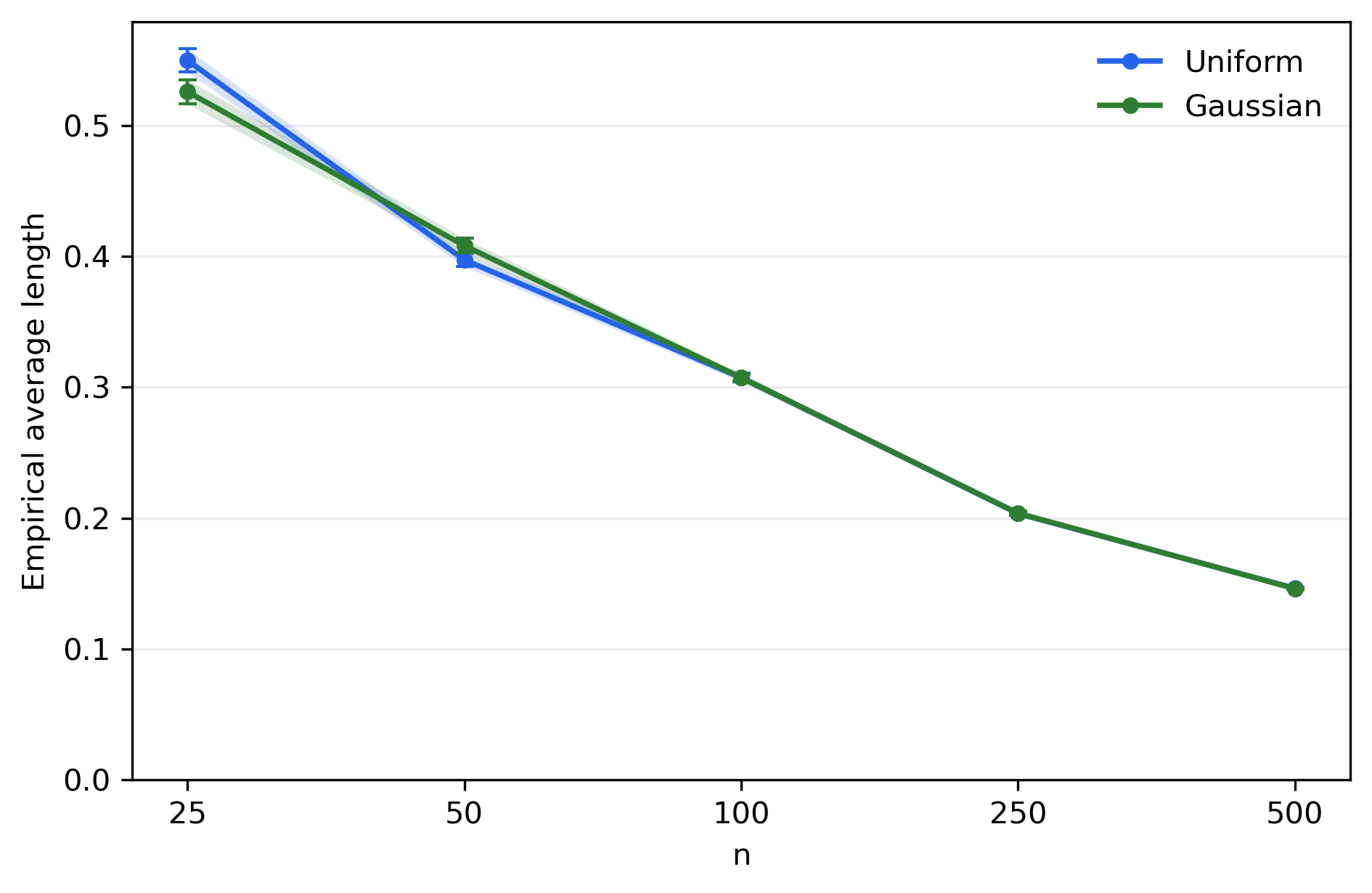}
		\caption{Average interval length.} 
		\label{fig:length} 
	\end{subfigure}
	\caption{Finite-sample performance of feasible \SNN~inference under heteroskedastic noise: (a) RMSE of the estimated leading variance, (b) empirical coverage of nominal 90\% confidence intervals, and (c) average confidence-interval length.}
	\label{fig:inference} 
\end{figure}

\subsection{Comparison with Related Panel-Factor Work} 
We next compare \SNN~with the estimator of \cite{yan2024entrywiseinferencemissingpanel}, which we refer to as the \YW~estimator. We reproduce their ``Scaling with Rank'' experiment from Section 5.2 of their paper, which uses a staggered adoption observation pattern. 

\subsubsection{Setup} The simulation uses a $500 \times 500$ matrix, targets the final entry, and varies the signal rank $r$ from 1 to 30. Following \cite{yan2024entrywiseinferencemissingpanel}, we consider their two data-generating cases. Both estimators are supplied with the oracle rank $r$, so the comparison isolates estimation performance rather than rank-selection accuracy. We refer the reader to \cite{yan2024entrywiseinferencemissingpanel} for the remaining details of the simulation design. 

\subsubsection{Results} Figure~\ref{fig:yw_snn} reports the MSEs of each estimator as a function of the signal rank. 
In Case 1, the errors of \SNN~and the \YW~estimator are closely aligned across the full range of ranks. In Case 2, \SNN~has modestly higher MSE. This difference is unsurprising: the simulated observation pattern has exactly the staggered adoption structure for which the \YW~estimator is designed. \SNN, by contrast, is intended for a broader class of observation masks and bases each estimate only on the information contained in a local anchor cross.

The comparison illustrates the trade-off between specialization and robustness. When a credible global structure, such as staggered adoption, is available, an estimator tailored to that structure may achieve better finite-sample performance. \SNN~instead remains applicable when the global observation pattern is irregular, provided that sufficiently informative local anchor crosses are available, as demonstrated in Section~\ref{sec:teaser}. \SNN~should therefore be viewed as complementary to, rather than a replacement for, panel-specific estimators.


\begin{figure}[t!]
	\centering 
	\begin{subfigure}[b]{0.35\textwidth}
		\centering 
		\includegraphics[width=\linewidth]
		{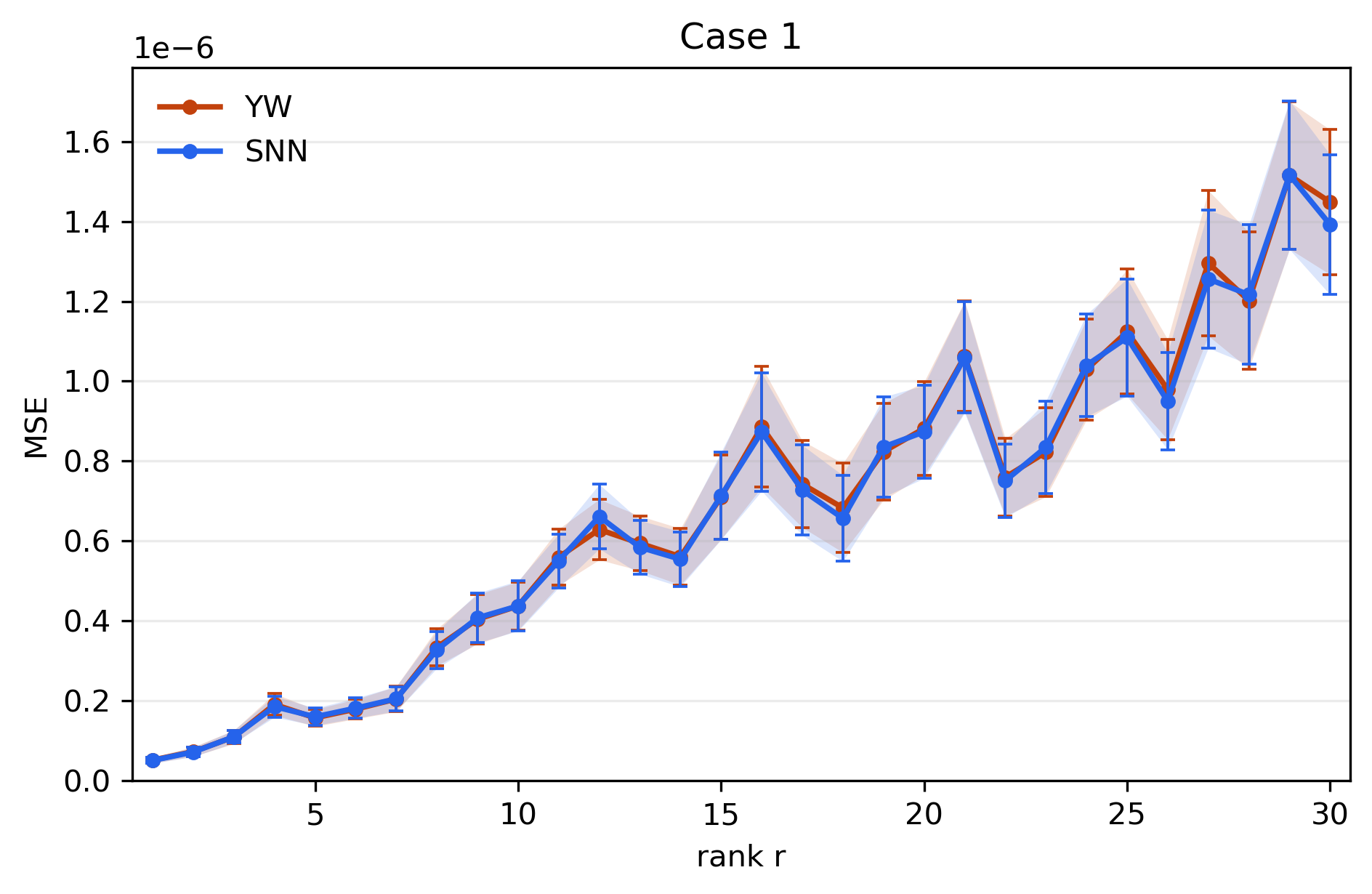}
		\caption{Case 1.} 
		\label{fig:stagger.1} 
	\end{subfigure}
	\quad 
	\begin{subfigure}[b]{0.35\textwidth}
		\centering 
		\includegraphics[width=\linewidth]
		{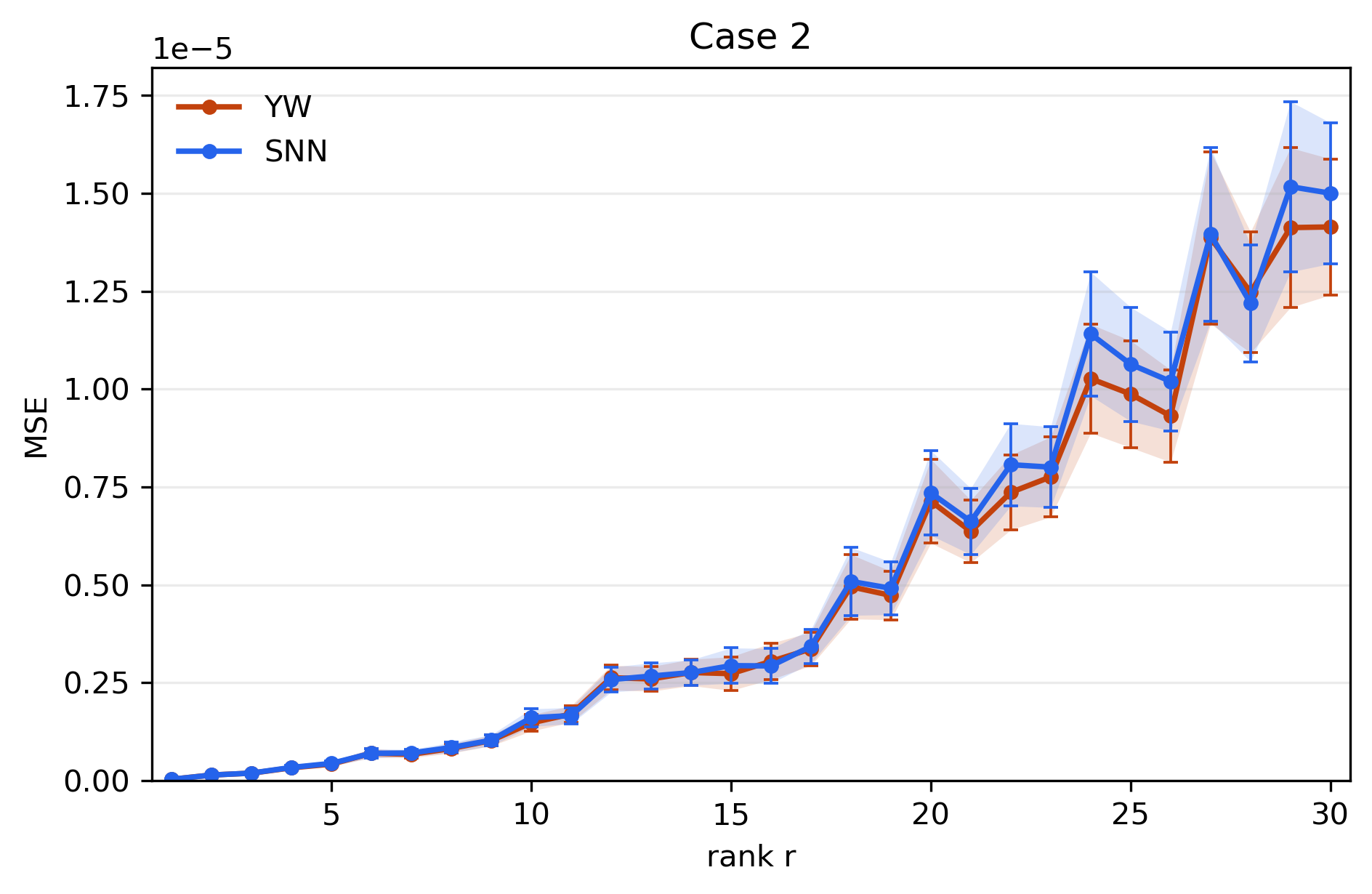}
		\caption{Case 2.} 
		\label{fig:stagger.2} 
	\end{subfigure} 
	\caption{MSE as a function of signal rank for \SNN~and the estimator of \cite{yan2024entrywiseinferencemissingpanel}: (a) Case 1 and (b) Case 2.}
	\label{fig:yw_snn} 
\end{figure}

\section{Conclusion} \label{sec:conclusion}
This article develops a causal framework for matrix completion under MNAR sampling. 
The framework draws on ideas from synthetic controls and panel data, but targets a different regime from much of the existing literature. Unlike standard MNAR matrix completion approaches, it does not require positivity or independent observation indicators; an entry may have zero probability of being observed, and the observation pattern may be dependent across entries. Unlike classical panel-factor methods, it does not require a prescribed global geometry such as staggered adoption, tall-wide panels, or rigid block sparsity. Instead, identification is local: a target entry can be recovered whenever the observed mask contains a sufficiently informative anchor structure around that entry.

To implement this principle, we introduce \SNN, an estimator that combines the locality of nearest-neighbor methods with the reweighting logic of synthetic controls. We establish finite-sample entrywise guarantees for mean recovery, derive asymptotic normality under heteroskedastic noise, and develop feasible studentized inference. We also extend the same local-anchor logic to heteroskedastic variance estimation, proving consistency under bounded noise and asymptotic unbiasedness under general subgaussian noise. The simulation evidence is consistent with these theoretical results and illustrates the robustness of \SNN~across missingness mechanisms and observation patterns. 

Looking ahead, \SNN’s position between classical matrix completion and panel-factor methods suggests two related directions. First, a sharper comparison across MCAR sampling, local MNAR designs, and structured panel geometries would clarify when local anchor methods are preferable to procedures tailored to global random sampling or panel-specific observation patterns. Second, it would be valuable to develop hybrid procedures that exploit global structure when it is reliable while retaining the local anchor logic that makes \SNN~robust to irregular MNAR patterns. For example, global factor estimates could stabilize anchor discovery and spectral denoising, whereas local anchor crosses could deliver target-specific recovery and inference in regions where a global observation model is weak or misspecified. Such hybrids may provide a more adaptive bridge between classical matrix completion, panel-factor methods, and the local MNAR framework developed here.

\bibliographystyle{alpha}
\bibliography{bib}

\clearpage
\appendix
\renewcommand{\thepage}{S\arabic{page}}
\renewcommand{\theequation}{S\arabic{equation}}
\renewcommand{\theassumption}{S\arabic{assumption}}
\renewcommand{\thelemma}{S\arabic{lemma}}
\setcounter{equation}{0}
\setcounter{section}{0}
\setcounter{assumption}{0}
\setcounter{lemma}{0}
\setcounter{page}{1}

\makeatletter
\def\@seccntformat#1{\@ifundefined{#1@cntformat}%
   {\csname the#1\endcsname\space}%
   {\csname #1@cntformat\endcsname}}
\makeatother
\renewcommand{\thesection}{S\arabic{section}}


\begin{center}
    \LARGE \scshape Appendix
\end{center}
\vspace{12pt}

%
Appendix~\ref{sec:teaser.details} details the simulation setup of Section~\ref{sec:teaser},
Appendix~\ref{sec:ensemble} presents an ensemble variant of \SNN,
Appendix~\ref{sec:max_biclique} describes how anchor sets can be identified by reducing the problem to maximal-biclique discovery in a bipartite graph,
and Appendix~\ref{sec:add.params} discusses related causal estimands.
The remainder of the supplementary material establishes the paper’s formal results.

\section{Details of Section~\ref{sec:teaser}} \label{sec:teaser.details} 
This section specifies the simulation designs used in Section~\ref{sec:teaser}. 
We focus here on the two MNAR mechanisms: one that preserves the conventional assumptions of positivity and independent sampling, and one that deliberately violates both. 
To isolate the role of the missingness mechanism, we use a noiseless design and set $\bE = \bzero$. 
For clarity, we frame our simulations in the context of a recommender system, where rows represent users and columns represent movies. 

\subsection{Conventional MNAR: Preserving Positivity and Independence} \label{sec:conventional}
This design captures self-selection in ratings: users are more likely to rate movies that they strongly liked or disliked and less likely to rate movies toward which they were indifferent. 
To create a dense source of information while retaining value-dependent selection, we introduce two distinguished groups: 
(i) {\em Core users} represent movie enthusiasts or professional critics who rate many movies; 
(ii) {\em Core movies} represent widely viewed, culturally prominent films (e.g., {\em Star Wars}) or the small set of movies that a streaming platform may present during onboarding to elicit a new user’s preferences. The resulting observation pattern preserves both positivity and independent sampling.

We fix $m = 100$, $n=100$, and $r = 5$. 
The true ratings matrix is $\bA = \bU \bV^\top$, where $\bU \in \Rb^{m \times r}$ and $\bV \in \Rb^{n \times r}$ contain latent user and movie factors. 
To make the recommender-system interpretation transparent, users and movies are first assigned balanced latent genre labels in $[r]$; the corresponding coordinate of each factor vector receives a common positive shift, with smaller independent Gaussian perturbations in all coordinates. 
The entries of $\bU \bV^\top$ are then affinely scaled to lie in the interval $[1,5]$, reflecting standard rating systems. 

Let $m_{\core}$ and $n_{\core}$ denote the number of core users and core movies. 
We partition the entries into four cohorts: 
$\mathcal{C}_{\core} \coloneqq \{(i,j): i \le m_{\core}, j \le n_{\core}\}$ as the subset of core users and core movies; 
$\mathcal{C}_{\tuser} \coloneqq \{(i,j): i \le m_{\core}, j > n_{\core}\}$ as the subset of core users and standard movies;
$\mathcal{C}_{\tmovie} \coloneqq \{(i,j): i > m_{\core}, j \le n_{\core}\}$ as the subset of standard users and core movies; and
$\mathcal{C}_{\standard} \coloneqq \{(i,j): i > m_{\core}, j > n_{\core}\}$ as the subset of standard users and standard movies. 
In the simulations, we set $m_{\core}=n_{\core}=25$. 

For each entry $(i,j)$, the observation probability $p_{ij}$ is larger when the underlying rating is more extreme. 
Specifically, define the rating-extremeness score
$
s_{ij} \coloneqq 0.15 + |A_{ij}-3|/2.
$
Ratings farther from the neutral midpoint of three therefore receive larger scores. 
Within each cohort $\mathcal{C}$, we set
$
p_{ij} \coloneqq \min\{\max\{\kappa_{\mathcal{C}} s_{ij}, 0.01\}, 0.98\},
$
where the cohort-specific constant $\kappa_{\mathcal{C}}$ is chosen so that the average observation probability within that cohort matches a target rate. 
The target rates are $98\%$ for $\mathcal{C}_{\core}$, $75\%$ for both $\mathcal{C}_{\tuser}$ and $\mathcal{C}_{\tmovie}$, and for $2\%$ for $\mathcal{C}_{\standard}$. These choices produce an overall observation rate of approximately $35\%$.

Conditional on the resulting propensity matrix $\bP = [p_{ij}]$, the observation indicators are drawn independently: $D_{ij} \sim \texttt{Bernoulli}(p_{ij})$. Because every $p_{ij}$ is bounded below by $0.01$, the design satisfies positivity. Moreover, because the indicators are sampled independently conditional on $\bP$, it also satisfies the conventional independence assumption. We generate $\bA$ and $\bP$ once and then conduct 10 independent trials. Across trials, the only source of randomness is the realization of the observation mask $\bD$.

\subsection{Panel-Based MNAR: Violating Positivity and Independence}
\label{sec:panel.mnar}
We next consider a more extreme selection mechanism that explicitly violates the positivity and independence assumptions commonly imposed in MNAR matrix completion. As in the preceding design, a set of core movies is rated by most users. Outside this core, however, a user rates a movie only when it belongs to that user’s favorite genre.
This rule creates structured regions in which the observation probability is exactly zero. It therefore violates positivity. Moreover, many observation indicators are jointly determined by the same user and movie genre labels, producing strong dependence across entries. 

We generate $\bA \in \Rb^{100 \times 100}$ using the same latent-factor construction as in Section~\ref{sec:conventional}, with $r = \rank(\bA) = 5$. 
The five latent dimensions are interpreted as movie genres. In this view, $U_{ik}$ quantifies user $i$'s affinity for genre $k$, and $V_{jk}$ indicates the extent to which movie $j$ represents genre $k$.
We use $n_{\core}=21$ core movies. This value, combined with the genre-matching rule described below, produces an overall observation rate close to 35\%. In each replication, the identities of the core movies are selected and, for notational convenience, indexed first. Thus, core movies satisfy $j \le n_{\core}$. 

Every core movie is observed independently by each user: $D_{ij} \sim \texttt{Bernoulli}(0.93)$ for $i \in [m]$ and $j \le n_{\core}$. This dense core provides anchor information for \SNN~while retaining enough missingness to make the recovery problem nontrivial.
For the remaining movies, define user $i$'s favorite genre as $k^{\flat}(i) = \argmax_{k \in [r]} U_{ik}$, 
and assign movie $j$ to the genre $k^\sharp(j) = \arg\max_{k \in [r]} V_{jk}$. 
For each non-core movie $j > n_{\core}$, the observation indicator is $D_{ij} = \boldsymbol{1} \big\{k^\flat(i) = k^\sharp(j) \big\}$. 
Thus, a user always rates a non-core movie from the user’s favorite genre and never rates a non-core movie from another genre. Entries associated with mismatched genres consequently have zero probability of observation, while entries associated with matched genres are deterministically observed. 

We generate $\bA$ once and conduct 10 trials. In each trial, we resample the set of core movies and the Bernoulli observations within the core block. Conditional on the fixed user and movie genre labels and the selected core set, the observations outside the core follow the deterministic genre-matching rule.

\subsection{Estimator Implementation and Evaluation}
Across the three missingness mechanisms, we compare \SNN~with modified \USVT~and de-biased \softimpute.
The modified \USVT~estimator first estimates the observation-propensity matrix by applying \USVT~to the mask and then rescales the de-noised observed-rating matrix by these estimated propensities.
The de-biased \softimpute~estimator follows the inverse-propensity weighting approach of \cite{ma2019missing}; its regularization parameter is selected by 5-fold cross-validation. 
For \SNN, anchor blocks are found by spectral biclustering, with a maximal-biclique search (see Appendix~\ref{sec:max_biclique}) used whenever the spectral block is insufficient.
The SNN spectral rank is selected by five-fold cross-validation. 
All estimators are clipped to the rating scale $[1,5]$.
We report RMSE over the missing entries and summarize variability over the 10 independently sampled masks.


\section{Ensemble \SNN} \label{sec:ensemble}
The main theory considers a single valid anchor cross for each target entry. In practice, the observation mask may admit several such crosses. Combining them can reduce sensitivity to the particular anchor set selected, although adding weak or highly redundant blocks need not improve performance. This section describes an equal-weight ensemble, gives a practical rule for choosing the number of included blocks, and briefly records the resulting statistical implications. 

\subsection{Algorithmic Extension}
Let $K_{\max} \ge 1$ denote the number of candidate anchor blocks retained after the feasibility screening described below. 
See Figure~\ref{fig:snn} for a visualization of this extension. 

\begin{enumerate} 

	\item {\em Anchor set discovery}: Obtain candidate pairs $\{(\nAR_k, \nAC_k): k \in [K_{\max}]\}$ such that each pair induces a local array in which every entry other than the target $(i,j)$ is observed.
	
	\item {\em Local estimation}: For each $k$, apply \SNN~to the corresponding local array, allowing the spectral threshold $t_k$ to vary across blocks, and obtain the local estimate $\hA^{(k)}_{ij}$. 
		
	\item {\em Global estimation}: For $1 \le K \le K_{\max}$, define the $K$-block ensemble by $\hA^{[K]}_{ij} \coloneqq \frac{1}{K} \sum_{k=1}^K \hA^{(k)}_{ij}$. 

\end{enumerate} 
The candidate blocks need not be disjoint and may have different dimensions or effective ranks. The theory continues to require that every included block satisfy the relevant row- and column-span conditions and contain a sufficiently strong low-rank signal. Because these population conditions cannot be verified directly, the empirical screening and cross-validation procedure below should be viewed as diagnostics rather than formal tests of the assumptions.

\begin{figure} [!t]
	\centering 
		\includegraphics[width=0.9\linewidth]
		{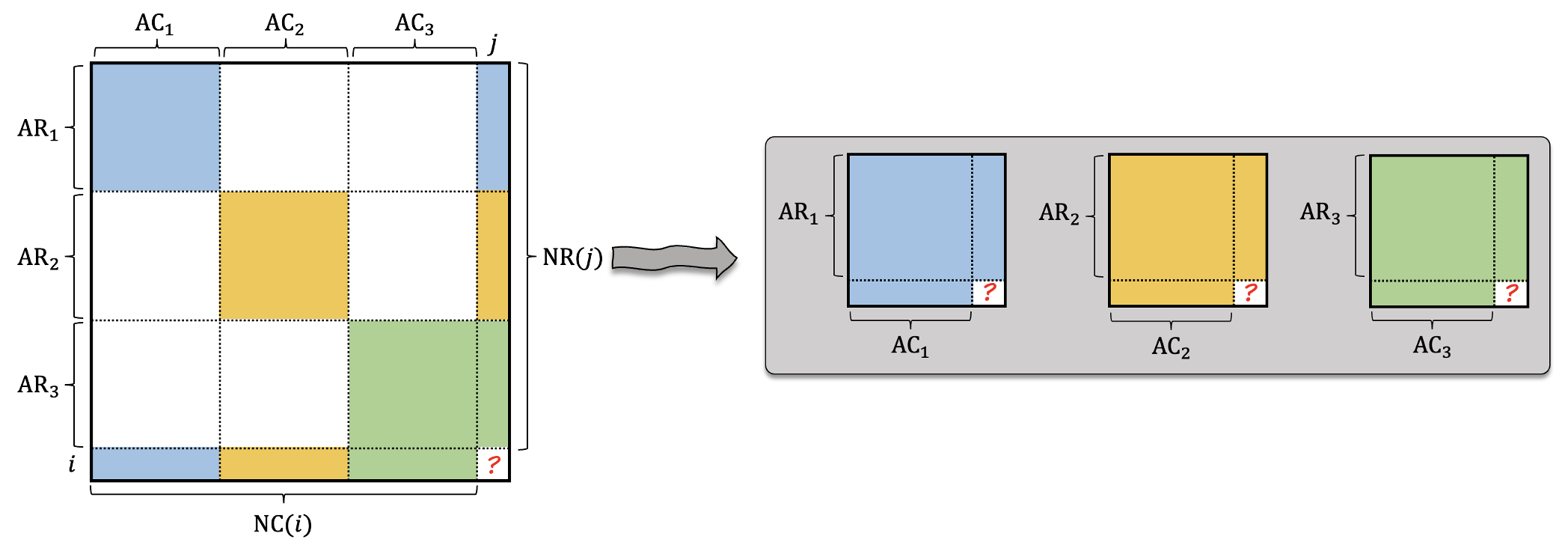}
	\caption{Visualization of the observed matrix $\bY$, highlighting the neighborhood sets $(\NR(j), \NC(i))$, and anchor sets $\{(\nAR_k, \nAC_k): k\in [3]\}$~associated with a specific $(i,j)$ pair. Each matching anchor pair is shown in a distinct color. Unobserved entries are depicted in white.}
	\label{fig:snn} 
\end{figure}

\subsection{Choosing \texorpdfstring{$K$}{K} in Practice}
To ensure reliable local estimation, each anchor set must be sufficiently rich in structure. 
Thus, more blocks are not automatically more preferable. A small or ill-conditioned block may introduce substantial estimation error, while two nearly identical blocks contribute highly correlated estimates and therefore little additional information. We recommend the following target-specific procedure.


\medskip \noindent \textbf{Feasibility screening.} 
We recommend first screening candidate blocks for feasibility. Each retained block should have dimensions comfortably larger than its selected rank, singular values separated from the estimated noise floor, and numerically stable \PCR~weights. For inference, one may additionally exclude blocks with highly concentrated influence weights or unstable variance estimates. Near-duplicate blocks can be down-ranked because they contribute little independent information.
The remaining blocks may be ordered using a fixed score favoring larger blocks, stronger spectral separation, and lower overlap. We then recommend pseudo-target cross-validation outlined next. 

\medskip \noindent \textbf{Pseudo-target cross-validation.} 
Construct a set $\Hc$ of observed entries that can be temporarily masked while leaving enough observations to form valid anchor crosses. To make the validation problem resemble the actual target, priority can be given to entries in the same row or column, or to entries with similar neighborhood sizes and anchor geometry. For each validation fold, treat its entries as missing, repeat the anchor discovery and estimation procedure without using their outcomes, and compute the corresponding ensemble predictions. For a candidate value of $K$, define 
\begin{align}
	\widehat{\texttt{CV}}(K) \coloneqq \frac{1}{|\Hc_K|} \sum_{h=(a,b) \in \Hc_K} \left(Y_{ab} - \hA_{ab}^{[K, -h]} \right)^2,
\end{align}
where $\hA_{ab}^{[K, -h]}$ is obtained with $Y_{ab}$ held out, and $\Hc_K$ contains the pseudo-targets for which at least $K$ admissible blocks remain. When more than $K$ blocks are available, an inner validation fold can order them by their individual predictive performance, with ties favoring stronger spectral separation and lower overlap. The outer fold then evaluates the ensemble size.

A parsimonious rule is to select the smallest $K$ whose cross-validation loss is within one standard error of the minimum. This avoids adding blocks whose apparent incremental improvement is indistinguishable from validation noise. The resulting $K$ may vary across target entries.

Because the outcomes are MNAR, validation on observed entries is not an unbiased estimate of prediction risk at a missing target. The procedure is therefore a heuristic. Matching pseudo-targets to the target’s row, column, and local observation geometry helps make the comparison more informative.

\medskip \noindent \textbf{Point estimation versus inference.} 
For prediction, let $K_{\text{est}}$ be selected by the preceding rule from all blocks that pass the point-estimation screen. For inference, it may be preferable to choose a possibly smaller $K_{\text{inf}}$ from the stricter inferentially admissible pool. To avoid unaccounted post-selection effects, $K_{\text{inf}}$ should ideally be pre-specified or chosen using tuning outcomes that are not reused in the final inferential estimator. A mask-only rule based on anchor dimensions and observation geometry is another conservative option.
When $K_{\text{est}} \neq K_{\text{inf}}$, the confidence interval must be centered at $\hA^{[K_{\text{inf}}]}_{ij}$ rather than at the point-prediction estimator $\hA^{[K_{\text{est}}]}_{ij}$. The two estimates may be reported separately, with the latter used for prediction and the former used for inference.

\subsection{Statistical Implications} 
For each included block, write the expansion from Theorem~\ref{thm:normality} as $\hA^{(k)}_{ij} - A_{ij} = G_k + R_k$. The ensemble then satisfies 
\begin{align}
	\hA^{[K]}_{ij} - A_{ij} = \bar{G}_K + \bar{R}_K,
	\quad
	\bar{G}_K \coloneqq \frac{1}{K} \sum_{k=1}^K G_k,
	\quad
	\bar{R}_K \coloneqq \frac{1}{K} \sum_{k=1}^K R_k. 
\end{align}
Let $\varphi_{k, ab}$ denote the influence coefficient of observation $(a,b)$ in $G_k$, defined as in \eqref{eq:influence} of Appendix~\ref{sec:add.params}. Conditional on $\Ec$, the leading variance is 
\begin{align}
	\upsilon^2_K = \frac{1}{K^2} \sum_{k=1}^K \sum_{\ell=1}^K \Cov(G_k, G_\ell \mid \Ec)
	= \sum_{a=1}^m \sum_{b=1}^n \left(\frac{1}{K} \sum_{k=1}^K \varphi_{k, ab} \right)^2 \sigma^2_{ab}. 
\end{align}
If the local noise supports are disjoint and the blockwise variances are comparable, averaging yields the familiar variance reduction of approximately $1/K$. With overlapping anchor crosses, however, shared observations generate covariance terms, and the effective number of independent blocks may be much smaller than $K$. Consequently, the variance should not be estimated by simply summing the entrywise variances as though the local estimates were independent. The corresponding plug-in estimator instead aggregates the estimated influence coefficients before squaring: 
\begin{align}
	\hupsilon^2_K \coloneqq \sum_{a=1}^m \sum_{b=1}^n \left(\frac{1}{K} \sum_{k=1}^K \hvarphi_{k, ab} \right)^2 \hsigma^2_{ab}. 
\end{align}
For fixed $K$, asymptotic normality is expected under the blockwise conditions of Theorem~\ref{thm:normality}, together with $\bar{R}_K = o_p(\upsilon_K)$ and an aggregate no-dominance condition ensuring that no single observation controls $\bar{G}_K$. If $K$ grows with the local dimensions, uniform control over the blockwise remainders and influence weights is additionally required. Equal weighting is a transparent default; covariance-aware weights could in principle improve efficiency, but would require reliably estimating the full cross-block covariance structure.

\section{Finding Anchor Sets via Maximal Biclique Search}  \label{sec:max_biclique} 
We recast the problem of identifying anchor sets as an instance of the classical graph-theoretic problem of finding maximal bicliques. 
To describe this reduction, we first describe the graph-theoretic framework. 

\medskip \noindent
\textbf{Graph theory background.}
Let $\Gc = (\Vc_1, \Vc_2, \Ec)$ denote a bipartite graph, where $(\Vc_1, \Vc_2)$ are the disjoint vertex sets, and $\Ec \in \Vc_1 \times \Vc_2$ is the edge set, i.e., $(v_1, v_2) \in \Ec$ if there is an edge between $v_1 \in \Vc_1$ and $v_2 \in \Vc_2$. 
Equivalently, $\Gc$ can be represented via a bipartite incidence (or adjacency) matrix $\bW \in \{0, 1\}^{|\Vc_1| \times |\Vc_2|}$, where $W_{ij} = 1$ if and only if $(v_i, v_j) \in \Ec$. 
A {\em biclique} is a complete bipartite subgraph $\Gc$, meaning there is an edge between every pair of nodes in the two vertex subsets. 
We denote such a subgraph as $\BCc \subseteq \Gc$. 

\medskip \noindent
\textbf{Reduction.} 
In our setting, the observation mask $\bD \in \{0,1\}^{m \times n}$ induces a bipartite graph, where rows and columns of $\bD$ correspond to vertex sets $\Vc_1$ and $\Vc_2$. 
We define the edge set $\Ec$ such that $(v_i, v_j) \in \Ec$ if and only if $D_{ij} = 1$. 
Hence, the incidence matrix $\bW$ of this graph is exactly equal to $\bD$. 

\medskip \noindent
\textbf{Algorithmic implementation.}
Given this reduction, we describe the practical implementation of the \anchor~algorithm stated in Algorithm~\ref{alg:anchors}. 
For the purposes of generality, we consider the ensemble \SNN~discussed in Appendix~\ref{sec:ensemble}. 
We assume access to two subroutines: 
\begin{enumerate} [label=(\alph*)]

\item $\texttt{createGraph}$: Converts an incidence matrix $\bW$ into a bipartite graph $\Gc$. This can be implemented using, for example, the Python package \texttt{NetworkX}. 

\item $\texttt{maxBiclique}$: Given a bipartite graph $\Gc$, returns a collection of $K$ {\em maximal} bicliques $\{ \BCc^{(k)}: k \in [K] \}$.

\end{enumerate} 
Each maximal biclique $\BCc^{(k)} = (\Vc^{(k)}_1, \Vc^{(k)}_2, \Ec^{(k)})$ defines a pair of anchor rows and columns: $(\AR_k, \AC_k) \leftarrow (\Vc^{(k)}_1, \Vc^{(k)}_2)$. 
At this juncture, it is crucial to distinguish between {\em maximal} and {\em maximum} bicliques, as they differ significantly in both definition and computational complexity for discovery. 

\medskip \noindent
\textbf{Maximal versus maximum bicliques.}
A maximal biclique is a complete bipartite subgraph that cannot be extended by including any additional vertex from either partition without losing its completeness. 
By contrast, a maximum biclique is the largest such subgraph according to a chosen metric---typically, the number of edges, total nodes, or size of one partition. 
While a graph may contain many maximal bicliques, there is usually only one (or a few) maximum biclique(s). 

Finding a single maximal biclique is computationally straightforward and can be achieved in polynomial time via greedy strategies. 
However, enumerating {\em all} maximal bicliques can be computationally expensive, as their number can grow exponentially in the worst case. 
Despite this, efficient enumeration algorithms perform well in practice on moderately sized or sparse graphs. 
On the other hand, the problem of identifying a maximum biclique is NP-hard, and even approximate solutions are difficult to compute. 
This computational gap makes maximal bicliques particularly attractive in applications that require structured submatrices but can tolerate suboptimal size. 

Consequently, many biclique-based methods, including Algorithm~\ref{alg:anchors}, rely on maximal rather than maximum bicliques for scalability and feasibility. 
For further discussion on biclique enumeration algorithms, we refer the interested reader to \cite{biclique2} and references therein. 


\begin{algorithm} [t!]
\caption{AnchorSubMatrix$(i,j)$}
\label{alg:anchors}
\begin{algorithmic}
\Require \texttt{createGraph}, \texttt{maxBiclique}
\Statex Find $\NR(j)$ and $\NR(i)$
\Statex Assign $\bW \gets [D_{\ell q} : (\ell,q)\in \NR(j)\times \NR(i)]$
\Statex Generate $\Gc \gets \texttt{createGraph}(\bW)$
\Statex Compute $\{\BCc^{(k)} = (\Vc_1^{(k)},\Vc_2^{(k)},\Ec^{(k)})\}_{k\in[K]} \gets \texttt{maxBiclique}(\Gc)$
\Ensure $\{(\AR_k, \AC_k) : k \in [K]\}$, where $(\AR_k,\AC_k)\gets (\Vc_1^{(k)}, \Vc_2^{(k)})$
\end{algorithmic}
\end{algorithm}

\section{Related Causal Estimands} \label{sec:add.params}
The main text focuses on entrywise recovery of $A_{ij}$, but the same asymptotic linear representation also accommodates pre-specified weighted averages and, more generally, linear functionals of the mean matrix. 
Let $\Tc \subseteq [m] \times [n]$ be a collection of target entries and let $\bomega = [\omega_{ij}]$ be deterministic weights. For an average, the weights are typically nonnegative and sum to one, although signed weights may also be used to form contrasts. Define
\begin{align}
	\theta_{\bomega} \coloneqq \sum_{(i,j) \in \Tc} \omega_{ij} A_{ij},
	\quad
	\htheta_{\bomega} \coloneqq \sum_{(i,j) \in \Tc} \omega_{ij} \hA_{ij},
\end{align} 
where each $\hA_{ij}$ is computed from a valid local anchor cross. Entrywise inference is recovered as the special case in which $\Tc = \{(i,j)\}$ and $\omega_{ij} = 1$. 

To describe inference for $\theta_{\bomega}$, index a target entry by $\tau = (i,j)$, and write $(\AR_\tau, \AC_\tau)$, $\bbeta^*_\tau$, and $\balpha^*_\tau$ for its local anchor sets and population parameters. Theorem~\ref{thm:normality} gives $\hA_\tau - A_\tau = G_\tau + R_\tau$, 
where the leading term can be written as 
\begin{align}
	G_\tau = \sum_{a=1}^m \sum_{b=1}^n \varphi_{\tau, ab} \cdot \varepsilon_{ab},
\end{align}
with influence coefficients
\begin{align}
	\varphi_{\tau, ab} \coloneqq \begin{cases}
		\beta^*_{\tau, a}, & (a,b) \in \AR_\tau \times \{j\}, 
		\\
		\alpha^*_{\tau, b}, & (a,b) \in \{i\} \times \AC_\tau,
		\\
		- \beta^*_{\tau, a} \alpha^*_{\tau, b}, & (a,b) \in \AR_\tau \times \AC_\tau,
		\\
		0, & \text{otherwise}. 
	\end{cases}
	\label{eq:influence}
\end{align}
Consequently, 
\begin{align}
	\htheta_{\bomega} - \theta_{\bomega} = \sum_{a=1}^m \sum_{b=1}^n \eta_{ab} \cdot \varepsilon_{ab} + R_{\bomega},
\end{align}
where
\begin{align}
	\eta_{ab} \coloneqq \sum_{\tau \in \Tc} \omega_\tau \varphi_{\tau, ab}, 
	\quad
	R_{\bomega} \coloneqq \sum_{\tau \in \Tc} \omega_\tau R_\tau. 
\end{align}
Under Assumption~\ref{assump:subg}, the conditional variance of the leading term is therefore
\begin{align}
	\upsilon^2_{\bomega} \coloneqq \sum_{a=1}^m \sum_{b=1}^n \eta^2_{ab} \cdot \sigma^2_{ab}. 
\end{align}
This expression automatically accounts for dependence among the \SNN~estimates caused by shared anchor rows, anchor columns, or anchor-block observations: the influence coefficients must be aggregated before they are squared. If the noise supports of the local anchor crosses are disjoint, the covariance terms vanish and 
\begin{align}
	\upsilon^2_{\bomega} = \sum_{\tau \in \Tc} \omega^2_\tau \upsilon^2_\tau,
\end{align}
where $\upsilon^2_\tau$ is the entrywise variance in Theorem~\ref{thm:normality}. With overlapping crosses, the variance generally also contains nonzero cross-covariance terms.

Under Gaussian noise, the leading term is conditionally Gaussian. Under general subgaussian noise, an aggregate analogue of condition \eqref{eq:clt}, together with a negligible weighted remainder, is sufficient:
\begin{align}
	\max_{a,b} \frac{|\eta_{ab}| \sigma_{ab}}{\upsilon_{\bomega}} = o(1),
	\quad
	\frac{R_{\bomega}}{\upsilon_{\bomega}} \xrightarrow{p} 0. 
\end{align}
Under these conditions,
\begin{align}
	\frac{\htheta_{\bomega} - \theta_{\bomega}}{\upsilon_{\bomega}} \rightsquigarrow \Nc(0,1). 
\end{align}
A natural feasible variance estimator aggregates the estimated influence coefficients in the same manner: 
\begin{align}
	\hupsilon^2_{\bomega} \coloneqq \sum_{a=1}^m \sum_{b=1}^n \heta_{ab} \hsigma^2_{ab},
	\quad
	\heta_{ab} \coloneqq \sum_{\tau \in \Tc} \omega_\tau \hvarphi_{\tau, ab},
\end{align}
where $\hvarphi_{\tau, ab}$ replaces the population representers by their \PCR~estimates and $\hsigma^2_{ab}$ is obtained as in Section~\ref{sec:var.est}. Whenever $\hupsilon_{\bomega} / \upsilon_{\bomega} \xrightarrow{p} 1$, a corresponding $(1-\gamma)\times100\%$ confidence interval is 
\begin{align}
	\left[ \htheta_{\bomega} \pm z_{1-\gamma/2} \hupsilon_{\bomega} \right]. 
\end{align}
Thus, entrywise and weighted-average inference can be expressed within the same linear-functional framework. A complete theorem for a growing collection $\Tc$ would additionally require uniform control of the entrywise remainders, representer estimates, and local variance estimates; we leave those technical extensions for future work.

\section{Notation for Proofs} \label{sec:proofs.notation}
In this section, we establish the notation to be used throughout our proofs. 
Let
\begin{align}
\by \coloneqq \bY_{i, \nAC},
\quad
\bX \coloneqq \bY_{\nAR, \nAC},
\quad
\bz \coloneqq \bY_{\nAR, j}. 
\end{align}
Denote $\bbW \coloneqq \Ex[\bW \mid \Ec]$ for any random object $\bW$. 
With this, we can define the population minimum $\ell_2$-norm coefficients as 
\begin{align}
	\bbeta^* \coloneqq \bbX^{\top, \dagger} \bby,
	\quad
	\balpha^* \coloneqq \bbX^\dagger \bbz.  \label{eq:pop.params} 
\end{align}
Moreover, we define the noise blocks as 
\begin{align}
	\bxi_y \coloneqq \by - \bby,
	\quad
	\bXi_X \coloneqq \bX - \bbX,
	\quad
	\bxi_z \coloneqq \bz - \bbz. 
\end{align}
Further, denote the de-noised design matrix as 
\begin{align}
	\hbX \coloneqq \bX^{(\br)} \coloneqq \bY^{(\br)}_{\nAR, \nAC},
\end{align}
where $\br \coloneqq \rank(\bbX)$. 
In turn, we denote the parameter estimates as 
\begin{align}
	\hbbeta \coloneqq \hbX^{\top, \dagger} \by,
	\quad
	\hbalpha \coloneqq \hbX^\dagger \bz,
\end{align}
and the corresponding recovery errors as 
\begin{align}
	\bDelta_\beta \coloneqq \hbbeta - \bbeta^*,
	\quad
	\bDelta_\alpha \coloneqq \hbalpha - \balpha^*. 
\end{align}
Finally, recall $|\nARC| \coloneqq |\nAR| \cdot |\nAC|$ and $L \coloneqq \log(|\nARC|)$.

\section{Known Auxiliary Results} \label{sec:known.results}
This section collects several standard results that are used repeatedly throughout the proofs.

The first lemma is Weyl’s inequality, which bounds the perturbation of each singular value by the operator norm of the corresponding matrix perturbation. 

\begin{lemma} \label{lemma:weyl} 
Given $\bA, \bB \in \Rb^{m \times n}$, let $\sigma_i$ and $\widehat{\sigma}_i$ be the $i$-th singular values of $\bA$ and $\bB$, respectively, in decreasing order and repeated by multiplicities. 
Then for all $i \le \min\{m, n\}$, $\abs{ \sigma_i - \widehat{\sigma}_i} \le \|\bA - \bB \|_{\eop}.$
\end{lemma}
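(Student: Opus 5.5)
The statement immediately above is Lemma~\ref{lemma:weyl}, Weyl's inequality for singular values. I will prove it through the Courant--Fischer (min--max) characterization of singular values, combined with the triangle inequality for the operator norm.

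\textbf{Step 1: the variational formula.} For $\bA \in \Rb^{m\times n}$ and $1 \le i \le \min\{m,n\}$, I use
\begin{align}
\sigma_i(\bA) = \max_{\substack{S \subseteq \Rb^n \\ \dim S = i}} \; \min_{\substack{\bx \in S \\ \|\bx\|_2 = 1}} \|\bA \bx\|_2 .
\end{align}
This follows from the SVD $\bA = \sum_k \sigma_k \bu_k \bv_k^\top$. For the lower bound, take $S = \mathrm{span}\{\bv_1,\dots,\bv_i\}$; on this subspace $\|\bA\bx\|_2 \ge \sigma_i$. For the upper bound, any $i$-dimensional $S$ meets $\mathrm{span}\{\bv_i,\dots,\bv_n\}$ nontrivially by a dimension count. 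Any unit vector $\bx$ in that intersection satisfies $\|\bA\bx\|_2 \le \sigma_i$.

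\textbf{Step 2: one direction of the comparison.} Write $\bB = \bA + (\bB-\bA)$. For every unit vector $\bx$,
\begin{align}
\|\bB\bx\|_2 \ge \|\bA\bx\|_2 - \|(\bB-\bA)\bx\|_2 \ge \|\bA\bx\|_2 - \|\bA-\bB\|_{\txtop}.
\end{align}
I fix the subspace $S$ that attains the maximum for $\bA$ in Step 1 and take the minimum over unit $\bx \in S$. This gives
\begin{align}
\widehat{\sigma}_i \ge \min_{\bx \in S,\, \|\bx\|_2=1} \|\bB\bx\|_2 \ge \sigma_i - \|\bA-\bB\|_{\txtop}.
\end{align}

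\textbf{Step 3: the reverse direction.} Swapping the roles of $\bA$ and $\bB$ in Step 2 gives $\sigma_i \ge \widehat{\sigma}_i - \|\bA-\bB\|_{\txtop}$. Combining the two inequalities yields $|\sigma_i - \widehat{\sigma}_i| \le \|\bA-\bB\|_{\txtop}$.

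The only substantive step is the dimension-count argument in Step 1 that justifies the min--max formula. Everything after it is the triangle inequality.
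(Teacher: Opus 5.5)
Your proof is correct and complete. Note that the paper does not prove this lemma at all. It lists Weyl's inequality among the ``Known Auxiliary Results'' and uses it as a black box: in \eqref{eq:op.useful} with index $\br+1$, where $s_{\br+1}=0$, and in the proof of Lemma~\ref{lemma:singular_values} with index $\br$. So there is no paper argument to match.

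The most common textbook route applies the Hermitian Weyl inequality $|\lambda_k(\bM)-\lambda_k(\bM')|\le\|\bM-\bM'\|_{\txtop}$ to the symmetric dilations $\begin{pmatrix}\bzero & \bA\\ \bA^\top & \bzero\end{pmatrix}$ and $\begin{pmatrix}\bzero & \bB\\ \bB^\top & \bzero\end{pmatrix}$. Their $k$-th largest eigenvalues are $\sigma_k$ and $\widehat{\sigma}_k$ for $k\le\min\{m,n\}$, and their difference has operator norm $\|\bA-\bB\|_{\txtop}$. Your version instead works directly with the singular-value min--max formula. This avoids the dilation bookkeeping and the eigenvalue form of Courant--Fischer. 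The price is that you must justify the singular-value min--max formula yourself, which you do.

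Two small points in Step 1 are worth stating explicitly:
\begin{itemize}
\item The dimension count needs $\bv_1,\dots,\bv_n$ to be a full orthonormal basis of $\Rb^n$. When $n>m$, extend the right singular vectors by an orthonormal basis of the null space of $\bA$, so that $\bA\bv_k=\bzero$ for $k>m$.
\item The identity $\|\bA\bx\|_2^2=\sum_k\sigma_k^2\langle\bv_k,\bx\rangle^2$, used in both the lower and upper bounds, relies on orthonormality of the left singular vectors.
\end{itemize}
With these made explicit, the argument is airtight and covers every index $i\le\min\{m,n\}$ the paper needs.
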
 

The next lemma controls the perturbation of a leading right-singular subspace, with the bound determined by the associated spectral gap.

\begin{lemma} {\cite[]{Wedin1972PerturbationBI}} \label{thm:wedin}
Given $\bA, \bB \in \Rb^{m \times n}$, let $\bV, \bhV \in \Rb^{n \times n}$ denote their respective right singular vectors.
Further, let $\bV_{k} \in \Rb^{n \times k}$ (respectively, $\bhV_{k} \in \Rb^{n \times k}$) correspond to the truncation of $\bV$ (respectively, $\bhV$), respectively, that retains the columns corresponding to the top $k$ singular values of $\bA$ (respectively, $\bB$).  
Let $s_{i}$ represent the $i$-th singular value of $\bA$.
Then, 
\begin{align}
	 \left\| \bhV_k \bhV_k^\top - \bV_k \bV_k^\top  \right\|_{\eop}  \leq \frac{2}{s_{k} - s_{k+1}} \cdot \left\|\bA - \bB \right\|_{\eop}.
\end{align}
\end{lemma}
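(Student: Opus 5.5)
The plan is to prove Lemma~\ref{thm:wedin} by a two-sided $\sin\Theta$ argument, using Weyl's inequality (Lemma~\ref{lemma:weyl}) and a case split on the size of the perturbation. Write $\bE \coloneqq \bB - \bA$, $e \coloneqq \|\bE\|_{\txtop}$, and $g \coloneqq s_k - s_{k+1}$.

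\textbf{Trivial case.} If $e \ge g/2$, the right-hand side is at least $1$. The left-hand side is the operator norm of a difference of two orthogonal projections, so it is at most $1$, and the bound holds.

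\textbf{Main case.} Assume $e < g/2$. By Lemma~\ref{lemma:weyl}, $\hsigma_k \ge s_k - e$, where $\hsigma_k$ is the $k$-th singular value of $\bB$. Hence
\begin{align}
\delta \coloneqq \hsigma_k - s_{k+1} \ge g - e > g/2 > 0.
\end{align}
It therefore suffices to show $\|\bhV_k\bhV_k^\top - \bV_k\bV_k^\top\|_{\txtop} \le e/\delta$.

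Since both projections have rank $k$, the standard CS-decomposition identity gives
\begin{align}
\|\bhV_k\bhV_k^\top - \bV_k\bV_k^\top\|_{\txtop} = \|\bV_\perp^\top \bhV_k\|_{\txtop},
\end{align}
where $\bV_\perp$ spans the orthogonal complement of $\mathrm{col}(\bV_k)$. I take this identity as known.

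\textbf{Coupled inequalities.} Let $\bA = \bU_k \bS_k \bV_k^\top + \bU_\perp \bS_\perp \bV_\perp^\top$, and let $\bB\bhV_k = \bhU_k\bhS_k$ and $\bB^\top\bhU_k = \bhV_k\bhS_k$, with $\bhS_k$ having smallest entry $\hsigma_k$. Multiplying these relations by $\bU_\perp^\top$ and $\bV_\perp^\top$ respectively, and using $\bU_\perp^\top \bA = \bS_\perp \bV_\perp^\top$ and $\bV_\perp^\top \bA^\top = \bS_\perp^\top \bU_\perp^\top$, gives
\begin{align}
\bU_\perp^\top \bhU_k \bhS_k &= \bS_\perp \bV_\perp^\top \bhV_k + \bU_\perp^\top \bE \bhV_k, \\
\bV_\perp^\top \bhV_k \bhS_k &= \bS_\perp^\top \bU_\perp^\top \bhU_k + \bV_\perp^\top \bE^\top \bhU_k.
\end{align}
Set $x \coloneqq \|\bV_\perp^\top\bhV_k\|_{\txtop}$ and $y \coloneqq \|\bU_\perp^\top\bhU_k\|_{\txtop}$. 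Since $\bhS_k$ is invertible with $\|\bhS_k^{-1}\|_{\txtop} = \hsigma_k^{-1}$ and $\|\bS_\perp\|_{\txtop} \le s_{k+1}$, taking norms yields
\begin{align}
\hsigma_k\, y \le s_{k+1}\, x + e,
\qquad
\hsigma_k\, x \le s_{k+1}\, y + e.
\end{align}

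\textbf{Decoupling.} Let $M \coloneqq \max\{x,y\}$. Whichever inequality attains the maximum gives $\hsigma_k M \le s_{k+1} M + e$. Rearranging, $M \le e/\delta < 2e/g$, which is the claim.

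\textbf{Expected obstacle.} The delicate points are getting the constant $2$ rather than $4$, and justifying the projection-difference identity. On the constant: adding the two inequalities only gives $x+y \le 2e/\delta$, hence a factor $4$, which is why I bound the maximum directly. On the identity: if one prefers not to invoke the CS decomposition, it suffices to note that
\begin{align}
\|\bhV_k\bhV_k^\top - \bV_k\bV_k^\top\|_{\txtop} = \max\left\{\|\bV_\perp^\top\bhV_k\|_{\txtop},\ \|\bhV_\perp^\top\bV_k\|_{\txtop}\right\},
\end{align}
and that these two quantities coincide for equal-rank subspaces.
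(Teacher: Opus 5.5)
Your proof is correct. The paper does not prove this lemma; it only cites \cite{Wedin1972PerturbationBI}, so your argument is a self-contained replacement rather than a variant of anything in the text.

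What you have written is the classical residual proof of Wedin's $\sin\Theta$ theorem. Your two coupled inequalities are Wedin's bounds with residuals $\bE\bhV_k$ and $\bE^\top\bhU_k$. Bounding $\max\{x,y\}$ rather than $x+y$ is exactly what keeps the constant at $2$.

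Wedin's theorem natively gives the mixed gap $\hsigma_k - s_{k+1}$, comparing the perturbed top-$k$ singular value with the unperturbed tail. The form stated in the paper, with the gap $s_k - s_{k+1}$ of $\bA$ alone and constant $2$, is a corollary that the bare citation leaves implicit. Your Weyl step, together with the case split at $\|\bE\|_{\txtop} = (s_k - s_{k+1})/2$ and the bound $\|P-Q\|_{\txtop}\le 1$ for orthogonal projections, is the standard way to derive it. The main case also forces $\hsigma_k > \hsigma_{k+1}$, so $\bhV_k$ is well defined there, and in the trivial case any choice of $\bhV_k$ works.

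Your route gives two small extras over the citation:
\begin{itemize}
\item The intermediate bound $\|\bE\|_{\txtop}/(\hsigma_k - s_{k+1})$ is sharper than the stated one.
\item Because you control $y = \|\bU_\perp^\top\bhU_k\|_{\txtop}$ at the same time, you also get the left-singular-subspace version. That is what Lemma~\ref{lemma:subspace} actually uses: it bounds $\|\bhP - \bP\|_{\txtop}$ for column-space projectors, whereas the lemma as stated concerns right singular vectors, so the paper tacitly applies it to transposes.
\end{itemize}

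One cosmetic slip: in ``$M \le e/\delta < 2e/g$'' the last inequality should be non-strict, since both sides vanish when $e = 0$.
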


The following result provides a high-probability operator-norm bound for a matrix with independent, mean-zero subgaussian entries.

\begin{lemma} {\cite[Theorem 4.4.5]{vershynin2018high}} \label{lemma:subg_matrix}
Let $\bA = [A_{ij}]$ be an $m \times n$ random matrix where the entries $A_{ij}$ are independent, mean zero, subgaussian random variables. 
Then for any $t > 0$, $$\| \bA \|_{\eop} \le CK \left(\sqrt{m} + \sqrt{n} + t \right)$$ with probability at least $1-2\exp(-t^2)$. 
Here, $K = \max_{i,j} \|A_{ij}\|_{\psi_2}$ and $C>0$ is an absolute constant. 
\end{lemma}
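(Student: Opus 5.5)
This is a standard nonasymptotic bound, and I would prove it by the usual $\varepsilon$-net argument. Write the operator norm variationally as
$$
\|\bA\|_{\eop} = \sup_{\bx \in S^{n-1},\, \by \in S^{m-1}} \langle \bA \bx, \by \rangle .
$$
Fix $1/4$-nets $\Nc \subseteq S^{n-1}$ and $\mathcal{M} \subseteq S^{m-1}$. A volumetric argument gives $|\Nc| \le 9^n$ and $|\mathcal{M}| \le 9^m$.

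\textbf{Step 1 (reduction to the nets).} For arbitrary unit $\bx, \by$, pick $\bx_0 \in \Nc$ and $\by_0 \in \mathcal{M}$ within distance $1/4$. Expanding
$$
\langle \bA\bx,\by\rangle = \langle \bA\bx_0,\by_0\rangle + \langle \bA(\bx-\bx_0),\by\rangle + \langle \bA\bx_0,\by-\by_0\rangle
$$
bounds the last two terms by $\tfrac12\|\bA\|_{\eop}$ in total. Taking suprema and rearranging gives
$$
\|\bA\|_{\eop} \le 2 \max_{\bx\in\Nc,\,\by\in\mathcal{M}} \langle \bA\bx,\by\rangle .
$$

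\textbf{Step 2 (fixed pair).} For fixed unit $\bx, \by$, the quantity $\langle \bA\bx,\by\rangle = \sum_{i,j} A_{ij} y_i x_j$ is a sum of independent, mean-zero subgaussian variables. By the general Hoeffding inequality (rotation invariance of the $\psi_2$ norm for independent sums), its squared $\psi_2$ norm is at most
$$
C\sum_{i,j} x_j^2 y_i^2 \|A_{ij}\|_{\psi_2}^2 \le C K^2 .
$$
Hence $\Pb\big(\langle \bA\bx,\by\rangle \ge u\big) \le 2\exp(-c u^2/K^2)$.

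\textbf{Step 3 (union bound and choice of $u$).} A union bound over the $9^{n+m}$ pairs gives
$$
\Pb\Big(\max_{\Nc\times\mathcal{M}} \langle \bA\bx,\by\rangle \ge u\Big) \le 2\cdot 9^{n+m}\exp(-c u^2/K^2).
$$
Set $u = C'K(\sqrt m + \sqrt n + t)$ with $C'$ large. Then $u^2/K^2 \ge C'^2(n + m + t^2)$, so for $C'$ large enough that $cC'^2 \ge \max\{\log 9, 1\}$ the bound is at most $2\exp(-t^2)$. Combining with Step 1 yields the claim with constant $C = 2C'$.

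The only step needing care is Step 1, the net-approximation inequality. One must ensure that the error terms are controlled by $\|\bA\|_{\eop}$ itself and then absorbed, and not by some other quantity. The remaining steps are routine once the general Hoeffding bound for subgaussian sums is taken as known.
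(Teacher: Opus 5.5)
Your proof is correct: the $1/4$-net reduction, the general Hoeffding bound for a fixed pair, and the union bound over $9^{n+m}$ pairs with $u = C'K(\sqrt{m}+\sqrt{n}+t)$ all check out. The paper gives no proof of its own and only invokes \cite[Theorem 4.4.5]{vershynin2018high}, whose standard proof is exactly the argument you reconstruct.
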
  

The next lemma is a Hoeffding-type inequality for weighted sums of independent, mean-zero subgaussian random variables.
\begin{lemma} {\cite[Theorem 2.6.3]{vershynin2018high}} \label{lemma:hoeffding}
Let $X_1, \cdots, X_N$ be independent, mean-zero, subgaussian random variables, and $\ba = [a_i] \in \Rb^N$. Then, for every $t \ge 0$, we have
\begin{align}
	\Pb \left( \left|\sum_{i=1}^N a_i X_i \right| \ge t \right) \le 2 \cdot \exp(-\frac{ct^2}{K^2 \| \ba \|_2^2}),
\end{align}
where $K = \max_i \| X_i \|_{\psi_2}$. 
\end{lemma}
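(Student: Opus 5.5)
The plan is the standard Chernoff (moment generating function) argument, reducing everything to a single-variable MGF bound for centered subgaussian variables. Let $S \coloneqq \sum_{i=1}^N a_i X_i$. If $\ba = \bzero$ the claim is trivial, reading the right-hand side as $2 \ge 1$ for $t > 0$, so assume $\|\ba\|_2 > 0$.

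\textbf{Step 1: single-variable MGF bound.} For each $i$ we show that
\begin{align}
	\Ex\left[ \exp(\lambda X_i) \right] \le \exp\left( C_0 \lambda^2 \|X_i\|_{\psi_2}^2 \right) \quad \text{for all } \lambda \in \Rb,
\end{align}
where $C_0$ is an absolute constant. We start from the definition $\Ex[\exp(X_i^2/K_i^2)] \le 2$, with $K_i = \|X_i\|_{\psi_2}$. This gives the tail bound $\Pb(|X_i| \ge s) \le 2\exp(-s^2/K_i^2)$. Integrating the tail then gives the moment bounds $\Ex|X_i|^p \le (C_1 K_i \sqrt{p})^p$ for all $p \ge 1$. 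Next we expand $\Ex[\exp(\lambda X_i)] = 1 + \lambda \Ex[X_i] + \sum_{p \ge 2} \lambda^p \Ex[X_i^p]/p!$. The mean-zero assumption kills the linear term. We bound the remaining series using $p! \ge (p/e)^p$, which gives $1 + \sum_{p\ge 2} (C_2 |\lambda| K_i/\sqrt{p})^p$.
\begin{itemize}
	\item For $|\lambda| K_i \le c$ small, this series is at most $1 + C_3 \lambda^2 K_i^2 \le \exp(C_3 \lambda^2 K_i^2)$.
	\item For $|\lambda| K_i$ large, we use $\lambda x \le \lambda^2 K_i^2/2 + x^2/(2K_i^2)$ together with $\Ex[\exp(X_i^2/K_i^2)] \le 2$. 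This gives $\Ex[\exp(\lambda X_i)] \le \sqrt{2}\exp(\lambda^2 K_i^2/2)$, and the factor $\sqrt{2}$ is absorbed into $\exp(C\lambda^2 K_i^2)$ because $\lambda^2 K_i^2$ is bounded below.
\end{itemize}

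\textbf{Step 2: tensorize.} By independence and Step 1, applied with parameter $\lambda a_i$,
\begin{align}
	\Ex\left[\exp(\lambda S)\right] = \prod_{i=1}^N \Ex\left[\exp(\lambda a_i X_i)\right] \le \exp\left( C_0 \lambda^2 \sum_{i=1}^N a_i^2 K_i^2 \right) \le \exp\left( C_0 \lambda^2 K^2 \|\ba\|_2^2 \right).
\end{align}

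\textbf{Step 3: Chernoff bound and optimization.} Markov's inequality gives, for $\lambda > 0$,
\begin{align}
	\Pb(S \ge t) \le \exp\left(-\lambda t + C_0 \lambda^2 K^2 \|\ba\|_2^2\right).
\end{align}
Choosing $\lambda = t/(2C_0 K^2 \|\ba\|_2^2)$ yields $\Pb(S \ge t) \le \exp\left(-t^2/(4C_0 K^2\|\ba\|_2^2)\right)$. Applying the same argument to $-S$ handles the lower tail, and a union bound produces the factor $2$, with $c = 1/(4C_0)$.

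\textbf{Main obstacle.} Steps 2 and 3 are mechanical. The only real work is Step 1, namely passing from the $\psi_2$-norm definition to a centered MGF bound with an absolute constant. The key points there are that the mean-zero condition must be used to remove the linear Taylor term, and that the small-$\lambda$ and large-$\lambda$ regimes must be stitched together so that the constant $C_0$ stays absolute.
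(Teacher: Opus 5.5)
Your proposal is correct and is essentially the standard proof of \cite[Theorem 2.6.3]{vershynin2018high}, which the paper quotes without proof. Vershynin likewise derives the centered MGF bound $\Ex[\exp(\lambda X_i)] \le \exp(C\lambda^2\|X_i\|_{\psi_2}^2)$, multiplies it over the independent summands, and converts the result to a tail bound. He handles small $|\lambda|$ by centering and large $|\lambda|$ by the AM--GM trick, as you do; the only cosmetic difference is that you apply Chernoff directly instead of first bounding the $\psi_2$-norm of the sum. One minor nit: when $\ba=\bzero$ the sum is identically zero, so the left side vanishes for every $t>0$ and no special reading of the right side (which formally equals $0$, not $2$) is needed.
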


The Hanson–Wright inequality below controls the deviation of a subgaussian quadratic form from its mean.
\begin{lemma}{\cite[Theorem 6.6.2]{vershynin2018high}} \label{lemma:hanson.wright}
Let $\bA$ be an $n \times n$ matrix. Let $X = (X_1, \dots, X_n) \in \Rb^n$ be a random vector with independent, mean-zero, subgaussian coordinates. Then, for every $t \ge 0$, we have
\begin{align}
	\Pb\left( \left| X^\top \bA X - \Ex X^\top \bA X \right| \ge t \right) \le 2 \exp(-c\min\left\{ \frac{t^2}{K^4 \| \bA \|_F^2}, \frac{t}{K^2 \| \bA \|_\txtop} \right\}),
\end{align}
where $K = \max_i \| X_i \|_{\psi_2}$. 
\end{lemma}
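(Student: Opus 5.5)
This is the standard Hanson–Wright inequality, and I would follow the classical decoupling-plus-Gaussian-comparison route. Normalize first: by homogeneity, replacing $X$ by $X/K$ and $\bA$ by $K^2\bA$, it suffices to treat $K = 1$. Then split
\begin{align}
	X^\top \bA X - \Ex X^\top \bA X = \sum_{i} A_{ii}\left(X_i^2 - \Ex X_i^2\right) + \sum_{i \neq k} A_{ik} X_i X_k .
\end{align}
It suffices to show that each piece exceeds $t/2$ with probability at most $2\exp(-c\min\{t^2/\|\bA\|_F^2, t/\|\bA\|_\txtop\})$.

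\textbf{Diagonal part.} Each $X_i^2 - \Ex X_i^2$ is a centered sub-exponential variable with $\psi_1$-norm bounded by an absolute constant, since $\|X_i^2\|_{\psi_1} = \|X_i\|_{\psi_2}^2$. Bernstein's inequality for weighted sums of independent sub-exponentials then gives a tail bound of the form $2\exp(-c\min\{t^2/\sum_i A_{ii}^2,\, t/\max_i |A_{ii}|\})$. This is already sufficient, because $\sum_i A_{ii}^2 \le \|\bA\|_F^2$ and $\max_i|A_{ii}| \le \|\bA\|_\txtop$.

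\textbf{Off-diagonal part.} Let $S = \sum_{i\neq k} A_{ik}X_iX_k$. I would bound its moment generating function and then apply Markov's inequality.
\begin{enumerate}
	\item \emph{Decoupling.} For convex $F$, $\Ex F(S) \le \Ex F(4 X^\top \bA X')$, where $X'$ is an independent copy of $X$. This is proved by introducing i.i.d. Bernoulli selectors $\delta_i$, writing $S = 4\Ex_\delta \sum_{i,k}\delta_i(1-\delta_k)A_{ik}X_iX_k$, applying Jensen, and then conditioning to replace the complementary coordinates by an independent copy.
	\item \emph{First Gaussian replacement.} Condition on $X'$. Since $X$ has independent subgaussian coordinates, $\Ex_X \exp(\lambda X^\top \bA X') \le \exp(C\lambda^2\|\bA X'\|_2^2)$. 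This is also the MGF of $\lambda g^\top \bA X'$ with $g$ a standard Gaussian vector, up to a constant in the variance.
	\item \emph{Second Gaussian replacement.} Repeat the same comparison with $X'$ in place of $X$. Together with the previous step this yields $\Ex \exp(\lambda X^\top \bA X') \le \Ex \exp(C\lambda\, g^\top \bA g')$ for independent standard Gaussians $g, g'$.
	\item \emph{Gaussian chaos.} By rotation invariance, $g^\top \bA g' \overset{d}{=} \sum_i s_i g_i g_i'$, where the $s_i$ are the singular values of $\bA$. Each factor has $\Ex\exp(\lambda s_i g_i g_i') = (1-\lambda^2 s_i^2)^{-1/2} \le \exp(C\lambda^2 s_i^2)$ whenever $|\lambda| s_i \le c$.
	\item \emph{Conclusion.} Combining the steps gives $\Ex \exp(\lambda S) \le \exp(C\lambda^2\|\bA\|_F^2)$ for all $|\lambda| \le c/\|\bA\|_\txtop$. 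Markov's inequality with the optimized choice $\lambda = \min\{t/(2C\|\bA\|_F^2),\, c/\|\bA\|_\txtop\}$ produces the mixed tail. The same argument applied to $-\bA$ gives the lower tail.
\end{enumerate}

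\textbf{Main obstacle.} The delicate step is the decoupling inequality, in particular justifying the passage from $\delta_i(1-\delta_k)$-restricted sums back to a fully decoupled form $X^\top \bA X'$ while keeping the constant in front of $\lambda$ absolute. A second source of care is the Gaussian comparison, which needs the MGF bound for subgaussian variables with arbitrary (vector-valued, conditionally fixed) coefficients, applied twice with conditioning in the correct order. If the decoupling step becomes cumbersome, I would fall back on the cleaner version for $F(x) = e^{\lambda x}$ only, since that is all the argument uses.
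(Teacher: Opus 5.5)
Your proposal is correct: it is the standard argument (Bernstein for the diagonal part, then decoupling, two Gaussian comparisons and the Gaussian-chaos MGF for the off-diagonal part) found in \cite{vershynin2018high}, which the paper cites without giving a proof of its own. The one detail you leave implicit is that adding the two tail bounds gives a prefactor $4$; halving $c$ brings it back to $2$, because $2e^{-cx/2}\ge 1$ whenever $4e^{-cx} > 2e^{-cx/2}$.
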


The next lemma extends operator-norm concentration to matrices with independent, mean-zero sub-exponential entries.

\begin{lemma}{\cite[Theorem H.1]{agarwal2019robustness}} \label{lemma:dg}
Let $\bA = [A_{ij}]$ be an $m \times n$ random matrix where the entries $A_{ij}$ are independent, mean zero, sub-exponential random variables with $K = \max_{i,j} \|A_{ij}\|_{\psi_1}$. 
Then for any $\delta > 0$, with probability at least $1-\frac{2}{n^{1+\delta} m^\delta}$, 
\begin{align}
	\| \bA  \|_{\eop} \lesssim \sqrt{m} + K \sqrt{n (1+\delta) \log(mn)} \cdot \left(1 + (2+\delta)\log(mn) \right). 
\end{align}
\end{lemma}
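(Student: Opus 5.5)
The plan is a truncation argument. On a high-probability event, every entry is bounded by roughly $K\log(mn)$. The truncated, re-centered matrix then has bounded, hence subgaussian, independent entries, so Lemma~\ref{lemma:subg_matrix} applies. What remains is to control the deterministic bias introduced by truncation and to absorb it into the $\sqrt{m}$ term.

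\textbf{Step 1 (truncation event).} Set $\tau \coloneqq K(2+\delta)\log(mn)$. The sub-exponential tail bound gives $\Pb(|A_{ij}|>\tau)\le 2\exp(-\tau/K) = 2(mn)^{-(2+\delta)}$. A union bound over the $mn$ entries shows that the event $\Ec_\tau \coloneqq \{\max_{ij}|A_{ij}|\le\tau\}$ holds with probability at least $1 - 2(mn)^{-(1+\delta)}$. This is at least of the stated order $1-2/(n^{1+\delta}m^\delta)$, after adjusting the constant in $\tau$ if necessary.

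\textbf{Step 2 (bounded part).} Define $\tilde A_{ij}\coloneqq A_{ij}\boldsymbol{1}\{|A_{ij}|\le\tau\}$ and $\bar A_{ij}\coloneqq \tilde A_{ij}-\Ex \tilde A_{ij}$. These are independent, mean zero, and bounded by $2\tau$, so $\|\bar A_{ij}\|_{\psi_2}\lesssim \tau$. Apply Lemma~\ref{lemma:subg_matrix} with deviation $t=\sqrt{(1+\delta)\log(mn)}$. This gives, with probability at least $1-2(mn)^{-(1+\delta)}$,
\begin{align}
	\|[\bar A_{ij}]\|_{\eop} \lesssim \tau\bigl(\sqrt{m}+\sqrt{n}+\sqrt{(1+\delta)\log(mn)}\bigr).
\end{align}
After organizing the factors (using $n\ge$ the relevant dimension, or transposing otherwise), this is dominated by $K\sqrt{n(1+\delta)\log(mn)}\,(1+(2+\delta)\log(mn))$.

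\textbf{Step 3 (bias).} On $\Ec_\tau$ we have $\bA=[\tilde A_{ij}] = [\bar A_{ij}]+[\Ex\tilde A_{ij}]$. Since $\Ex A_{ij}=0$, we get $|\Ex\tilde A_{ij}| = |\Ex A_{ij}\boldsymbol{1}\{|A_{ij}|>\tau\}|$. Cauchy--Schwarz and the tail bound give
\begin{align}
	|\Ex\tilde A_{ij}| \lesssim K (mn)^{-(2+\delta)/2}.
\end{align}
Therefore $\|[\Ex\tilde A_{ij}]\|_{\eop}\le\|[\Ex \tilde A_{ij}]\|_F\lesssim \sqrt{mn}\,K(mn)^{-1-\delta/2}$, which is negligible and in particular $\lesssim\sqrt{m}$. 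A triangle inequality and a union bound over Steps 1--2 then yield the claim.

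\textbf{Main obstacle.} I expect the difficulty to be the bookkeeping in Steps 1--2: matching the exact polynomial failure probability $2/(n^{1+\delta}m^\delta)$ and the precise shape $\sqrt{m}+K\sqrt{n(1+\delta)\log(mn)}\,(1+(2+\delta)\log(mn))$. The asymmetric $\sqrt{m}$ term without a $K$ factor suggests that the original proof instead bounds $\|\bA\|_{\eop}^2=\|\bA^\top\bA\|_{\eop}$ row-wise, for example via a matrix Bernstein inequality on the sum of independent row outer products after truncation. If the direct route does not reproduce the constants, I would switch to that row-wise argument.
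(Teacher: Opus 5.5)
\paragraph{Gap in Step 2.} The paper gives no proof of this lemma; it imports it from \cite{agarwal2019robustness}. So the question is whether your argument stands on its own. Steps 1 and 3 are fine, but Step 2 does not give the stated bound. After truncation, the matrix has $\psi_2$-constant of order $\tau=K(2+\delta)\log(mn)$ in \emph{both} dimensions. Lemma~\ref{lemma:subg_matrix} therefore returns $\tau\bigl(\sqrt m+\sqrt n+\sqrt{(1+\delta)\log(mn)}\bigr)$, which contains the term $K(2+\delta)\log(mn)\sqrt m$. The target is asymmetric: its $\sqrt m$ term carries neither $K$ nor a logarithm. Your term can be absorbed only when $m\lesssim n(1+\delta)\log(mn)$. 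For tall matrices (e.g.\ $n$ fixed, $m\to\infty$) it exceeds the right-hand side by a factor of order $\log(mn)$. Transposing does not help, because Lemma~\ref{lemma:subg_matrix} is symmetric in $(m,n)$ and charges $\tau$ to $\sqrt m$ in either orientation. The loss is genuine, not a matter of bookkeeping: for $n=1$, $\|\bA\|_{\txtop}=(\sum_i A_{i1}^2)^{1/2}\approx\sqrt{m\,\Ex A^2}$, with no logarithm.

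\paragraph{The fix.} The missing idea is the one you list only as a fallback: along the long dimension you must pay only second moments, not the truncation level. On $\{\max_{ij}|A_{ij}|\le\tau\}$, write $\|\bA\|_{\txtop}^2=\bigl\|\sum_{i=1}^m\tilde a_i\tilde a_i^\top\bigr\|_{\txtop}$, where $\tilde a_i\in\Rb^n$ are the truncated rows; no recentering is needed. The summands are independent PSD matrices with $\|\tilde a_i\|_2^2\le n\tau^2$. Moreover, $\bigl\|\sum_i\Ex\tilde a_i\tilde a_i^\top\bigr\|_{\txtop}\le mv$ up to a negligible truncation-bias term, where $v\coloneqq\max_{ij}\Ex A_{ij}^2\lesssim K^2$; the off-diagonal entries are products of your Step 3 biases. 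Apply matrix Bernstein in dimension $n$, with variance proxy at most $n\tau^2\cdot mv$ and uniform bound $n\tau^2$. Take level $u\asymp(1+\delta)\log(mn)$, chosen so that $2ne^{-u}$ fits within the stated failure probability. This gives
\[
\|\bA\|_{\txtop}^2\le mv+C\bigl(\tau\sqrt{mv\,nu}+n\tau^2u\bigr)\le\bigl(\sqrt{mv}+C'\tau\sqrt{nu}\bigr)^2,
\]
that is, $\|\bA\|_{\txtop}\le\sqrt{mv}+C'K\sqrt{n(1+\delta)\log(mn)}\,(2+\delta)\log(mn)$. This is exactly the asymmetric shape of the cited bound. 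It also shows that the stated $\sqrt m$ is really $\sqrt{mv}$: the $K$-free form holds only if the implicit constant may depend on $\max_{ij}\Ex A_{ij}^2$. That matches how the paper uses the lemma, with the constant $C$ in front of $\sqrt{|\nAR|}$ in the proof of Theorem~\ref{thm:var.est}.
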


The final lemma is Janson’s inequality, which bounds the probability that none of a finite family of potentially overlapping subsets is contained in an independently sampled random set. 

\begin{lemma} {\cite[Theorem 2.18(i)]{janson2000random}} \label{lemma:janson}
Let $\Omega$ be a finite set, and let $\Rc \subseteq \Omega$ be formed by including each $x \in \Omega$ independently with probability $p \in [0,1]$. Let $\{\Sc_\alpha: \alpha \in \Ic\}$ be a finite family of nonempty subsets of $\Omega$. For each $\alpha \in \Ic$, define $I_\alpha \coloneqq \boldsymbol{1}\{\Sc_\alpha \subseteq \Rc\}$, $X \coloneqq \sum_{\alpha \in \Ic} I_\alpha$, $\mu \coloneqq \Ex[X]$. Write $\alpha \sim \beta$ if $\alpha \neq \beta$ and $\Sc_\alpha \cap \Sc_\beta \neq \emptyset$, and define $\Delta \coloneqq \sum_{\alpha \sim \beta} \Ex[I_\alpha I_\beta]$, where the sum is over ordered pairs. Then 
\begin{align}
	\Pb\left(X = 0 \right) \le \exp(-\mu + \frac{\Delta}{2}). 
\end{align}
\end{lemma}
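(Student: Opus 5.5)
The plan is the standard chain-rule-plus-Harris argument. Order the index set as $\Ic = \{\alpha_1, \dots, \alpha_N\}$ and write $A_\alpha \coloneqq \{I_\alpha = 1\}$. By the chain rule,
\begin{align}
	\Pb(X = 0) = \prod_{k=1}^N \Pb\Big( A_{\alpha_k}^c \,\Big|\, \textstyle\bigcap_{h<k} A_{\alpha_h}^c \Big)
	= \prod_{k=1}^N \Big( 1 - \Pb\Big( A_{\alpha_k} \,\Big|\, \textstyle\bigcap_{h<k} A_{\alpha_h}^c \Big) \Big).
\end{align}
The degenerate cases are dealt with first. If $p = 1$, then every $I_\alpha = 1$ and $\Pb(X=0)=0$. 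If $p = 0$, then $\mu = \Delta = 0$, all conditioning events have probability one, and the bound reads $1 \le 1$. So for $0 < p < 1$ all conditioning events have positive probability. The goal is to lower bound each conditional probability by $\Pb(A_{\alpha_k}) - \sum_{h<k,\, \alpha_h \sim \alpha_k} \Ex[I_{\alpha_k} I_{\alpha_h}]$.

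Fix $\alpha = \alpha_k$ and split the earlier indices by whether they overlap $\Sc_\alpha$. Set $B \coloneqq \bigcap_{h<k,\, \alpha_h \sim \alpha} A_{\alpha_h}^c$ and $C \coloneqq \bigcap_{h<k,\, \alpha_h \not\sim \alpha} A_{\alpha_h}^c$. Then
\begin{align}
	\Pb(A_\alpha \mid B \cap C) \ge \Pb(A_\alpha \cap B \mid C) = \Pb(A_\alpha \mid C)\, \Pb(B \mid A_\alpha \cap C).
\end{align}
The event $C$ depends only on coordinates outside $\Sc_\alpha$, so it is independent of $A_\alpha$, and $\Pb(A_\alpha \mid C) = \Pb(A_\alpha)$.

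For the second factor, a union bound gives
\begin{align}
	\Pb(B \mid A_\alpha \cap C) \ge 1 - \sum_{h<k,\, \alpha_h\sim\alpha} \Pb(A_{\alpha_h} \mid A_\alpha \cap C).
\end{align}
The key step, and the only delicate one, is to show $\Pb(A_{\alpha_h} \mid A_\alpha \cap C) \le \Pb(A_{\alpha_h} \mid A_\alpha)$. Conditioning on $A_\alpha$ simply fixes the coordinates in $\Sc_\alpha$ to be present, and leaves a product measure on the remaining coordinates. Under this product measure, $A_{\alpha_h}$ is increasing and $C$ is decreasing, being an intersection of complements of increasing events. The Harris (FKG) inequality for product measures then gives exactly the claimed negative correlation.

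Combining the three displays,
\begin{align}
	\Pb(A_\alpha \mid B\cap C) \ge \Pb(A_\alpha)\Big(1 - \sum_{h<k,\,\alpha_h\sim\alpha} \Pb(A_{\alpha_h}\mid A_\alpha)\Big) = \Ex[I_\alpha] - \sum_{h<k,\,\alpha_h\sim\alpha} \Ex[I_\alpha I_{\alpha_h}].
\end{align}
Substituting into the chain rule and using $1 - x \le e^{-x}$ yields
\begin{align}
	\Pb(X=0) \le \exp\Big(-\sum_{\alpha}\Ex[I_\alpha] + \sum_k \sum_{h<k,\,\alpha_h\sim\alpha_k}\Ex[I_{\alpha_k}I_{\alpha_h}]\Big).
\end{align}
The double sum runs over each unordered overlapping pair exactly once. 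Since $\Delta$ is defined over ordered pairs, the double sum equals $\Delta/2$, which gives $\Pb(X=0)\le \exp(-\mu + \Delta/2)$.

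I expect the main obstacle to be justifying the Harris step carefully. This means checking that conditioning on $A_\alpha$ preserves the product structure on $\Omega \setminus \Sc_\alpha$, and that $A_{\alpha_h}$ remains increasing there even though $\Sc_{\alpha_h}$ overlaps $\Sc_\alpha$. The remaining steps are routine bookkeeping.
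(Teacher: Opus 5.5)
Your proof is correct. The paper gives no argument for this lemma: it is imported directly from \cite[Theorem 2.18(i)]{janson2000random} and used only in the proof of Corollary~\ref{cor:mcar}. So there is no in-paper route to compare against.

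What you have written is the standard Boppana--Spencer proof. It runs the chain rule over an ordering of $\Ic$, uses independence for the earlier indices disjoint from $\Sc_\alpha$, and applies the Harris inequality on $\Omega\setminus\Sc_\alpha$ after conditioning on $\Sc_\alpha\subseteq\Rc$. Compared with the bare citation, it shows where each hypothesis enters and why the constant is $\Delta/2$ for ordered pairs: at step $k$ only the overlapping indices with $h<k$ are charged.

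The Harris step you flagged is indeed routine. Given $A_\alpha$, the event $A_{\alpha_h}$ becomes $\{\Sc_{\alpha_h}\setminus\Sc_\alpha\subseteq\Rc\}$. This is an up-set in the remaining coordinates, and it is the sure event if $\Sc_{\alpha_h}\subseteq\Sc_\alpha$. Meanwhile $C$ is a down-set not involving the coordinates in $\Sc_\alpha$.

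The one justification you should add is why the conditioning events have positive probability when $0<p<1$. Every intersection $\bigcap_h A_{\alpha_h}^c$ contains $\{\Rc=\emptyset\}$, which has probability $(1-p)^{|\Omega|}>0$; this holds precisely because each $\Sc_\alpha$ is nonempty, and it is the only place that hypothesis is used. Also $\Pb(A_\alpha\cap C)=\Pb(A_\alpha)\Pb(C)>0$ by the independence you already invoke.

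Finally, $1-x\le e^{-x}$ holds for all real $x$. So it is harmless if some lower bound $\Ex[I_{\alpha_k}]-\sum_{h<k,\,\alpha_h\sim\alpha_k}\Ex[I_{\alpha_k}I_{\alpha_h}]$ is negative.
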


\section{Proof of Proposition~\ref{thm:identification}} \label{sec:identification} 
\begin{proof}[Proof of Proposition~\ref{thm:identification}] 
By Assumption~\ref{assump:linear_span}, there exists a coefficients vector $\bbeta \in \Rb^{|\AR|}$ such that
\begin{align}
	\bu_i &= \sum_{\ell \in \AR} \beta_\ell \cdot \bu_\ell, \label{eq:a3.1}
\end{align}
conditional on $\cE$. 
Hence, we have that  
\begin{align}
A_{ij}  & = \Ex\left [\left \langle  \bu_i,  \bv_j \right \rangle + \varepsilon_{ij} \mid \bu_i, \bv_j \right] &&\because \text{Assumption~\ref{assump:lfm}}
\\&=  \left \langle \bu_i,  \bv_j \right \rangle \mid \{\bu_i, \bv_j\} &&\because \text{Assumption~\ref{assump:mean_ind}} 
\\&=  \left \langle \bu_i,  \bv_j \right \rangle \mid \cE
\\&= \sum_{\ell \in \AR} \beta_\ell \cdot \left \langle  \bu_\ell,  \bv_j \right\rangle \mid \cE &&\because \text{\eqref{eq:a3.1}} 
\\&= \sum_{\ell \in \AR} \beta_\ell \cdot \Ex \left[ \left \langle \bu_\ell, \bv_j \right \rangle + \varepsilon_{\ell j}  \mid \cE \right] &&\because \text{Assumption~\ref{assump:mean_ind}} 
\\ & = \sum_{\ell \in \AR} \beta_\ell \cdot \Ex [  \tY_{\ell j} \mid \cE ] &&\because \text{Assumption~\ref{assump:lfm}} 
\\& = \sum_{\ell \in \AR} \beta_\ell \cdot \Ex \left[  Y_{\ell j} \mid \cE \right]. &&\because \text{\eqref{eq:sutva}} 
\end{align}
We note that the third equality holds since $\langle \bu_i, \bv_j \rangle$ is deterministic conditional on $\{\bu_i, \bv_j\}$. 
This completes the proof for Proposition~\ref{thm:identification}(a). 
To establish Proposition~\ref{thm:identification}(b), we follow the same arguments laid out above.
Specifically, observe that for any $q \in \AC$, we have
\begin{align}
	\Ex\left[ Y_{iq} \mid \cE \right] 
	&= \Ex \left[ Y_{iq} \mid \cE \right] &&\because \text{\eqref{eq:sutva}}
	\\
	&= \Ex \left[ \left\langle \bu_i, \bv_q \right \rangle + \varepsilon_{iq} \mid \Ec \right]  && \because \text{Assumption~\ref{assump:lfm}} 
	\\
	&= \left \langle \bu_i, \bv_q \right \rangle \mid \cE && \because \text{Assumption~\ref{assump:mean_ind}}
	\\
	&= \sum_{\ell \in \AR} \beta_\ell \cdot \left \langle \bu_\ell, \bv_q \right \rangle \mid \cE && \because \text{\eqref{eq:a3.1}} 
	\\
	&= \sum_{\ell \in \AR} \beta_\ell \cdot \Ex \left[ \left \langle \bu_\ell, \bv_q \right \rangle + \varepsilon_{\ell q} \mid \cE \right] && \because \text{Assumption~\ref{assump:mean_ind}}
	\\
	&= \sum_{\ell \in \AR} \beta_\ell \cdot  \Ex \left[ \tY_{\ell q} \mid \cE \right] && \because \text{Assumption~\ref{assump:lfm}}
	\\
	&= \sum_{\ell \in \AR}  \beta_\ell \cdot \Ex \left[ Y_{\ell q} \mid \cE \right]. && \because \text{\eqref{eq:sutva}}
\end{align}
This completes the proof. 
\end{proof} 
\section{\PCR~Result} \label{sec:proofs.helper.lemmas} 
In this section, we bound the \PCR~estimation error in Lemma~\ref{lemma:param.est.1}. This result underpins the proofs for our primary results. We then state the primary lemmas that aid in establishing Lemma~\ref{lemma:param.est.1}, complete its proof, then provide proofs to the primary lemmas in Appendix~\ref{sec:pcr.lemmas.proofs}. 

\begin{lemma}\label{lemma:param.est.1}
%
%
Let the setup of Theorem~\ref{thm:consistency} hold. 
Then, conditional on $\cE$, with probability at least $1 - \mathcal{O}\left( |\nARC|^{-10} \right)$, 
\begin{align}
	\| \bDelta_\beta \|_2 \lesssim \Lambda_\beta \coloneqq \frac{\sigma \br} { \sqrt{|\nAR|} \cdot \min\left\{\sqrt{|\nAR|}, \sqrt{|\nAC|}\right\}} 
		+ \frac{\sqrt{\sigma \br} \cdot L^{1/4}}{\sqrt{|\nAR|} \cdot |\nAC|^{1/4}}.
		\label{eq:pcr.error}  
\end{align}
\end{lemma}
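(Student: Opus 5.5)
We will bound $\bDelta_\beta = \hbX^{\top,\dagger}\by - \bbX^{\top,\dagger}\bby$ by working on a high-probability event $\Ec_0$ on which (a) $\|\bXi_X\|_\txtop \lesssim \sigma(\sqrt{|\nAR|}+\sqrt{|\nAC|}+\sqrt{L})$, by Lemma~\ref{lemma:subg_matrix} with $t \asymp \sqrt{L}$, and (b) a few projected noise quantities, such as $\|\bP\bxi_y\|_2$ for fixed rank-$\br$ projections $\bP$, are at most $\sigma\sqrt{\br L}$ up to constants, via Lemma~\ref{lemma:hoeffding} or Lemma~\ref{lemma:hanson.wright}. Assumption~\ref{assump:spectra} gives $s_{\br}(\bbX) \gtrsim \sqrt{|\nARC|/\br}$, since $\|\bbX\|_F^2 \le \br s_1^2$ and $s_1 \le \kappa s_{\br}$. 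Condition \eqref{eq:rank.sep} then makes $\|\bXi_X\|_\txtop \le s_{\br}(\bbX)/2$. By Weyl (Lemma~\ref{lemma:weyl}) we get $s_{\br}(\hbX) \gtrsim s_{\br}(\bbX)$, so $\|\hbX^\dagger\|_\txtop \lesssim \sqrt{\br/|\nARC|}$. We also get $\|\hbX-\bbX\|_\txtop \le 2\|\bXi_X\|_\txtop$. By Wedin (Lemma~\ref{thm:wedin}), the left and right singular projections satisfy $\|\bhU\bhU^\top - \bU\bU^\top\|_\txtop,\ \|\bhV\bhV^\top-\bV\bV^\top\|_\txtop \lesssim \|\bXi_X\|_\txtop/s_{\br}(\bbX)$.

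Next we decompose the error. Write $\by = \bby + \bxi_y$ with $\bby = \bbX^\top\bbeta^*$; this holds because Proposition~\ref{thm:identification}(b) and the definition \eqref{eq:beta.star} put $\bby$ in the row space of $\bbX$. Then
\begin{align}
\bDelta_\beta = \hbX^{\top,\dagger}\bxi_y + \left(\hbX^{\top,\dagger}\bbX^\top - \bU\bU^\top\right)\bbeta^*,
\end{align}
where $\bbeta^* = \bU\bU^\top\bbeta^*$ and $\bU$ spans the column space of $\bbX$. For the second term, write $\bbX = \hbX - (\hbX-\bbX)$. This gives $\hbX^{\top,\dagger}\bbX^\top = \bhU\bhU^\top - \hbX^{\top,\dagger}(\hbX-\bbX)^\top$. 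The term is then bounded by
\begin{align}
\|\bhU\bhU^\top-\bU\bU^\top\|_\txtop\|\bbeta^*\|_2 + \|\hbX^\dagger\|_\txtop\|\hbX-\bbX\|_\txtop\|\bbeta^*\|_2 .
\end{align}
We also need $\|\bbeta^*\|_2 \le \|\bbX^{\dagger}\|_\txtop\|\bby\|_2 \lesssim \sqrt{\br/|\nARC|}\cdot\sqrt{|\nAC|} = \sqrt{\br/|\nAR|}$, using Assumption~\ref{assump:bounded}. Combining, the second term is at most $\sqrt{\br/|\nARC|}\cdot\sigma(\sqrt{|\nAR|}+\sqrt{|\nAC|})\cdot\sqrt{\br/|\nAR|}$. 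This equals $\sigma\br/(\sqrt{|\nAR|}\min\{\sqrt{|\nAR|},\sqrt{|\nAC|}\})$, up to the $\sqrt L$ term, which the auxiliary condition $c\sigma^2L\le\min\{|\nAR|,|\nAC|\}$ absorbs. This is the first term of $\Lambda_\beta$.

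The first piece, $\hbX^{\top,\dagger}\bxi_y = \bhU\hbS^{-1}\bhV^\top\bxi_y$, is the main obstacle. The naive bound $\|\hbX^\dagger\|_\txtop\|\bxi_y\|_2 \lesssim \sqrt{\br/|\nARC|}\,\sigma\sqrt{|\nAC|}$ loses too much. The remedy is to replace $\bhV\bhV^\top$ by the deterministic $\bV\bV^\top$:
\begin{align}
\|\bhV^\top\bxi_y\|_2 \le \|\bV^\top\bxi_y\|_2 + \|\bhV\bhV^\top - \bV\bV^\top\|_\txtop\|\bxi_y\|_2 .
\end{align}
Here $\|\bV^\top\bxi_y\|_2 \lesssim \sigma\sqrt{\br L}$ by Hanson--Wright applied to the rank-$\br$ projection; independence of $\bxi_y$ from $\bX$ is essential for this step. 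Also $\|\bxi_y\|_2 \lesssim \sigma\sqrt{|\nAC|}$. The cross term is small but not negligible, so we will instead bound it through the cruder interpolation $\|\bhV^\top\bxi_y\|_2^2 \le \|\bhV^\top\bxi_y\|_2\|\bxi_y\|_2$ together with the Wedin factor. We expect this to be the source of the $\sqrt{\sigma\br}L^{1/4}|\nAC|^{-1/4}$ term. Multiplying by $\|\hbS^{-1}\|_\txtop \lesssim \sqrt{\br/|\nARC|}$, this contributes $\sqrt{\sigma\br}L^{1/4}/(\sqrt{|\nAR|}|\nAC|^{1/4})$. Tuning this geometric-mean step to produce exactly the stated second term of $\Lambda_\beta$ is the delicate part. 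A union bound over the $\Oc(1)$ concentration events gives the probability $1-\Oc(|\nARC|^{-10})$.
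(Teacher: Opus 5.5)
Your decomposition $\bDelta_\beta=\hbX^{\top,\dagger}\bxi_y+(\hbX^{\top,\dagger}\bbX^\top-\bU\bU^\top)\bbeta^*$ is correct, and so is your bound on the $\bbeta^*$-piece; it gives the first term of $\Lambda_\beta$. The gap is in the noise piece.

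Swapping $\bhV\bhV^\top$ for $\bV\bV^\top$ leaves the cross term $\hs_{\br}^{-1}\|\bhV\bhV^\top-\bV\bV^\top\|_\txtop\|\bxi_y\|_2$. Your Wedin and norm bounds put this at order $\sigma^2\br/(\sqrt{|\nAR|}\min\{\sqrt{|\nAR|},\sqrt{|\nAC|}\})$, which is $\sigma$ times the first term of $\Lambda_\beta$. So the triangle inequality only certifies $(1+\sigma)\Lambda_\beta$, whereas the lemma tracks $\sigma$ explicitly. For instance, take $|\nAC|=|\nAR|^2$, $\sigma=|\nAR|^{1/10}$ and fixed $\br$. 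These satisfy \eqref{eq:rank.sep} and the auxiliary condition, yet the cross term is of order $|\nAR|^{-4/5}$ while $\Lambda_\beta\asymp|\nAR|^{-9/10}$.

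The interpolation you propose cannot repair this. The inequality $\|\bhV^\top\bxi_y\|_2^2\le\|\bhV^\top\bxi_y\|_2\|\bxi_y\|_2$ is just the naive bound $\|\bhV^\top\bxi_y\|_2\le\|\bxi_y\|_2$ in disguise. A geometric mean of two valid upper bounds never improves on the smaller one, and in the example above the smaller one is the cross-term bound. For example, $\hs_{\br}^{-1}(\|\bV^\top\bxi_y\|_2\|\bxi_y\|_2)^{1/2}\asymp\sigma\br^{3/4}L^{1/4}/(\sqrt{|\nAR|}\,|\nAC|^{1/4})$ already overshoots the second term of $\Lambda_\beta$ by a factor $\sqrt{\sigma}\br^{1/4}$.

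The term $\sqrt{\sigma\br}L^{1/4}/(\sqrt{|\nAR|}\,|\nAC|^{1/4})$ that you are trying to reverse-engineer has a different origin in the paper. There, the basic inequality from the optimality of $\hbbeta$ bounds $\|\hbX^\top\bDelta_\beta\|_2^2$, up to constants, by $\|\bXi_X\|_\txtop^2\|\bbeta^*\|_2^2+\langle\hbX^\top\bDelta_\beta,\bxi_y\rangle$. Lemma~\ref{lemma:annoying} bounds the linear parts of that inner product crudely by $\sigma\sqrt{L|\nAC|}$. The fourth roots then appear after dividing by $\hs_{\br}^2$ and taking a square root.

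The missing idea is to condition on $\bX$ rather than replace $\bhV$ by $\bV$. Because $\bxi_y$ is independent of $\bX$, the matrix $\bhV\bhV^\top$ is a fixed rank-$\br$ orthogonal projection given $\sigma(\Ec,\bX)$. Lemma~\ref{lemma:hanson.wright} then gives $\|\bhV^\top\bxi_y\|_2^2\le\sigma^2\br+C\sigma^2(\sqrt{\br L}+L)$ with conditional probability at least $1-\Oc(|\nARC|^{-10})$, uniformly in $\bX$. This is exactly how the paper handles $\langle\bhQ\bxi_y,\bxi_y\rangle$ in Lemma~\ref{lemma:annoying}. Your remark that independence from $\bX$ is essential is right, but it is needed for $\bhV$; $\bV$ is already $\Ec$-measurable.

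On the Weyl event $\hs_{\br}\gtrsim\sqrt{|\nARC|/\br}$, this yields $\|\hbX^{\top,\dagger}\bxi_y\|_2\lesssim\sigma\sqrt{\br}(\sqrt{\br}+\sqrt{L})/\sqrt{|\nARC|}$. The part $\sigma\br/\sqrt{|\nARC|}$ is at most the first term of $\Lambda_\beta$. The part $\sigma\sqrt{\br L}/\sqrt{|\nARC|}$ equals the second term times $(\sigma^2L/|\nAC|)^{1/4}$, which is $\lesssim 1$ by the auxiliary condition.

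With this single change, your route becomes a complete proof that differs from the paper's. It bypasses the optimality argument and the crude linear cross-term bounds of Lemma~\ref{lemma:annoying}, and it yields a bound slightly sharper than \eqref{eq:pcr.error}.
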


\subsection{Key Lemmas for Lemma~\ref{lemma:param.est.1}} \label{sec:pcr.key.lemmas}
We introduce notation that will be used throughout the proofs in this section. 
Denote the SVDs of  $\bbX$ and $\hbX$ as $\bbX = \bU \bS \bV^\top$ and $\hbX = \bhU \bhS \bhV^\top$. 
Denote $\bP = \bU \bU^\top$ and $\bhP = \bhU \bhU^\top$ as the projection matrices onto the column spaces of $\bbX$ and $\hbX$. 
Similarly, denote $\bQ = \bV \bV^\top$ and $\bhQ = \bhV \bhV^\top$ as the projection matrices onto the row spaces of $\bbX$ and $\hbX$. 
Let $s_\ell$ and $\hs_\ell$ denote the $\ell$-th singular values of $\bbX$ and $\hbX$.

\begin{lemma} \label{lemma:theta} 
Let Assumptions~\ref{assump:lfm}, \ref{assump:mean_ind}, \ref{assump:linear_span}, and \ref{assump:subspace} hold. 
Then, conditional on $\Ec$, $A_{ij} = \bbz^\top \bbeta^*$ and $\bby = \bbX^\top \bbeta^*$, 
where $\bbeta^*$ is defined in \eqref{eq:beta.star}. 
\end{lemma}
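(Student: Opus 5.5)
The plan is to combine Proposition~\ref{thm:identification} with the projection structure of $\bbeta^*$. Proposition~\ref{thm:identification} uses Assumptions~\ref{assump:lfm}--\ref{assump:linear_span}. It supplies some $\bbeta \in \Rb^{|\nAR|}$ with $A_{ij} = \bbz^\top \bbeta$ and $\bby = \bbX^\top \bbeta$; here we use $\bbz = \Ex[\bY_{\nAR, j}\mid\Ec]$, $\bbX = \Ex[\bY_{\nAR,\nAC}\mid\Ec]$ and $\bby = \Ex[\bY_{i,\nAC}\mid\Ec]$. Two things must then be shown. The first is that $\bbeta^* = \bbX^{\top,\dagger}\bby$ also satisfies $\bby = \bbX^\top\bbeta^*$. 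The second is that $\bbz^\top\bbeta^* = \bbz^\top\bbeta$.

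\textbf{Step 1 (row identity).} Since $\bby = \bbX^\top\bbeta$ lies in the column space of $\bbX^\top$, we get
\[
\bbX^\top\bbeta^* = \bbX^\top\bbX^{\top,\dagger}\bby = \bby .
\]
The operator $\bbX^\top\bbX^{\top,\dagger}$ is the orthogonal projector onto $\mathrm{col}(\bbX^\top)$, so it fixes $\bby$. Equivalently,
\[
\bbeta^* = \bbX^{\top,\dagger}\bbX^\top\bbeta = \bP\bbeta ,
\]
where $\bP = \bU\bU^\top$ projects onto $\mathrm{col}(\bbX)$. This also confirms the two expressions for $\bbeta^*$ in \eqref{eq:beta.star}.

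\textbf{Step 2 (extrapolation to column $j$).} Write $\bbeta - \bbeta^* = (\bI - \bP)\bbeta$, which lies in $\mathrm{col}(\bbX)^\perp = \ker(\bbX^\top)$. The key point is that $\bbz \in \mathrm{col}(\bbX)$. By Assumption~\ref{assump:subspace} there are coefficients $\alpha_q$ with $\bv_j = \sum_{q\in\nAC}\alpha_q\bv_q$. Under Assumptions~\ref{assump:lfm} and \ref{assump:mean_ind}, $\bar z_\ell = \langle \bu_\ell, \bv_j\rangle$ and $\bar X_{\ell q} = \langle\bu_\ell,\bv_q\rangle$ for every $\ell \in \nAR$, exactly as in the proof of Proposition~\ref{thm:identification}. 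Therefore $\bbz = \bbX\balpha$.

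It follows that
\[
\bbz^\top(\bbeta - \bbeta^*) = \balpha^\top\bbX^\top(\bI-\bP)\bbeta = 0 ,
\]
because $\bbX^\top(\bI - \bP) = \bzero$. Hence $A_{ij} = \bbz^\top\bbeta = \bbz^\top\bbeta^*$.

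\textbf{Main obstacle.} The argument is short. The only delicate point is Step 2, where Assumption~\ref{assump:subspace} is essential. Without it, $\bbz$ could have a component outside $\mathrm{col}(\bbX)$, and the minimum-norm representer would then fail to reproduce $A_{ij}$ whenever $\bbX$ is rank deficient. Everything is conditional on $\Ec$ throughout, so $\bbX$, $\bby$ and $\bbz$ are deterministic and the pseudoinverse manipulations are purely linear-algebraic.
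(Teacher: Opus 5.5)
Your proposal is correct and follows essentially the same route as the paper: Proposition~\ref{thm:identification} supplies a representer $\bbeta$, the column-span assumption gives $\bbz \in \mathrm{col}(\bbX)$, and hence $\bbz^\top \bbeta = \bbz^\top \bP \bbeta = \bbz^\top \bbeta^*$. Your Step~1 also spells out the row identity $\bbX^\top \bbX^{\top,\dagger} \bby = \bby$ via the projector onto $\mathrm{col}(\bbX^\top)$, which the paper only covers with ``the same arguments can be applied.''
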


\begin{lemma} \label{lemma:singular_values}
Let Assumptions  \ref{assump:lfm}, \ref{assump:mean_ind}, and \ref{assump:subg} hold. 
Then, conditional on $\Ec$, for any $t > 0$ with probability at least $1-2\exp(-t^2)$, 
\begin{align}
	\left| s_{\br} - \hs_{\br} \right| \lesssim \sigma \left( \sqrt{\abs{\nAR}} + \sqrt{\abs{\nAC}} + t \right).
\end{align} 
\end{lemma}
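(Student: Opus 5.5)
The argument is Weyl's inequality (Lemma~\ref{lemma:weyl}) followed by the subgaussian operator-norm bound (Lemma~\ref{lemma:subg_matrix}). The plan is to compare the $\br$-th singular value of the noisy anchor block $\bX$ with that of its mean $\bbX$; the difference is the noise block $\bXi_X$.

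First fix what $\hs_{\br}$ means. The denoised matrix $\hbX = \bX^{(\br)}$ is the rank-$\br$ truncation of $\bX$. Its top $\br$ singular values are therefore exactly the top $\br$ singular values of $\bX$, so $\hs_{\br} = \sigma_{\br}(\bX)$. This reduces the claim to comparing $\sigma_{\br}(\bX)$ with $s_{\br} = \sigma_{\br}(\bbX)$.

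By Lemma~\ref{lemma:weyl}, applied with $\bA = \bbX$ and $\bB = \bX$,
\begin{align}
	\left| s_{\br} - \hs_{\br} \right| \le \left\| \bX - \bbX \right\|_{\txtop} = \left\| \bXi_X \right\|_{\txtop}.
\end{align}

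It remains to control $\|\bXi_X\|_{\txtop}$ conditional on $\Ec$, and this is the only step needing care: we must check that Lemma~\ref{lemma:subg_matrix} applies after conditioning. Conditional on $\Ec$, the set $\nAR \times \nAC$ is a deterministic index set on which $D_{\ell q} = 1$. By \eqref{eq:sutva}, $Y_{\ell q} = \tY_{\ell q}$ there. By Assumptions~\ref{assump:lfm} and \ref{assump:mean_ind}, $\bar Y_{\ell q} = \langle \bu_\ell, \bv_q \rangle$, so the entries of $\bXi_X$ are exactly $\varepsilon_{\ell q}$ for $(\ell, q) \in \nAR \times \nAC$. By Assumption~\ref{assump:subg} these entries are independent and mean zero, with $\|\varepsilon_{\ell q}\|_{\psi_2} \le C_\varepsilon \sigma_{\ell q} \le C_\varepsilon \sigma$.

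Lemma~\ref{lemma:subg_matrix}, applied to the $|\nAR| \times |\nAC|$ matrix $\bXi_X$ with $K \le C_\varepsilon \sigma$, then gives, for any $t > 0$ and with probability at least $1 - 2\exp(-t^2)$,
\begin{align}
	\left\| \bXi_X \right\|_{\txtop} \le C C_\varepsilon \sigma \left( \sqrt{|\nAR|} + \sqrt{|\nAC|} + t \right).
\end{align}
The constant $C C_\varepsilon$ is absorbed into $\lesssim$. Combining this with the Weyl bound above proves the lemma.
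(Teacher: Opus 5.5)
Your proof is correct and matches the paper's argument: Weyl's inequality (Lemma~\ref{lemma:weyl}) gives $|s_{\br} - \hs_{\br}| \le \|\bXi_X\|_{\txtop}$, and Lemma~\ref{lemma:subg_matrix} with $K \le C_\varepsilon \sigma$ finishes. The paper leaves two details implicit that you spell out: that $\hs_{\br}$ equals the $\br$-th singular value of $\bX$ under the rank-$\br$ truncation, and that, conditional on $\Ec$, the entries of $\bXi_X$ are the independent mean-zero noise terms $\varepsilon_{\ell q}$.
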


\begin{lemma} \label{lemma:subspace} 
Let Assumptions  \ref{assump:lfm}, \ref{assump:mean_ind}, and \ref{assump:subg} hold. 
Then, conditional on $\Ec$, for any $t > 0$ with probability at least $1-2\exp(-t^2)$, 
\begin{align}
        \left \| \bhP - \bP \right \|_{\emph{op}}
        \lesssim \sigma s_{\br}^{-1} \left(\sqrt{\abs{\nAC}} + \sqrt{\abs{\nAR}} + t \right).
\end{align}
\end{lemma}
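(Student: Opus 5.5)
The statement to prove is Lemma~\ref{lemma:subspace}: the perturbation bound on the left singular projector $\bhP$ of the rank-$\br$ truncation $\hbX = \bX^{(\br)}$. The approach is the standard one: combine Wedin's theorem (Lemma~\ref{thm:wedin}) with the subgaussian operator-norm bound (Lemma~\ref{lemma:subg_matrix}). The statement is about left singular subspaces, whereas Lemma~\ref{thm:wedin} is written for right singular vectors. I will therefore apply it to the transposes $\bbX^\top$ and $\bX^\top$, whose right singular vectors are the left singular vectors of $\bbX$ and $\bX$.

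The first step is to identify the objects. The column space of $\hbX$ is spanned by the top-$\br$ left singular vectors of $\bX$, since truncation does not change those vectors. Hence $\bhP$ is the projector onto the top-$\br$ left singular subspace of $\bX$. Likewise, $\bP$ is the projector onto the top-$\br$ left singular subspace of $\bbX$, because $\rank(\bbX) = \br$.

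Next, apply Lemma~\ref{thm:wedin} with $\bA = \bbX^\top$, $\bB = \bX^\top$ and $k = \br$. Since $\rank(\bbX) = \br$, we have $s_{\br+1} = 0$, so the spectral gap is exactly $s_{\br}$. This gives
\begin{align}
	\| \bhP - \bP \|_{\txtop} \le \frac{2}{s_{\br}} \, \| \bX - \bbX \|_{\txtop} = \frac{2}{s_{\br}} \, \| \bXi_X \|_{\txtop}.
\end{align}
Using the gap of the population matrix, rather than a gap mixing $s_{\br}$ and $\hs_{\br+1}$, avoids any need to invoke Weyl's inequality (Lemma~\ref{lemma:weyl}) here.

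It remains to bound $\|\bXi_X\|_{\txtop}$. Conditional on $\Ec$, the anchor sets are fixed and every entry of $\bX$ is observed. By \eqref{eq:sutva} and Assumption~\ref{assump:lfm}, this gives $\bXi_X = [\varepsilon_{\ell q}]_{(\ell,q) \in \nAR \times \nAC}$. By Assumptions~\ref{assump:mean_ind} and \ref{assump:subg}, the entries of $\bXi_X$ are independent and mean zero, with $\|\varepsilon_{\ell q}\|_{\psi_2} \le C_\varepsilon \sigma_{\ell q} \le C_\varepsilon \sigma$. Lemma~\ref{lemma:subg_matrix} then yields, with probability at least $1 - 2\exp(-t^2)$,
\begin{align}
	\| \bXi_X \|_{\txtop} \lesssim \sigma \left( \sqrt{|\nAR|} + \sqrt{|\nAC|} + t \right).
\end{align}
Substituting this into the Wedin bound gives the claim.

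There is no real obstacle in this argument. The only points needing care are two pieces of bookkeeping: the transposition needed to apply the right-singular-vector version of Wedin's theorem to the left subspaces, and confirming that the gap term vanishes because $\bbX$ has rank exactly $\br$.
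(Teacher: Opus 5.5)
Your proposal is correct and follows essentially the same route as the paper. The paper also applies Lemma~\ref{thm:wedin} with the population gap $s_{\br} - s_{\br+1} = s_{\br}$ and then bounds $\|\bXi_X\|_{\txtop}$ via Lemma~\ref{lemma:subg_matrix}; your transposition to get the left singular subspaces and your identification of $\bXi_X$ with the noise block $[\varepsilon_{\ell q}]$ just spell out steps the paper leaves implicit.
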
 

\begin{lemma}  \label{lemma:annoying} 
Let Assumptions \ref{assump:lfm}--\ref{assump:spectra} hold. 
Define the sigma-field $\Hc \coloneqq \sigma(\Ec, \bX)$. 
Then, conditional on $\Hc$, with probability at least $1 - \mathcal{O}\left( |\nARC|^{-10} \right)$, 
\begin{align}
	\left \langle \hbX^\top \bDelta_\beta, \bxi_y \right \rangle  \lesssim \sigma^2 \br + \sigma^2 \left(\sqrt{\br L} + L \right) + \sigma \sqrt{L} \cdot \left\{ \sqrt{\abs{\nAC}} +  \left\| \bXi_X \right\|_{\eop} \cdot \left \| \bbeta^* \right\|_2 \right\},
\end{align} 
where $L \coloneqq \log(|\nARC|)$. 
\end{lemma}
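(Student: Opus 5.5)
The plan is to condition on $\Hc = \sigma(\Ec, \bX)$ and use that the target-row noise $\bxi_y$ is independent of $\Hc$. Indeed $\bxi_y$ consists of the entries $\varepsilon_{iq}$, $q \in \nAC$, and $i \notin \nAR$, so these entries are disjoint from the anchor block. By Assumptions~\ref{assump:mean_ind} and \ref{assump:subg}, conditional on $\Hc$ they are therefore independent, mean zero, and subgaussian with $\|\varepsilon_{iq}\|_{\psi_2} \lesssim \sigma$. Given $\Hc$, the denoised matrix $\hbX$, its row-space projector $\bhQ$, the vector $\bbeta^*$ and $\bXi_X$ are all fixed.

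\emph{Step 1 (decomposition).} Since $\hbbeta = \hbX^{\top,\dagger}\by = \hbX^{\top,\dagger}(\bby + \bxi_y)$, and $\hbX^\top \hbX^{\top,\dagger} = \bhQ$, I will write
\begin{align}
	\hbX^\top \bDelta_\beta = \bhQ \bxi_y + \left( \bhQ \bby - \hbX^\top \bbeta^* \right).
\end{align}
Lemma~\ref{lemma:theta} gives $\bby = \bbX^\top \bbeta^*$, and $\bhQ \hbX^\top = \hbX^\top$. Hence the deterministic part equals $\bhQ(\bbX - \hbX)^\top \bbeta^*$. This yields the split
\begin{align}
	\left\langle \hbX^\top \bDelta_\beta, \bxi_y \right\rangle = \bxi_y^\top \bhQ \bxi_y + \left\langle \bhQ (\bbX - \hbX)^\top \bbeta^*, \bxi_y \right\rangle.
\end{align}

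\emph{Step 2 (quadratic term).} I will apply Hanson--Wright (Lemma~\ref{lemma:hanson.wright}) conditionally on $\Hc$. Because $\bhQ$ is a rank-$\br$ projection, $\|\bhQ\|_F^2 = \br$ and $\|\bhQ\|_\txtop = 1$. The mean satisfies $\Ex[\bxi_y^\top \bhQ \bxi_y \mid \Hc] = \sum_q \bhQ_{qq}\sigma^2_{iq} \le \sigma^2 \br$. Choosing the deviation level $t \asymp \sigma^2(\sqrt{\br L} + L)$ makes the failure probability $\Oc(|\nARC|^{-10})$. This produces the terms $\sigma^2\br + \sigma^2(\sqrt{\br L} + L)$.

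\emph{Step 3 (linear term).} I will apply Hoeffding (Lemma~\ref{lemma:hoeffding}) with the fixed weight vector $\ba = \bhQ(\bbX - \hbX)^\top\bbeta^*$, taking $t \asymp \sigma\sqrt{L}\,\|\ba\|_2$. It then remains to bound $\|\ba\|_2 \le \|\hbX - \bbX\|_\txtop \|\bbeta^*\|_2$. Write $\hbX - \bbX = (\hbX - \bX) + \bXi_X$. The first piece has operator norm $\hs_{\br+1}(\bX)$, and by Weyl (Lemma~\ref{lemma:weyl}) with $\rank(\bbX) = \br$ this is at most $\|\bXi_X\|_\txtop$. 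Hence $\|\ba\|_2 \le 2\|\bXi_X\|_\txtop\|\bbeta^*\|_2$, which matches the $\sigma\sqrt{L}\,\|\bXi_X\|_\txtop\|\bbeta^*\|_2$ term.

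The stated bound also contains a $\sigma\sqrt{L}\sqrt{|\nAC|}$ term. It would arise if one bounds the piece $\bhQ\bby$ separately, using $\|\bby\|_2 \le \sqrt{|\nAC|}$ from Assumption~\ref{assump:bounded}. I will keep it as a harmless slack term, since the combined bound above is no larger. A union bound over Steps 2 and 3 then gives the claim with probability $1 - \Oc(|\nARC|^{-10})$.

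The main point needing care is Step 1's conditioning. Everything relies on $\bhQ$ and $\hbX$ being $\Hc$-measurable while $\bxi_y$ remains centered and independent given $\Hc$. This holds because $\hbX$ is built from $\bX$ alone, and because $\bD$ sits in $\Ec$, so the MNAR selection introduces no dependence once we condition on $\Ec$.
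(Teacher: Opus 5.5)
Your proof is correct. It shares the paper's skeleton: condition on $\Hc$, apply Hanson--Wright to $\bxi_y^\top\bhQ\bxi_y$ with $\|\bhQ\|_F^2\le\br$ and $\|\bhQ\|_\txtop\le 1$, apply Hoeffding to the linear part, and take a union bound. The genuine difference is how you group the linear terms.

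The paper splits the inner product into three pieces,
$\langle\bhQ\bby,\bxi_y\rangle+\langle\bhQ\bxi_y,\bxi_y\rangle-\langle\hbX^\top\bbeta^*,\bxi_y\rangle$,
and bounds the two linear forms separately. It uses $\|\bhQ\bby\|_2\le\sqrt{|\nAC|}$ and $\|\hbX^\top\bbeta^*\|_2\lesssim\|\bXi_X\|_\txtop\|\bbeta^*\|_2+\sqrt{|\nAC|}$. This is the sole source of the $\sigma\sqrt{L}\sqrt{|\nAC|}$ term in the statement.

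You merge the two linear forms before bounding. Using $\bby=\bbX^\top\bbeta^*$ and $\bhQ\hbX^\top=\hbX^\top$, you obtain the exact identity $\bhQ\bby-\hbX^\top\bbeta^*=\bhQ(\bbX-\hbX)^\top\bbeta^*$. Its norm is at most $2\|\bXi_X\|_\txtop\|\bbeta^*\|_2$. The two order-$\sqrt{|\nAC|}$ vectors essentially cancel, since $\hbX^\top\bbeta^*\approx\bbX^\top\bbeta^*=\bby$. Your bound is therefore strictly sharper and implies the stated one.

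The paper's grouping saves a line of algebra. Yours buys more than you credit it with, since the term you call harmless slack is not harmless downstream. In the proof of Lemma~\ref{lemma:param.est.1}, $\sigma\sqrt{L|\nAC|}$ is divided by $\hs_{\br}^2\asymp|\nARC|/\br$ and a square root is taken. The result is exactly the second summand $\sqrt{\sigma\br}\,L^{1/4}/(\sqrt{|\nAR|}\,|\nAC|^{1/4})$ of $\Lambda_\beta$, which produces the slow $|\nAC|^{-1/4}$ branch of $\Phi$.

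Propagating your version through that proof would replace this summand by terms of order $\sigma\sqrt{\br L}/\sqrt{|\nARC|}$. That would bring $\Phi$ to the $\min\{\sqrt{|\nAR|},\sqrt{|\nAC|}\}^{-1}$ scale, up to $\br$ and $L$ factors. This is the kind of sharper, symmetric analysis the paper anticipates after Theorem~\ref{thm:consistency}.

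One minor point: the independence of $\bxi_y$ from $\bX$ rests on $i\notin\nAR$. This follows from $\nAR\subseteq\NR(j)$ and $D_{ij}=0$. The paper leaves it implicit, and your explicit remark is appropriate.
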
 

\subsection{Proof of Lemma~\ref{lemma:param.est.1}} 

\begin{proof}[Proof of Lemma~\ref{lemma:param.est.1}]
Condition on $\Ec$. 
Observe that 
\begin{align}
	\left\| \bDelta_\beta\right\|_2^2 &= \left\| \bhP \bDelta_\beta\right\|_2^2 + \left\| \bhP^\perp \bDelta_\beta\right\|_2^2,
\end{align}
where $\bhP^\perp = \bI - \bhP$. 
By definition, we have $\bhP^\perp \hbbeta = \bzero$ and $\bP^\perp \bbeta^* = \bzero$. 
As a result, we have 
\begin{align}
	\left\| \bDelta_\beta \right\|_2^2 &= \left\| \bhP \bDelta_\beta \right\|_2^2 + \left\| \bhP^\perp \bbeta^* \right\|_2^2 
	\\ &= \left\| \bhP \bDelta_\beta \right\|_2^2 + \left\| \left(\bhP^\perp - \bP^\perp \right) \bbeta^* \right\|_2^2
	\\ &= \left\| \bhP \bDelta_\beta \right\|_2^2 + \left\| \left(\bhP - \bP \right) \bbeta^* \right\|_2^2. \label{eq:delta.k.1}
\end{align} 
We proceed to bound each term on the RHS of \eqref{eq:delta.k.1} separately. 

\vspace{10pt}  
\noindent {\em Bounding $\| \bhP \bDelta_\beta\|_2^2$.} 
Since $\bhU$ is an isometry, it follows that 
\begin{align}\label{eq:thm1.2.1}
	\left\| \bhP \bDelta_\beta \right\|_2^2 &= \left\| \bhU^\top \bDelta_\beta \right\|_2^2.
\end{align}
We upper bound \eqref{eq:thm1.2.1} as follows: consider
\begin{align}
	\left\| \hbX^\top \bDelta_\beta\right\|_2^2 
	= \left(\bhU^\top \bDelta\right)^\top \bhS^2 \left(\bhU^\top \bDelta\right)   
	\geq \hs_{\br}^2 \cdot \left\| \bhU^\top \bDelta_\beta\right\|_2^2. \label{eq:key_singular_value_trick}
\end{align}
Collectively, \eqref{eq:thm1.2.1} and \eqref{eq:key_singular_value_trick} imply  
\begin{align}
	\Big\| \bhP \bDelta_\beta \Big\|_2^2 
	\le \hs_{\br}^{-2} \cdot \left\| \hbX^\top \bDelta_\beta \right\|_2^2 \label{eq:intermediate_bound_2_Lem10.1}
\end{align}
To bound the numerator in \eqref{eq:intermediate_bound_2_Lem10.1}, note 
\begin{align}
	\left\| \hbX^\top \bDelta_\beta\right\|_2^2 
	&\le 2 \cdot \left\| \hbX^\top \hbbeta -\bby \right\|_2^2 
	+ 2 \cdot \left\| \bby - \hbX^\top \bbeta^* \right\|_2^2
	\\ 
	& = 2 \cdot \left\| \hbX^\top \hbbeta  -\bby \right\|_2^2
	+ 2 \cdot \left\| \left( \bbX - \hbX \right)^\top \bbeta^* \right \|_2^2,  \label{eq:intermediate_bound_1_Lem10.1} 
\end{align}
where the final equality leverages Lemma~\ref{lemma:theta}. 
We then bound the second term on the RHS of \eqref{eq:intermediate_bound_1_Lem10.1} as 
\begin{align}\label{eq:2_infty_matrix}
	\left\| \left( \bbX - \hbX \right)^\top \bbeta^* \right\|_2^2
	&\le \left\| \bbX - \hbX \right\|_{\txtop}^2 \cdot \left\|\bbeta^* \right\|_2^2. 
\end{align}
Substituting \eqref{eq:intermediate_bound_1_Lem10.1} into \eqref{eq:intermediate_bound_2_Lem10.1} and subsequently using \eqref{eq:2_infty_matrix} implies
\begin{align}
	\left\| \bhP \bDelta_\beta  \right\|_2^2  
	& \le \frac{2}{\hs_{\br}^2} \left\{ \left\| \hbX^\top \hbbeta  -\bby \right\|_2^2
	+ \left\| \bbX - \hbX \right\|_{\txtop}^2 \cdot \left\|\bbeta^* \right\|_2^2 \right\}.\label{eq:thm1.2.2}
\end{align}
Next, observe that
\begin{align}
	\left\| \hbX^\top \hbbeta - \by \right\|_2^2 
	&= \left\| \hbX^\top \hbbeta - \bby - \bxi_y \right\|_2^2 
	\\ &= \left\| \hbX^\top \hbbeta - \bby \right\|_2^2 + \left\|\bxi_y \right\|_2^2 
	 - 2 \left \langle \hbX^\top \hbbeta - \bby, \bxi_y \right \rangle. \label{eq:thm1.2.3}
\end{align}
By the optimality of $\hbbeta$ and Lemma~\ref{lemma:theta}, we have 
\begin{align}
	\left\| \hbX^\top \hbbeta - \by \right\|_2^2 
	&\le \left\| \hbX^\top \bbeta^* - \by \right\|_2^2
	\\ &= \left\| \hbX^\top \bbeta^* -  \bby - \bxi_y \right\|_2^2  
	\\ &= \left\| \hbX^\top \bbeta^* -  \bbX^\top \bbeta^* - \bxi_y \right\|_2^2  
	\\ &= \left\| \left(\hbX - \bbX \right)^\top \bbeta^* \right\|_2^2  + \left\| \bxi_y \right\|_2^2  - 2 \left\langle \hbX^\top \bbeta^* - \bby, \bxi_y \right\rangle. \label{eq:thm1.2.4}
\end{align}
From \eqref{eq:thm1.2.3} and \eqref{eq:thm1.2.4}, we have 
\begin{align}
	\left\| \hbX^\top \hbbeta - \bby \right\|_2^2 
	& \le \left\| \left(\hbX - \bbX \right)^\top \bbeta^* \right\|_2^2 
 	+ 2 \left\langle \hbX^\top \bDelta_\beta, \bxi_y \right\rangle 
	\\ &\quad \le \left\| \hbX - \bbX \right\|_{\txtop}^2 \cdot \left\|\bbeta^* \right\|_2^2 
	 + 2 \left\langle \hbX^\top \bDelta_\beta, \bxi_y \right\rangle. \label{eq:thm1.2.5}
\end{align}
Using \eqref{eq:thm1.2.2} and \eqref{eq:thm1.2.5}, we obtain 
\begin{align} 
 	\left\| \bhP \bDelta_\beta \right\|_2^2 
	& \le \frac{4}{\hs_{\br}^2} \left\{ \left\|  \hbX - \bbX \right\|_{\txtop}^2 \cdot \left\| \bbeta^* \right\|_2^2
	+ \left\langle \hbX^\top \bDelta_\beta, \bxi_y \right\rangle \right\}. \label{eq:thm1.2}
\end{align}
By Lemma~\ref{lemma:weyl} and the fact that $\rank(\bbX) = \br$, it follows that
\begin{align}
	\left\| \hbX - \bX \right\|_\txtop &= \hs_{\br+1} 
	= \left| \hs_{\br+1} - s_{\br+1} \right| \le \left\| \bX - \bbX \right\|_\txtop,
\end{align}
and hence, 
\begin{align}
	\left \| \hbX - \bbX \right\|_\txtop \le \left \| \hbX - \bX \right \|_\txtop + \left\| \bX - \bbX \right\|_\txtop 
	\le 2 \cdot \left \| \bXi_X \right \|_\txtop. \label{eq:op.useful}
\end{align}
We next record a useful fact. By Assumption~\ref{assump:bounded}, $\| \bby \|_2 \le \sqrt{|\nAC|}$ and hence, 
\begin{align}
	\left \| \bbeta^* \right \|_2 &= \| \bbX^{\top, \dagger} \bby \|_2
	\le \| \bbX^\dagger \|_\txtop \cdot \| \bby \|_2
	\le \frac{\sqrt{|\nAC|}}{s_{\br}}. \label{eq:beta.l2.bound}
\end{align}
By Assumption~\ref{assump:spectra}, we can simplify \eqref{eq:beta.l2.bound} as 
\begin{align}
	\left \| \bbeta^* \right \|_2 \le \frac{\sqrt{\br}}{\sqrt{|\nAR|}}. \label{eq:beta.l2.bound.2} 
\end{align}
Inserting \eqref{eq:op.useful} and \eqref{eq:beta.l2.bound.2} into \eqref{eq:thm1.2},
\begin{align}
	\left\| \bhP \bDelta_\beta \right\|_2^2 
	& \lesssim \frac{1}{\hs_{\br}^2} \left\{ \frac{\br \cdot \left \| \bXi_X \right \|^2_\txtop}{|\nAR|}
	+ \left\langle \hbX^\top \bDelta_\beta, \bxi_y \right\rangle \right\}.
\end{align}
Invoking Assumption~\ref{assump:spectra}, the rank condition \eqref{eq:rank.sep}, and Lemmas~\ref{lemma:subg_matrix} and \ref{lemma:annoying}, we conclude that with probability at least $1 - \mathcal{O}\left( |\ARC|^{-10} \right)$
\begin{align}
	\left\| \bhP \bDelta_\beta \right\|_2
	&\lesssim  \frac{\sigma \br } { \sqrt{|\nAR|} \cdot \min\left\{\sqrt{|\nAR|}, \sqrt{|\nAC|}\right\}} + \frac{\sqrt{\sigma \br} \cdot \log^{1/4}(|\nARC|)}{\sqrt{|\AR|} \cdot |\AC|^{1/4}} . 
	\label{eq:hat.pk.1}
\end{align}

\vspace{10pt}     
\noindent {\em Bounding $\| (\bhP - \bP) \bbeta^* \|_2$.} 
From \eqref{eq:beta.l2.bound.2}, 
\begin{align}
	\left\| \left(\bhP - \bP \right) \bbeta^* \right\|_2 
	\le \left\| \bhP - \bP \right\|_{\txtop} \cdot \left\| \bbeta^* \right\|_2
	\le \frac{\sqrt{\br}}{\sqrt{|\nAR|}} \cdot \left\| \bhP - \bP \right\|_{\txtop}. 
\end{align} 
Leveraging Assumption~\ref{assump:spectra}, the rank condition \eqref{eq:rank.sep}, and Lemma~\ref{lemma:subspace}, we have with probability at least $1 - \mathcal{O}\left( |\ARC|^{-10} \right)$: 
\begin{align}
	\left\| \left(\bhP - \bP \right) \bbeta^* \right\|_2 
	&\lesssim  \frac{\sigma \br}{\sqrt{|\AR|} \cdot \min \left\{ \sqrt{|\AR|}, \sqrt{|\AC|} \right\}} .  
	\label{eq:hat.pk.2} 
\end{align}

\vspace{10pt}  
\noindent {\em Collecting terms.} 
Combining \eqref{eq:hat.pk.1} with \eqref{eq:hat.pk.2} and simplifying, we obtain our desired result. 

\end{proof}

\subsection{Proofs of Key Lemmas for Lemma~\ref{lemma:param.est.1}} \label{sec:pcr.lemmas.proofs}

\subsubsection{Proof of Lemma~\ref{lemma:theta}}

\begin{proof}[Proof of Lemma~\ref{lemma:theta}] 
Condition on $\Ec$. 
By Assumption~\ref{assump:subspace}, there exists an $\balpha \in \Rb^{|\AC|}$ such that
\begin{align} 
	\bv_j &= \sum_{q \in \AC} \alpha_q \cdot \bv_q. \label{eq:a7.1}
\end{align}
Thus, following the proof of Proposition~\ref{thm:identification}, 
\begin{align}
	\Ex\left[ Y_{\ell j} \mid \cE \right] 
	&= \Ex \left[ \tY_{\ell j} \mid \cE \right] &&\because \text{\eqref{eq:sutva}}
	\\
	&= \Ex \left[ \left\langle \bu_\ell, \bv_j \right \rangle + \varepsilon_{\ell j} \mid \Ec \right]  && \because \text{Assumption~\ref{assump:lfm}} 
	\\
	&= \left \langle \bu_\ell, \bv_j \right \rangle \mid \cE && \because \text{Assumption~\ref{assump:mean_ind}}
	\\
	&= \sum_{q \in \AC} \alpha_q \cdot \left \langle \bu_\ell, \bv_q \right \rangle \mid \cE && \because \text{\eqref{eq:a7.1}} 
	\\
	&= \sum_{q \in \AC} \alpha_q \cdot \Ex \left[ \left \langle \bu_\ell, \bv_q \right \rangle + \varepsilon_{\ell q} \mid \cE \right] && \because \text{Assumption~\ref{assump:mean_ind}}
	\\
	&= \sum_{q \in \AC}  \alpha_q \cdot \Ex \left[ \tY_{\ell q} \mid \cE \right] && \because \text{Assumption~\ref{assump:lfm}}
	\\
	&= \sum_{q \in \AC}  \alpha_q \cdot \Ex \left[ Y_{\ell q} \mid \cE \right]. && \because \text{\eqref{eq:sutva}}
	\label{eq:alpha.exist}
\end{align}
This establishes that $\bbz$ lies within the column space of $\bbX$, which yields $\bbz = \bP \cdot\bbz$. 
In turn, 
\begin{align} 
   	\bbz^\top \bbeta 
    	&=  \bbz^\top \bP \bbeta 
	= \bbz^\top \bbeta^*. 
\end{align}
Hence, we can reformulate Proposition~\ref{thm:identification} as $A_{ij} = \bbz^\top \bbeta^*$. 
The same arguments can be applied to obtain $\bby = \bbX^\top \bbeta^*$. 
\end{proof}

\subsubsection{Proof of Lemma~\ref{lemma:singular_values}}
\begin{proof}[Proof of Lemma~\ref{lemma:singular_values}] 
Condition on $\Ec$. 
By Lemma~\ref{lemma:weyl}, we obtain 
\begin{align}
    \abs{ \hs_{\br} - s_{\br} }
    &\le \left\| \bX - \bbX \right\|_{\txtop} = \left \| \bXi_X \right \|_\txtop.  \label{eq:s.compressed.1} 
\end{align}
Applying Lemma~\ref{lemma:subg_matrix} to the above completes the proof. 
\end{proof}

\subsubsection{Proof of Lemma~\ref{lemma:subspace}}

\begin{proof}[Proof of Lemma~\ref{lemma:subspace}] 
Condition on $\Ec$. 
Recall $\br = \rank(\bbX)$ and thus $s_{\br+1} = 0$. 
By Lemma \ref{thm:wedin}, 
\begin{align} 
    \left\| \bhP - \bP \right\|_{\txtop} 
    &\le \frac{2}{s_{\br} - s_{\br+1}} \left\| \bX - \bbX \right\|_{\txtop}
    = \frac{2}{s_{\br}} \left\| \bX - \bbX \right\|_{\txtop}. 
\end{align}
Applying Lemma \ref{lemma:subg_matrix} to the above gives our desired result. 
\end{proof} 

\subsubsection{Proof of Lemma~\ref{lemma:annoying}}

\begin{proof}[Proof of Lemma~\ref{lemma:annoying}] 
Condition on $\Hc$. 
Note that $(\hbX, \bbeta^*)$ are $\Hc$-measurable, and the conditional law of $\bxi_y$ given $\Hc$ has independent mean-zero subgaussian coordinates. 
Next, recalling $\hbbeta = \bhU \bhS^{-1} \bhV^\top \by$, we obtain 
\begin{align} \label{eq:rewrite} 
	\hbX^\top \hbbeta = \bhV \bhS \bhU^\top \bhU \bhS^{-1} \bhV^\top \by 
	=  \bhQ \bby +  \bhQ \bxi_y.
\end{align}
We then obtain 
\begin{align}
	&\left\langle \hbX^\top \bDelta_\beta,  \bxi_y  \right \rangle 
	= \left\langle  \bhQ \bby, \bxi_y \right\rangle 
	+ \left\langle  \bhQ \bxi_y, \bxi_y \right\rangle 
	- \left\langle \hbX^\top \bbeta^*, \bxi_y \right\rangle.  \label{eq:lem4.0}
\end{align}
Note that $\bxi_y$ is independent of $\bX$ and is thus also independent of $\bhV$ and $\hbX$. 
Therefore, 
\begin{align}
    \Ex\left[\left\langle  \bhQ \bby, \bxi_y \right\rangle \mid \Hc \right] &= 0, \label{eq:lemma10.5_exp_2}
    \\ \Ex\left[\left\langle \hbX^\top \bbeta^*, \bxi_y \right\rangle \mid \Hc \right]. &= 0 \label{eq:lemma10.5_exp_3}
\end{align}
The cyclic property of the trace operator then yields 
\begin{align}
	\Ex\left[ \left\langle  \bhQ  \bxi_y, \bxi_y  \right\rangle \mid \Hc \right] 
	&= \Ex\left[ \bxi_y^\top   \bhQ \bxi_y \mid \Hc \right] 
	= \Ex \left[\tr(\bxi_y^\top   \bhQ \bxi_y) \mid \Hc\right] 
	\\ &= \tr(\Ex\left[ \bxi_y  \bxi_y^\top \mid \Hc\right] \cdot  \bhQ) 
	\le \tr(\sigma^2  \bhQ)  
	\\ &=  \sigma^2 \br. \label{eq:thm1.lem3.0}
\end{align}
Accordingly, we have 
\begin{align}\label{eq:thm1.lem3.1}
	\Ex \left[ \left\langle \hbX^\top \bDelta_\beta,  \bxi_y  \right\rangle  \mid \Hc \right] &\le \sigma^2  \br. 
\end{align}
Conditional on $\Hc$, the vector $\bhQ \bby$ is deterministic, and $\| \bhQ \bby \|_2 \le \| \bby \|_2 \le \sqrt{|\nAC|}$, where the last inequality follows from Assumption~\ref{assump:bounded}. 
By Lemma~\ref{lemma:hoeffding} and \eqref{eq:lemma10.5_exp_2}, it follows that for any $t > 0$
%
\begin{align}\label{eq:lem4.1}
	\Pb\left(  \left| \left\langle \bhQ \bby, \bxi_y \right\rangle \right|  \ge C \sigma \sqrt{|\nAC| t} \mid \Hc \right) &\le  2 \exp(-t). 
\end{align}
Similarly, using \eqref{eq:lemma10.5_exp_3}, we have for any $t > 0$
\begin{align}
	\Pb\left(  \left| \left\langle \hbX^\top \bbeta^*, \bxi_y \right \rangle \right| \geq 
	C \sigma \sqrt{t} \cdot \left\{ \sqrt{|\nAC|} + \left\| \bXi_X \right\|_\txtop \cdot \left \|\bbeta^* \right\|_2 \right\} \mid \Hc \right)
	 \le 2 \exp(-t), 
	 \label{eq:lem4.2}
\end{align}
where we use the fact that
\begin{align}
	\left\| \hbX^\top \bbeta^* \right\|_2 
	 &= \left\| \hbX^\top \bbeta^* \pm  \bby \right\|_2 
	\\ & = \left\| \left(\hbX - \bbX \right)^\top \bbeta^* + \bby \right\|_2 
	&& \because \text{Lemma~\ref{lemma:theta}}
	\\ &\leq \left\| \left(\hbX - \bbX \right)^\top \bbeta^* \right\|_2 + \left\|\bby \right\|_2 
    \\& \leq \left\| \hbX - \bbX \right\|_{\txtop} \cdot \left\| \bbeta^* \right\|_2 + \sqrt{\abs{\AC}} && \because \text{Assumption~\ref{assump:bounded}}
	\\ &\lesssim \left\|\bXi_X \right\|_{\txtop} \cdot \left\| \bbeta^* \right\|_2 + \sqrt{|\AC|}. &&\because \text{\eqref{eq:op.useful}}
\end{align} 
Furthermore, conditional on $\Hc$, $\bhQ$ is deterministic and is an orthogonal projection of rank $\br$, yielding $\| \bhQ \|_{\txtop} \le 1$ and $\|\bhQ \|^2_F = \br$. Therefore, by Lemma~\ref{lemma:hanson.wright} and \eqref{eq:thm1.lem3.1}, for every $t > 0$,
\begin{align}\label{eq:lem4.3}
	\Pb\left(  \left\langle  \bhQ \bxi_y, \bxi_y \right\rangle - \Ex\left[ \left\langle  \bhQ \bxi_y, \bxi_y \right\rangle \mid \Hc \right]   \geq C \sigma^2 \left\{ \sqrt{\br t} + t \right\} \mid \Hc \right) & \leq  2\exp(-t). 
\end{align}
Combining \eqref{eq:lem4.1}, \eqref{eq:lem4.2}, and \eqref{eq:lem4.3}, and using a union bound, conditional on $\Hc$, with probability at least $1 - 6\exp(-t)$,
\begin{align}
	\left\langle \hbX^\top \bDelta_\beta, \bxi_y \right \rangle 
	\lesssim \sigma^2 \br + \sigma^2 \left( \sqrt{\br t} + t \right) + \sigma \sqrt{t} \left( \sqrt{|\nAC|} + \| \bXi_X \|_\txtop \cdot \| \bbeta^* \|_2 \right). 
\end{align}
Because the conditional failure probability is bounded by $6 \exp(-t)$ uniformly over the realized value of $\bX$, integrating over $\bX$ gives the same bound conditional only on $\Ec$. Taking $t = C_0 L$ for $C_0 > 0$ sufficiently large gives our desired result. 
\end{proof}

\section{Proof of Theorem~\ref{thm:consistency}} \label{sec:proof.consistency} 
\begin{proof}[Proof of Theorem~\ref{thm:consistency}] 
Condition on $\Ec$. 
By Lemma~\ref{lemma:theta}, 
\begin{align}
	\hA_{ij} - A_{ij}
	&= \left \langle \bbz, \bDelta_\beta \right \rangle  + \left \langle \bxi_z, \hbbeta \right \rangle. 
	\label{eq:consistency.1}
\end{align} 
%
%
Under Assumption~\ref{assump:bounded}, 
\begin{align}
	\left| \left \langle \bbz,  \bDelta_\beta \right \rangle \right| 
	&\le \| \bbz \|_2 \cdot \|  \bDelta_\beta \|_2
	\le \sqrt{|\AR|} \cdot \|  \bDelta_\beta \|_2.
	\label{eq:consistency.4} 
\end{align}
Next, define the event
\begin{align}
	\Gc_\xi \coloneqq \left\{ \left| \left \langle \bxi_z, \hbbeta \right \rangle \right| \le C \sigma \sqrt{L} \cdot \left \| \hbbeta \right \|_2 \right\}
\end{align}
for some absolute constant $C> 0$. 
Define the sigma-field $\Hc \coloneqq \sigma(\Ec, \bX, \by)$. Then, $\hbbeta$ is $\Hc$-measurable while $\bxi_z$ has independent, mean-zero, subgaussian coordinates. Therefore, by Lemma~\ref{lemma:hoeffding}, 
\begin{align}
	\Pb\left( \Gc_{\xi}^c \mid \Ec \right) = \Ex\left[ \Pb\left( \Gc_{\xi}^c \mid \Hc \right) \mid \Ec \right] \lesssim |\nARC|^{-10}.  
\end{align}
Using \eqref{eq:beta.l2.bound.2}, we expand  
\begin{align}
	\left \| \hbbeta \right \|_2 \le \left \| \bbeta^* \right\|_2 + \left\| \bDelta_\beta \right \|_2 
	\lesssim \frac{\sqrt{\br}}{\sqrt{|\nAR|}} + \left\| \bDelta_\beta \right \|_2. 
\end{align}
Hence, on $\Gc_\xi$, we have
\begin{align}
	\left| \hA_{ij} - A_{ij} \right| 
	\lesssim \left( \sqrt{|\nAR|} + \sigma \sqrt{L} \right) \cdot \left \| \bDelta_\beta \right \|_2 
	+ \frac{\sigma \sqrt{\br L}}{\sqrt{|\nAR|}}. 
	%
\end{align}
Define the event
\begin{align}
	\Gc_{\PCR, \beta} \coloneqq \left\{ \left\| \bDelta_\beta \right \|_2 \le C_0 \Lambda_\beta \right\},
\end{align}
where $C_0>0$ is an absolute constant and $\Lambda_\beta$ is defined in \eqref{eq:pcr.error}. 
By Lemma~\ref{lemma:param.est.1}, $\Pb(\Gc_{\PCR, \beta}^c \mid \Ec) \lesssim |\nARC|^{-10}$. 
Ergo, on the joint event $\Gc_\xi \cap \Gc_{\PCR, \beta}$,  
\begin{align}
	\left| \hA_{ij} - A_{ij} \right| 
	\lesssim \left( 1 + \frac{\sigma \sqrt{L}}{\sqrt{|\nAR|}} \right) \cdot \left\{ \frac{\sigma \br}{\min\{\sqrt{|\nAC|}, \sqrt{|\nAR|} \}} + \frac{\sqrt{\sigma \br} L^{1/4}}{|\nAC|^{1/4}} \right\}
	+ \frac{\sigma \sqrt{\br L}}{\sqrt{|\nAR|}}.
\end{align}
Taking a union bound, we conclude the above holds with probability at least $1 - \Oc(|\nARC|^{-10})$. 
Simplifying the bound then concludes the proof. 
\end{proof}

\section{Proof of Corollary~\ref{cor:mcar}} \label{sec:proof.cor.mcar}

\begin{proof}[Proof of Corollary~\ref{cor:mcar}] 
Without loss of generality, consider the recovery of the entry $(i,j) = (1,1)$. 
The proof has two parts: first, we show that a fully observed anchor cross of the required size exists with high probability; second, we apply Theorem~\ref{thm:consistency} to that cross.

For every pair of sets $\Rc \subseteq [m] \setminus \{1\}$, $\Cc \subseteq [n] \setminus \{1\}$, $|\Rc| = |\Cc| = d$, let $\Gc_{\Rc \Cc}$ denote the event that all entries in $\Sc_{\Rc \Cc} \coloneqq (\Rc \times \Cc) \cup (\Rc \times \{1\}) \cup (\{1\} \times \Cc)$ are observed. On $\Gc_{\Rc \Cc}$, $\Rc$ and $\Cc$ form valid anchor sets for the target $(1,1)$. 
The three sets in $\Sc_{\Rc \Cc}$ are pairwise disjoint. Consequently, $| \Sc_{\Rc \Cc}| = d^2 + 2d = (d+1)^2 -1 \eqqcolon Q$ observations. Under MCAR, $\Pb(\Gc_{\Rc \Cc}) = p^Q$. Additionally, there are $N = \binom{n-1}{d}^2$ candidate pairs $(\Rc, \Cc)$. Define
\begin{align}
	X \coloneqq \sum_{\substack{\Rc \subseteq [m] \setminus \{1\}, |\Rc|=d \\ \Cc \subseteq [n] \setminus \{1\}, |\Cc|=d}} \boldsymbol{1}\left\{\Gc_{\Rc \Cc} \right\}. 
\end{align}
Then, $X \ge 1$ precisely when at least one valid $d \times d$ anchor cross exists, and 
\begin{align}
	\mu \coloneqq \Ex[X] = N p^Q. 
\end{align}
For a sufficiently large constant $C > 0$, set 
\begin{align}
	p_0 \coloneqq \left(\frac{C \log(2/\eta)}{N} \right)^{1/Q}. 
\end{align}
Since $\{X \ge 1\}$ is increasing in the observation indicators, it is enough to establish the desired probability bound at $p = p_0$. 
At $p = p_0$, 
\begin{align}
	\mu = N p_0^Q = C \log(2/\eta).  \label{eq:mcar.1}
\end{align}
Notably, the different events $\Gc_{\Rc \Cc}$ are not independent because different candidate crosses may share observations. 
%
Consider two distinct candidates $(\Rc, \Cc)$ and $(\Rc', \Cc')$, and write $a \coloneqq \left| \Rc \cap \Rc' \right|$ and $b \coloneqq \left| \Cc \cap \Cc' \right|$. 
Their required observation sets share exactly
\begin{align}
	s(a,b) \coloneqq a + b + ab \label{eq:mcar.2}
\end{align}
entries: $a$ entries in the target column, $b$ entries in the target row, and $ab$ entries in the anchor block. Hence, it follows that 
\begin{align}
	\Pb\left(\Gc_{\Rc \Cc} \cap \Gc_{\Rc' \Cc'} \right) = p_0^{2Q - s(a,b)}. \label{eq:mcar.3} 
\end{align}
Set $k \coloneqq n-1$, and define $\rho_k(a) \coloneqq \binom{d}{a} \binom{k-d}{d-a} \binom{k}{d}^{-1}$. 
%
This is the probability that two uniformly chosen $d$-subsets of a $k$-element set have intersection of size $a$.
Let $\Delta$ be the ordered dependency sum in Lemma~\ref{lemma:janson}. Using \eqref{eq:mcar.2} and \eqref{eq:mcar.3},
\begin{align}
	\frac{\Delta}{\mu^2} = \sum_{\substack{0 \le a,b \le d \\ (a,b) \neq \{(0,0), (d,d)\}}} 
	\rho_{k}(a) \rho_{k}(b) p_0^{-s(a,b)}. \label{eq:mcar.4}
\end{align}
The pair $(0,0)$ is excluded because the corresponding events depend on disjoint mask entries, while $(d,d)$ represents the same candidate. 

For $a \ge 1$, a union bound over the possible $a$-element intersection gives
\begin{align}
	\rho_k(a) \le \binom{d}{a} \left(\frac{d}{k-d+1} \right)^a. 
\end{align}
Since $a \le d$ and $d = o(\log(n))$, we have $k-d+1 \asymp n$. Therefore,
\begin{align}
	\rho_k(a) \le \binom{d}{a} \left(\frac{d}{k-d+1} \right)^a \le  \frac{1}{a!} \left(\frac{C_1 d^2}{n} \right)^a 
	\label{eq:mcar.5}
\end{align}
for an absolute constant $C_1 > 0$. The same inequality remains valid for $a=0$ when the right-hand side is interpreted as $1$. 
The standard binomial-coefficient bound gives $N \le (C_2 n / d)^{2d}$ for a sufficiently large $C_2 > 0$. 
%
%
Since $C \log(2/ \eta) \ge 1$ for large enough $C > 0$, 
\begin{align}
	p_0^{-s(a,b)} \le N^{s(a,b)/Q} \le \left(\frac{C_2 n}{d} \right)^{2s(a,b)/(d+2)}, \label{eq:mcar.7}
\end{align}
where we used $Q = d (d+2)$. 
Combining \eqref{eq:mcar.5} and \eqref{eq:mcar.7}, the summand in \eqref{eq:mcar.4} is bounded by
\begin{align}
	\frac{1}{a! b!} \left( \frac{C_3 d^2}{n} \right)^{a+b} \left(\frac{C_3 n}{d} \right)^{2s(a,b)/(d+2)} 
	\label{eq:mcar.8}
\end{align}
for an absolute constant $C_3 > 0$. 
Define 
\begin{align}
	h(a,b) \coloneqq a + b - \frac{2 s(a,b)}{d+2} = \frac{d(a+b) - 2ab}{d+2}. \label{eq:mcar.9}
\end{align}
Collecting the powers of $n$ and $d$ in \eqref{eq:mcar.8}, and using $d^a / a! \le (ed/a)^a$, shows that, for some $C_4 > 0$,
\begin{align}
	\frac{1}{a! b!} \left( \frac{C_3 d^2}{n} \right)^{a+b} \left(\frac{C_3 n}{d} \right)^{2s(a,b)/(d+2)}  \le \exp(C_4 d) \left(\frac{d}{n}\right)^{h(a,b)}. \label{eq:mcar.10}
\end{align}
It remains to lower bound $h(a,b)$. Set $g(a,b) \coloneqq d(a+b) - 2ab$. For $1 \le a \le d-1$, $g(a,b)$ is linear in $b$, so its minimum over $b \in [0,d]$ occurs at $b = 0$ or $b=d$. Consequently, $g(a,b) \ge \min\{da, d(d-a)\} \ge d$. If $a=0$, then $b \ge 1$ and $g(0,b) = db \ge d$. If $a=d$, distinctness requires $b \le d-1$, and $g(d,b) = d(d-b) \ge d$. Thus, for every pair appearing in \eqref{eq:mcar.4}, 
\begin{align}
	h(a,b) \ge \frac{d}{d+2}. \label{eq:mcar.11}
\end{align}
There are at most $(d+1)^2$ terms in \eqref{eq:mcar.4}. Hence, it follows from \eqref{eq:mcar.10} and \eqref{eq:mcar.11} that 
\begin{align}
	\frac{\Delta}{\mu^2} \le (d+1)^2 \exp(C_4 d) \left(\frac{d}{n} \right)^{d/(d+2)}. \label{eq:mcar.12}
\end{align}
The logarithm of the right-hand side is bounded by $2 \log(d+1) + C_4d - d/(d+2) \log(n/d)$. Because $d = o(\log(n))$, we have 
\begin{align}
	\frac{\Delta}{\mu^2} = o(1). \label{eq:mcar.13}
\end{align}
Since $\eta$ is fixed, $\mu = C \log(2/\eta)$ is fixed, so \eqref{eq:mcar.13} also gives $\Delta = o(1)$. Lemma~\ref{lemma:janson} now yields $\Pb(X = 0) \le \exp(-\mu + \Delta/2)$. For all sufficiently large $n$, $\Delta \le \mu$, and hence $\Pb(X = 0) \le \exp(-\mu/2) = (\eta/2)^{C/2}$. Choosing $C \ge 2$ sufficiently large gives
\begin{align}
	\Pb(X = 0) \le \frac{\eta}{2}. \label{eq:mcar.14}
\end{align}
Thus, with probability at least $1 - \eta/2$, a valid anchor pair with $d$ rows and $d$ columns exists. 

We now apply Theorem~\ref{thm:consistency}. On the event $\{X \ge 1\}$, apply the mask-measurable selection rule stipulated in the corollary. 
The auxiliary condition in Theorem~\ref{thm:consistency} gives $\sigma^2 \log(d^2) = o(d)$, and therefore, $\min\{d, d^2 /(\sigma^2 \log(d^2))\} = d$. Thus, 
\begin{align}
	\Phi \lesssim \frac{\sigma \br}{\sqrt{d}} + \frac{\sqrt{\sigma \br} \log^{1/4}(d)}{d^{1/4}}. \label{eq:mcar.15}
\end{align}
Choosing $c$ sufficiently large relative to the constant implicit in Theorem~\ref{thm:consistency} ensures that the theorem’s upper bound is at most $\delta$. 
Thus, on $\{X \ge 1\}$, Theorem~\ref{thm:consistency} gives 
\begin{align}
	\Pb\left( \left|\hA_{11} - A_{11} \right| > \delta \mid \Ec \right) \le C_5 d^{-20} \label{eq:mcar.16}
\end{align}
for a constant $C_5 > 0$. 
As a result, combining \eqref{eq:mcar.14} and \eqref{eq:mcar.16},
\begin{align}
	\Pb\left( \left|\hA_{11} - A_{11} \right| > \delta \mid \bU, \bV \right) \le \Pb\left(X = 0 \right) + C_5d^{-20} 
	\le \frac{\eta}{2} + C_5d^{-20}. 
\end{align}
Thus, for sufficiently large $n$, $C_5 d^{-20} \le \eta/2$, which gives our desired result. 
The argument applies identically to any target entry $(i,j)$. Moreover, none of the $Q$ required observations is the target entry itself, so the conclusion is unchanged if one conditions on $D_{ij} = 0$. 
\end{proof}

\section{Proof of Theorem~\ref{thm:normality}} \label{sec:proofs.normality}
Theorem~\ref{thm:consistency} establishes that \SNN~can achieve vanishing error under suitable conditions. 
To prove asymptotic normality, however, we need a sharper representation: we must identify the part of the error that remains at first order and show that all remaining terms are negligible relative to the standard deviation of this leading fluctuation. 

The direct expansion of the \SNN~estimator in \eqref{eq:consistency.1} reveals the challenge. In particular, it contains $\left \langle \bbz, \bDelta_\beta \right \rangle$, which is not automatically of smaller order than the stochastic fluctuation of $\hA_{ij}$. Moreover, $\bDelta_\beta$ depends on the noisy anchor block $\bX$ through its estimated singular subspaces, so analyzing its distribution directly would be unnecessarily delicate. 

Borrowing upon the analysis in \cite{twsf}, the key idea is to rewrite the target through a residual-corrected population identity. For $\boldb \in \Rb^{|\AR|}$ and $\bolda \in \Rb^{|\AC|}$, define 
\begin{align}
	\Sc(\boldb, \bolda; \bbX, \bby, \bbz) \coloneqq \left \langle \bbz, \boldb \right \rangle + \left \langle \bolda, \bby - \bbX^\top \boldb \right \rangle.  
	\label{eq:pop.score}
\end{align}
By Lemma~\ref{lemma:theta}, $\Sc(\bbeta^*, \bolda; \bbX, \bby, \bbz) = A_{ij}$ for every choice of $\bolda$ since $\bby - \bbX^\top \bbeta^* = \bzero$. 
The vector $\bolda$, therefore, does not change the population target. Instead, it lets us choose a representation whose first-order sensitivity to perturbations in $\bbeta^*$ vanishes. Differentiating $\Sc$ with respect to $\boldb$ shows that this sensitivity is $\bbz - \bbX \bolda$. Thus, the nuisance-linear term $\langle \bbz, \bDelta_\beta \rangle$ is canceled when 
\begin{align}
	\bbX \bolda = \bbz. 
	\label{eq:riesz.formula}
\end{align}
Assumption~\ref{assump:subspace}, through Lemma~\ref{lemma:theta}, ensures that $\bbz$ lies in the column space of $\bbX$, so \eqref{eq:riesz.formula} is feasible. 
We take the canonical minimum $\ell_2$-norm solution $\balpha^*$, as defined in \eqref{eq:pop.params}. 
It is also stable in norm: by Assumptions~\ref{assump:bounded} and \ref{assump:spectra}, 
\begin{align}
	\left\| \balpha^* \right\|_2 
	\le \frac{\sqrt{\br}}{\sqrt{|\nAC|}}. \label{eq:riesz.bound}
\end{align}
To use the same cancellation at the empirical level, we introduce a proof-only projected version of $\balpha^*$. 
Let $\bhQ \coloneqq \hbX^\dagger \hbX$ be the orthogonal projector onto the row space of $\hbX$, and define 
\begin{align}
	\tbalpha \coloneqq \bhQ \balpha^*, 
	\quad
	\bDelta_{\talpha} \coloneqq \tbalpha - \balpha^*.
	\label{eq:proof.proj}
\end{align}
This projection does not modify the estimator $\hA_{ij}$; instead, it allows us to add an empirical residual correction that is exactly zero. Indeed, since $\hbbeta = \hbX^{\top, \dagger} \by$, we have $\hbX^\top \hbbeta = \bhQ \by$. Moreover, because $\hbbeta$ lies in the left singular subspace retained by the rank-$\br$ truncation, $\bX^\top \hbbeta = \hbX^\top \hbbeta$. This yields 
\begin{align}
	\by - \bX^\top \hbbeta = \by - \hbX^\top \hbbeta = \left(\bI - \bhQ \right) \by,
\end{align}
and hence, 
\begin{align}
	\left\langle \tbalpha, \by - \bX^\top \hbbeta \right \rangle = \left\langle \bhQ \balpha^*, \left(\bI - \bhQ \right) \by \right \rangle = 0. 
\end{align}
Consequently, the estimator admits the exact identity
\begin{align}
	\hA_{ij} = \left \langle \bz, \hbbeta \right \rangle + \left\langle \tbalpha, \by - \bX^\top \hbbeta \right \rangle. 
	\label{eq:new.rep}
\end{align}
This representation is the starting point of the expansion. In what follows, our strategy is to show that the estimation error separates into $\hA_{ij} - A_{ij} = G + R$. 
The leading term $G$ is a sum of three linear forms in mutually independent noise blocks; the remainder $R$ collects the \PCR~error, representer-projection error, and design perturbation. The remainder of the proof formalizes this separation. 

\subsection{Key Lemmas for Theorem~\ref{thm:normality}} \label{sec:proofs.normality.lemmas}

\begin{lemma} \label{lemma:norm.decomp}
Let the setup of Theorem~\ref{thm:normality} hold. Then, conditional on $\Ec$, $\hA_{ij} - A_{ij} = G + R$, where 
\begin{align}
	G &\coloneqq \left \langle \bxi_z, \bbeta^* \right \rangle
	+ \left \langle \bxi_y, \balpha^* \right \rangle
	- \left \langle \balpha^*, \bXi_X^\top \bbeta^* \right \rangle
	\\
	R &\coloneqq 
	\left \langle \bxi_z, \bDelta_\beta \right \rangle
	+ \left \langle \bxi_y, \bDelta_{\talpha} \right \rangle
	- \left \langle \balpha^*, \bXi_X^\top \bDelta_\beta \right \rangle
	- \left \langle \bDelta_{\talpha}, \bXi_X^\top \bbeta^* \right \rangle
	- \left \langle \bDelta_{\talpha}, \bbX^\top \bDelta_\beta \right \rangle
	- \left \langle \bDelta_{\talpha}, \bXi_X^\top \bDelta_\beta \right \rangle. 
\end{align}
\end{lemma}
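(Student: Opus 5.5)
The identity is purely algebraic and deterministic, so the plan is to expand the exact representation \eqref{eq:new.rep} around the population quantities and regroup. No probabilistic argument is needed: conditional on $\Ec$, the terms simply have to be sorted.

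First, I record the population facts used for cancellation. By Lemma~\ref{lemma:theta}, $A_{ij} = \langle \bbz, \bbeta^* \rangle$ and $\bby = \bbX^\top \bbeta^*$. Also $\bbX \balpha^* = \bbz$, because $\bbz$ lies in the column space of $\bbX$ (shown in the proof of Lemma~\ref{lemma:theta}) and $\balpha^* = \bbX^\dagger \bbz$, so $\bbX\balpha^* = \bP\bbz = \bbz$.

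Next, I substitute the decompositions
\begin{align}
\bz = \bbz + \bxi_z, \quad \by = \bby + \bxi_y, \quad \bX = \bbX + \bXi_X, \quad \hbbeta = \bbeta^* + \bDelta_\beta, \quad \tbalpha = \balpha^* + \bDelta_{\talpha}
\end{align}
into \eqref{eq:new.rep}, which gives $\hA_{ij} = \langle \bz, \hbbeta\rangle + \langle \tbalpha, \by - \bX^\top\hbbeta\rangle$. The first piece expands as
\begin{align}
\langle \bz, \hbbeta\rangle = \langle \bbz,\bbeta^*\rangle + \langle \bbz,\bDelta_\beta\rangle + \langle \bxi_z,\bbeta^*\rangle + \langle \bxi_z,\bDelta_\beta\rangle.
\end{align}
For the residual, $\bby = \bbX^\top\bbeta^*$ gives
\begin{align}
\by - \bX^\top\hbbeta = \bxi_y - \bbX^\top\bDelta_\beta - \bXi_X^\top\bbeta^* - \bXi_X^\top\bDelta_\beta.
\end{align}

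Pairing this residual with $\balpha^*$ yields four terms:
\begin{align}
\langle\balpha^*,\bxi_y\rangle - \langle \bbX\balpha^*,\bDelta_\beta\rangle - \langle\balpha^*,\bXi_X^\top\bbeta^*\rangle - \langle\balpha^*,\bXi_X^\top\bDelta_\beta\rangle.
\end{align}
Since $\bbX\balpha^* = \bbz$, the second of these is $-\langle\bbz,\bDelta_\beta\rangle$. It exactly cancels the nuisance-linear term $\langle\bbz,\bDelta_\beta\rangle$ from the first piece; this cancellation is the key step. Pairing the residual with $\bDelta_{\talpha}$ yields the four terms
\begin{align}
\langle\bDelta_{\talpha},\bxi_y\rangle - \langle\bDelta_{\talpha},\bbX^\top\bDelta_\beta\rangle - \langle\bDelta_{\talpha},\bXi_X^\top\bbeta^*\rangle - \langle\bDelta_{\talpha},\bXi_X^\top\bDelta_\beta\rangle.
\end{align}

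Finally, I subtract $A_{ij} = \langle\bbz,\bbeta^*\rangle$ and collect terms. The three terms that are linear in the noise with population coefficients form $G$: $\langle\bxi_z,\bbeta^*\rangle$, $\langle\bxi_y,\balpha^*\rangle$, and $-\langle\balpha^*,\bXi_X^\top\bbeta^*\rangle$. The remaining six terms, each involving $\bDelta_\beta$ or $\bDelta_{\talpha}$, are exactly the terms listed in $R$. The only point needing care is \eqref{eq:new.rep} itself, namely that $\bX^\top\hbbeta = \hbX^\top\hbbeta$ and that the added correction vanishes; both were verified just before the lemma. The rest is bookkeeping of signs.
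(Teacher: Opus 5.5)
Your proposal is correct and follows the paper's proof essentially line for line. Like the paper, you substitute the population-plus-error decompositions into the exact representation \eqref{eq:new.rep}, expand $\by - \bX^\top\hbbeta = \bxi_y - \bbX^\top\bDelta_\beta - \bXi_X^\top\bbeta^* - \bXi_X^\top\bDelta_\beta$, and cancel $\langle \bbz, \bDelta_\beta\rangle$ against $\langle \balpha^*, \bbX^\top\bDelta_\beta\rangle$ using $\bbX\balpha^* = \bbz$; the only cosmetic difference is that the paper first bundles $\bxi_y - \bXi_X^\top\bbeta^*$ into a single vector $\bdelta$.
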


\begin{lemma} \label{lemma:norm.lead}
Let the setup of Theorem~\ref{thm:normality} hold. If Assumption~\ref{assump:subg} is specialized to Gaussian noise, then, conditional on $\Ec$, $G \sim \mathcal{N}(0, \upsilon^2)$. If instead Assumption~\ref{assump:subg} holds for the general subgaussian case along with condition \eqref{eq:clt}, then, conditional on $\Ec$, $G / \upsilon \rightsquigarrow \mathcal{N}(0, 1)$. 
\end{lemma}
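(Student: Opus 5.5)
The plan is to write $G$ explicitly as a weighted sum of independent, mean-zero noise variables with coefficients that are deterministic given $\Ec$. The Gaussian case then follows from closure of the Gaussian family under linear combinations, and the subgaussian case from the Lyapunov central limit theorem.

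\textbf{Step 1: write $G$ as a sum over distinct entries.} Expanding the three inner products in Lemma~\ref{lemma:norm.decomp} gives
\begin{align}
	G = \sum_{\ell \in \nAR} \beta^*_\ell \varepsilon_{\ell j} + \sum_{q \in \nAC} \alpha^*_q \varepsilon_{iq} - \sum_{\ell \in \nAR}\sum_{q \in \nAC} \beta^*_\ell \alpha^*_q \varepsilon_{\ell q}.
\end{align}
The vectors $\bbeta^* = \bbX^{\top,\dagger}\bby$ and $\balpha^* = \bbX^\dagger \bbz$ depend only on conditional means, so they are fixed given $\Ec$.

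\textbf{Step 2: the index sets are disjoint.} The sets $\nAR\times\{j\}$, $\{i\}\times\nAC$ and $\nAR\times\nAC$ are pairwise disjoint. Since $D_{ij}=0$ while $D_{\ell j}=1$ for every $\ell\in\nAR$, we have $i\notin\nAR$; likewise $j\notin\nAC$. Consequently every noise variable appearing in $G$ is a distinct entry of $\bE$. By Assumption~\ref{assump:subg}, these are independent conditional on $\Ec$, with mean zero and variances $\sigma^2_{ab}$. Hence $\Ex[G\mid\Ec]=0$ and $\Var(G\mid\Ec)=\upsilon^2$ as defined in \eqref{eq:var.asymp}. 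The minus sign in the third term is squared away in the variance.

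\textbf{Step 3: Gaussian case.} If the noise is Gaussian, $G$ is a finite linear combination of independent Gaussians. Therefore $G\sim\Nc(0,\upsilon^2)$ exactly, conditional on $\Ec$.

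\textbf{Step 4: general subgaussian case via Lyapunov.} Index the entries by $k$ and write $G/\upsilon=\sum_k X_k$ with $X_k = c_k\varepsilon_k/\upsilon$, where $c_k$ is the corresponding coefficient. Set $s_k \coloneqq |c_k|\sigma_k/\upsilon$, so that $\sum_k s_k^2 = 1$. Condition \eqref{eq:clt} states exactly that $\max_k s_k = o(1)$.

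Subgaussianity with $\|\varepsilon_k\|_{\psi_2}\le C_\varepsilon\sigma_k$ gives the moment bound $\Ex|\varepsilon_k|^3 \lesssim C_\varepsilon^3\sigma_k^3$. Hence
\begin{align}
	\sum_k \Ex|X_k|^3 \lesssim \sum_k s_k^3 \le \max_k s_k \cdot \sum_k s_k^2 = \max_k s_k \to 0.
\end{align}
This is Lyapunov's condition for the triangular array indexed by growing $(|\nAR|,|\nAC|)$. The Lindeberg–Feller theorem then yields $G/\upsilon\rightsquigarrow\Nc(0,1)$ conditional on $\Ec$.

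The main point requiring care is the proportionality $\|\varepsilon_k\|_{\psi_2}\le C_\varepsilon\sigma_k$ in Assumption~\ref{assump:subg}, which ties third moments to variances uniformly in $k$. Without it, condition \eqref{eq:clt} alone would not control the higher moments. Everything else is routine bookkeeping.
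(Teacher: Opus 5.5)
Your proposal is correct and follows essentially the same route as the paper. Both arguments expand $G$ into three weighted sums of conditionally independent noise entries, obtain the Gaussian case by closure under linear combinations, and verify Lyapunov's condition using $\|\varepsilon_{ij}\|_{\psi_2}\le C_\varepsilon\sigma_{ij}$ to tie higher moments to variances. The differences are minor: the paper checks Lyapunov with fourth moments, bounding the sum of normalized fourth moments by $C_4\rho^2$ with $\rho$ the left-hand side of \eqref{eq:clt}, whereas you use third moments; and your explicit check that $i\notin\nAR$ and $j\notin\nAC$ (so the three index sets are disjoint) makes precise an independence step the paper uses implicitly.
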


\begin{lemma} \label{lemma:norm.remainder}
Let the setup of Theorem~\ref{thm:normality} hold. Then, conditional on $\Ec$, with probability at least $1 - \Oc(|\nARC|^{-10})$, $\left| R \right| \lesssim \Psi$, 
where $\Psi$ is defined as in \eqref{eq:psi}. 
\end{lemma}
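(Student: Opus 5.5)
The plan is to bound the six terms of $R$ in Lemma~\ref{lemma:norm.decomp} one at a time on a common high-probability event and then take a union bound. The event $\Ec_0$ is the intersection of four events: the bound $\|\bXi_X\|_\txtop \lesssim \sigma(\sqrt{|\nAR|}+\sqrt{|\nAC|}+\sqrt{L})$ from Lemma~\ref{lemma:subg_matrix}; the bound $\|\bDelta_\beta\|_2 \lesssim \Lambda_\beta$ from Lemma~\ref{lemma:param.est.1}; the subspace bounds of Lemma~\ref{lemma:subspace}, applied to both $\bhP$ and $\bhQ$; and two Hoeffding events specified below. Each fails with probability $\Oc(|\nARC|^{-10})$ once $t \asymp \sqrt{L}$. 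Throughout I use $\|\bbeta^*\|_2 \le \sqrt{\br/|\nAR|}$ from \eqref{eq:beta.l2.bound.2} and $\|\balpha^*\|_2 \le \sqrt{\br/|\nAC|}$ from \eqref{eq:riesz.bound}. I also use $s_{\br} \gtrsim \sqrt{|\nARC|/\br}$, which follows from Assumption~\ref{assump:spectra}, and $\|\bbX\|_\txtop \le \sqrt{|\nARC|}$, which follows from Assumption~\ref{assump:bounded}. Finally, since $\Lambda_\beta \asymp \Phi/\sqrt{|\nAR|}$ up to the form of the second term, every bound involving $\|\bDelta_\beta\|_2$ converts into a multiple of $\Phi$.

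\emph{Terms linear in independent noise.} For $\langle \bxi_z, \bDelta_\beta\rangle$, I condition on $\Hc = \sigma(\Ec,\bX,\by)$. Then $\bDelta_\beta$ is fixed, and Lemma~\ref{lemma:hoeffding} gives $|\langle \bxi_z, \bDelta_\beta\rangle| \lesssim \sigma\sqrt{L}\,\Lambda_\beta \lesssim \sigma\Phi\sqrt{L}/\sqrt{|\nAR|}$. For $\langle \bxi_y, \bDelta_{\talpha}\rangle$, note that $\bDelta_{\talpha} = -(\bI-\bhQ)\balpha^* = (\bhQ-\bQ)\balpha^*$ depends only on $\bX$, which is independent of $\bxi_y$. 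Hoeffding conditional on $\sigma(\Ec,\bX)$, combined with Lemma~\ref{lemma:subspace} for the row space, gives
\[
|\langle \bxi_y, \bDelta_{\talpha}\rangle| \lesssim \sigma\sqrt{L}\,\|\bhQ-\bQ\|_\txtop\|\balpha^*\|_2 \lesssim \sigma\sqrt{L}\cdot\frac{\sigma\sqrt{\br}(\sqrt{|\nAR|}+\sqrt{|\nAC|}+\sqrt{L})}{\sqrt{|\nARC|}}\cdot\frac{\sqrt{\br}}{\sqrt{|\nAC|}}.
\]
Using $\Phi \ge \sigma\br/\min\{\sqrt{|\nAR|},\sqrt{|\nAC|}\}$, this is at most $\sigma\Phi\sqrt{\br L}/\sqrt{|\nARC|}$ times lower-order factors. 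I expect these two terms to be the source of the $\sqrt{\br L}/\sqrt{|\nARC|}$ and $\sqrt{L}/\min$ pieces of $\Psi$.

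\emph{Terms involving the design noise.} These terms cannot use Hoeffding, because $\bDelta_\beta$ and $\bhQ$ depend on $\bXi_X$, so I bound them by Cauchy--Schwarz and operator norms.

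For the third term,
\[
|\langle \balpha^*, \bXi_X^\top\bDelta_\beta\rangle| \le \|\bXi_X\|_\txtop\,\|\balpha^*\|_2\,\|\bDelta_\beta\|_2 \lesssim \sigma(\sqrt{|\nAR|}+\sqrt{|\nAC|})\,\frac{\sqrt{\br}}{\sqrt{|\nAC|}}\,\frac{\Phi}{\sqrt{|\nAR|}} \lesssim \frac{\sigma\sqrt{\br}\,\Phi}{\min\{\sqrt{|\nAR|},\sqrt{|\nAC|}\}}.
\]
This matches the first term of $\Psi$.

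For the fourth term, I bound $|\langle \bDelta_{\talpha}, \bXi_X^\top\bbeta^*\rangle| \le \|\bhQ-\bQ\|_\txtop\|\balpha^*\|_2\|\bXi_X\|_\txtop\|\bbeta^*\|_2$. This is of order $\sigma^2\br^{3/2}(\sqrt{|\nAR|}+\sqrt{|\nAC|})^2/|\nARC|$, which is dominated by $\sigma\sqrt{\br}\Phi/\min$ using the lower bound on $\Phi$ and condition \eqref{eq:rank.sep}.

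The sixth term, $\langle \bDelta_{\talpha}, \bXi_X^\top\bDelta_\beta\rangle$, is the product of two small factors. It is bounded in the same way as the fourth term, with $\|\bbeta^*\|_2$ replaced by $\Lambda_\beta$, and is of strictly lower order.

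\emph{Main obstacle.} I expect the fifth term, $\langle \bDelta_{\talpha}, \bbX^\top\bDelta_\beta\rangle$, to be the main difficulty, because $\|\bbX\|_\txtop$ is large, of order $\sqrt{|\nARC|}$. I would first try the crude bound $\|(\bI-\bhQ)\bQ\|_\txtop\,\|\balpha^*\|_2\,\|\bbX\|_\txtop\,\|\bDelta_\beta\|_2$, using $\bbX^\top = \bQ\bbX^\top$ and $\bDelta_{\talpha} = -(\bI-\bhQ)\bQ\balpha^*$. Inserting Lemma~\ref{lemma:subspace} gives
\[
\frac{\sigma\sqrt{\br}(\sqrt{|\nAR|}+\sqrt{|\nAC|})}{\sqrt{|\nARC|}}\cdot\frac{\sqrt{\br}}{\sqrt{|\nAC|}}\cdot\sqrt{|\nARC|}\cdot\frac{\Phi}{\sqrt{|\nAR|}} \asymp \frac{\sigma\br\,\Phi}{\min\{\sqrt{|\nAR|},\sqrt{|\nAC|}\}},
\]
which loses a factor $\sqrt{\br}$ against $\Psi$. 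To recover that factor, I would use the condition number more carefully. Write $\balpha^* = \bV\bS^{-1}\bU^\top\bbz$ and $\bbX^\top\bDelta_\beta = \bV\bS\bU^\top\bDelta_\beta$, so that the $\bS^{-1}$ and $\bS$ factors cancel. The fifth term then becomes $\bbz^\top\bU\bS^{-1}\,\bV^\top(\bI-\bhQ)\bV\,\bS\bU^\top\bDelta_\beta$. Here $\|\bV^\top(\bI-\bhQ)\bV\|_\txtop \le \|\bhQ-\bQ\|_\txtop^2$ is quadratically small, and $\kappa \le c_y^{-1}$ controls $\|\bS^{-1}\|\,\|\bS\|$. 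This quadratic smallness should more than compensate for the lost factor and place the term inside $\Psi$. Collecting all six bounds on $\Ec_0$ and applying the union bound gives $|R| \lesssim \Psi$ with probability $1-\Oc(|\nARC|^{-10})$.
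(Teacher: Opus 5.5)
Your proof is correct and has the same structure as the paper's. You use the same six terms from Lemma~\ref{lemma:norm.decomp} and Hoeffding conditional on $\sigma(\Ec,\bX,\by)$ and $\sigma(\Ec,\bX)$ for the two terms linear in $\bxi_z$ and $\bxi_y$. The remaining terms get operator-norm bounds, followed by a union bound.

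The one real divergence is the fifth term, and there the obstacle is self-inflicted. You bound $\|\bbX\|_\txtop \le \sqrt{|\nARC|}$ using Assumption~\ref{assump:bounded}. However, the condition-number part of Assumption~\ref{assump:spectra}, together with $\br\, s_{\br}^2 \le \|\bbX\|_F^2 \le |\nARC|$, gives $\|\bbX\|_\txtop \le \kappa s_{\br} \lesssim \sqrt{|\nARC|/\br}$. With this bound, your crude Cauchy--Schwarz estimate becomes $\sigma\sqrt{\br}(\sqrt{|\nAR|}+\sqrt{|\nAC|})\Phi/\sqrt{|\nARC|} \lesssim \sigma\sqrt{\br}\,\Phi/\min\{\sqrt{|\nAR|},\sqrt{|\nAC|}\}$. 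This is exactly the first piece of $\Psi$, and it is how the paper handles the term: its bound $\eta_x\sqrt{|\nAR|}\Lambda_\beta/s_{\br}$ has the same order as $\langle \balpha^*, \bXi_X^\top \bDelta_\beta\rangle$.

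Your quadratic argument is also valid, since $\|\bV^\top(\bI-\bhQ)\bV\|_\txtop = \|(\bI-\bhQ)\bQ\|_\txtop^2 \le \|\bhQ-\bQ\|_\txtop^2$. It sharpens the fifth term by a further factor of order $\|\bhQ-\bQ\|_\txtop \lesssim 1$. That gain buys nothing for $\Psi$, because $\langle \balpha^*, \bXi_X^\top \bDelta_\beta\rangle$ already sits at that linear order.

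There is one bookkeeping slip. After using $\sigma\br \le \Phi\min\{\sqrt{|\nAR|},\sqrt{|\nAC|}\}$, your bound on $\langle \bxi_y, \bDelta_{\talpha}\rangle$ is of order $\sigma\Phi\sqrt{L}/\sqrt{|\nAC|}$. It therefore lands in the $\max\{\sqrt{\br},\sqrt{L}\}/\min\{\sqrt{|\nAR|},\sqrt{|\nAC|}\}$ piece of $\Psi$. It is not $\sigma\Phi\sqrt{\br L}/\sqrt{|\nARC|}$ times lower-order factors; in the balanced case it exceeds that quantity by roughly $\sqrt{|\nAR|/\br}$. The conclusion $|R| \lesssim \Psi$ is unaffected.
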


\subsection{Completing Proof of Theorem~\ref{thm:normality}}

\begin{proof}[Proof of Theorem~\ref{thm:normality}] 
Condition on $\Ec$. By Lemma~\ref{lemma:norm.decomp}, we have $\hA_{ij} - A_{ij} = G + R$. 
By Lemma~\ref{lemma:norm.lead}, either $G / \upsilon \sim \Nc(0,1)$ or $G / \upsilon \rightsquigarrow \Nc(0,1)$. 
At the same time, Lemma~\ref{lemma:norm.remainder} states that 
\begin{align}
	\Pb\left( \left| R \right| \le C \Psi \mid \Ec \right) \ge 1 - \Oc(|\nARC|^{-10})
\end{align}
for a sufficiently large constant $C > 0$. Therefore, for any $\epsilon > 0$,
\begin{align}
	\Pb \left( \frac{|R|}{\upsilon} > \epsilon \mid \Ec \right) \le C |\nARC|^{-10} + \boldsymbol{1}\left\{ \frac{C \Psi}{\upsilon} > \epsilon \right\}. 
\end{align}
Because $|\nARC|^{-10} \rightarrow 0$ and $\Psi / \upsilon \rightarrow 0$ by assumption, it follows that $R / \upsilon \xrightarrow{p} 0$. Hence, combining the above and applying conditional Slutsky's theorem completes the proof. 
\end{proof}

\subsection{Proofs of Key Lemmas for Theorem~\ref{thm:normality}}

\subsubsection{Proof of Lemma~\ref{lemma:norm.decomp}}

\begin{proof}[Proof of Lemma~\ref{lemma:norm.decomp}]
Condition on $\Ec$. Recall the identity \eqref{eq:new.rep}. 
%
%
Define $\bdelta \coloneqq \bxi_y - \bXi_X^\top \bbeta^*$. Then,
\begin{align}
	\hA_{ij} - A_{ij} = \left \langle \bz, \hbbeta \right \rangle - \left \langle \bbz, \bbeta^* \right \rangle + \left \langle \tbalpha, \by - \bX^\top \hbbeta \right \rangle
	= \left \langle \bbz, \bDelta_\beta \right \rangle + \left \langle \bxi_z, \bbeta^* \right \rangle + \left \langle \bxi_z, \bDelta_\beta \right \rangle. \label{eq:decomp.1}
\end{align}
Now, note that
\begin{align}
	\by - \bX^\top \hbbeta &= \bby + \bxi_y - \left( \bbX + \bXi_X \right)^\top \left( \bbeta^* + \bDelta_\beta \right)
	\\
	&= \bdelta - \bbX^\top \bDelta_\beta - \bXi_X^\top \bDelta_\beta. 
\end{align}
Using $\tbalpha = \balpha^* + \bDelta_{\talpha}$, 
\begin{align}
	\left \langle \tbalpha, \by - \bX^\top \hbbeta \right \rangle
	&= \left \langle \balpha^*, \bdelta \right \rangle - \left \langle \balpha^*, \bbX^\top \bDelta_\beta \right \rangle - \left \langle \balpha^*, \bXi_X^\top \bDelta_\beta \right \rangle + \left \langle \bDelta_{\talpha}, \bdelta \right \rangle 
	\\ &\quad - \left \langle \bDelta_{\talpha}, \bbX^\top \bDelta_\beta \right \rangle - \left \langle \bDelta_{\talpha}, \bXi_X^\top \bDelta_\beta \right \rangle. \label{eq:decomp.2}
\end{align}
By \eqref{eq:riesz.formula} and \eqref{eq:pop.params}, we have 
\begin{align}
	\left \langle \balpha^*, \bbX^\top \bDelta_\beta \right \rangle 
	= \left \langle \bbX \balpha^*, \bDelta_\beta \right \rangle
	= \left \langle \bbz, \bDelta_\beta \right \rangle. \label{eq:decomp.3}
\end{align}
Combining \eqref{eq:decomp.1} and \eqref{eq:decomp.2} and applying the cancellation \eqref{eq:decomp.3} gives our desired result. 
\end{proof}

\subsubsection{Proof of Lemma~\ref{lemma:norm.lead}}

\begin{proof}[Proof of Lemma~\ref{lemma:norm.lead}]
Condition on $\Ec$. Expanding $G$, we have
\begin{align}
	G &= 
	\sum_{\ell \in \AR} \beta^*_\ell  \varepsilon_{\ell j}
	+ \sum_{q \in \AC} \alpha^*_q \varepsilon_{i q}
	- \sum_{\ell \in \AR} \sum_{q \in \AC} \beta^*_\ell \alpha^*_q \varepsilon_{\ell q}. 
\end{align} 
By Assumption~\ref{assump:subg}, 
\begin{align}
	\Var(G \mid \Ec) = 
	\sum_{\ell \in \AR} \left(\beta^*_\ell \right)^2 \sigma^2_{\ell j}
	+ \sum_{q \in \AC} \left(\alpha^*_q \right)^2 \sigma^2_{i q}
	+ \sum_{\ell \in \AR} \sum_{q \in \AC} \left(\beta^*_\ell \alpha^*_q \right)^2 \sigma^2_{\ell q}. 
	\label{eq:varg}
\end{align}
We now establish our desired result under the two different regimes. 

\medskip \noindent {\em (i) Gaussian noise.} 
If Assumption~\ref{assump:subg} is specialized to Gaussian noise, then $G$ is immediately a mean-zero Gaussian random variable with variance \eqref{eq:varg}. 

\medskip \noindent {\em (ii) General subgaussian noise.} 
Now, consider the more general subgaussian noise environment. 
Define the index set 
\begin{align}
	\Sc \coloneqq \left(\AR \times \{j\} \right) \cup \left(\{i\} \times \AC\right) \cup \left(\AR \times \AC \right). 
	\label{eq:index}
\end{align}
In turn, for $s = (u, v) \in \Sc$, define the population weights $\bomega = [\omega_{uv}]$ as 
\begin{align}
	\omega_{uv} = \begin{cases}
		\beta^*_\ell, &(u,v) = (\ell, j), \ell \in \AR,
		\\
		\alpha^*_q, &(u,v) = (i, q), q \in \AC,
		\\
		-\beta^*_\ell \alpha^*_q,  &(u,v) = (\ell, q), \ell \in \AR, q \in \AC,
	\end{cases}
\end{align} 
yielding $G = \sum_{s \in \Sc} \omega_s \varepsilon_s$ and $\upsilon^2 = \sum_{s \in \Sc} \omega^2_s \sigma^2_s$. 
Define the normalized summand $Z_s \coloneqq \omega_s \varepsilon_s / \upsilon$. 
Because the noise terms are conditionally subgaussian with $\| \varepsilon_s \|_{\psi_2} \le C_\varepsilon \sigma_s$ (Assumption~\ref{assump:subg}), there exists a constant $C_4 < \infty$ such that $\Ex[\varepsilon^4_s \mid \Ec] \le C_4 \sigma^4_s$. Thus,
\begin{align}
	\sum_{s \in \Sc} \Ex\left[Z^4_s \mid \Ec\right] &= \frac{1}{\upsilon^4} \sum_{s \in \Sc} \omega_s^4 \Ex\left[\varepsilon^4_s \mid \Ec\right] \le \frac{C_4}{\upsilon^4} \sum_{s \in \Sc} \omega_s^4 \sigma^4_s. 
\end{align}
Moreover, note that 
%
\begin{align}
	\sum_{s \in \Sc} \omega_s^4 \sigma^4_s
	\le \left(\max_{s \in \Sc} \omega_s^2 \sigma^2_s \right) \sum_{s \in \Sc} \omega_s^2 \sigma^2_s
	\le \rho^2 \upsilon^4,
\end{align}
where $\rho \coloneqq \max_{s \in \Sc} |\omega_s|\sigma_s / \upsilon$. 
Based on \eqref{eq:clt}, 
\begin{align}
	\sum_{s \in \Sc} \Ex\left[Z^4_s \mid \Ec \right] \le C_4 \rho^2 \rightarrow 0. 
\end{align}
This corresponds precisely to Lyapnunov's condition with $\delta = 2$. Hence, by the conditional Lyapunov central limit theorem for triangular arrays \citep{Billingsley}, 
\begin{align}
	\frac{G}{\upsilon} \rightsquigarrow \Nc(0,1). 
\end{align}
This completes the proof. 
\end{proof}

\subsubsection{Proof of Lemma~\ref{lemma:norm.remainder}}

\begin{proof}[Proof of Lemma~\ref{lemma:norm.remainder}]
Condition on $\Ec$. 
Define   
\begin{align}
	\eta_x \coloneqq C_\eta \sigma \left( \sqrt{|\nAR|} + \sqrt{|\nAC|} + \sqrt{L} \right),
	\label{eq:noise} 
\end{align}
where $C_\eta > 0$ is an absolute constant. 
Additionally, define the events
\begin{align}
	\Gc_{\PCR, \beta} \coloneqq \left\{ \left\| \bDelta_\beta \right \|_2 \le C_\beta \Lambda_\beta \right\},
	\quad
	\Gc_\noise \coloneqq \left\{ \left \| \bXi_X \right \|_\txtop \le \eta_x \right\},
	\label{eq:events.pcr.noise} 
\end{align}
where $C_\beta > 0$ is an absolute constant and $\Lambda_\beta$ is defined in \eqref{eq:pcr.error}. By \eqref{eq:rank.sep} and Assumption~\ref{assump:spectra},  
\begin{align}
	\eta_x \lesssim s_{\br},
	\quad
	\left \| \bbX \right \|_\txtop \lesssim s_{\br},
	\quad
	s^2_{\br} \asymp \frac{|\nARC|}{\br}. \label{eq:simplify}  
\end{align}
We proceed to bound each term in $R$ based on Lemma~\ref{lemma:norm.decomp}. 

\medskip \noindent {\em Term 1.} On $\Gc_\noise \cap \Gc_{\PCR, \beta}$, we apply \eqref{eq:riesz.bound} to obtain 
\begin{align}
	\left| \left \langle \balpha^*, \bXi_X^\top \bDelta_\beta \right \rangle \right|
	\le \left \| \balpha^* \right \|_2 \cdot \left \| \bXi_X \right \|_\txtop \cdot \left \| \bDelta_\beta \right \|_2
	\lesssim \frac{\eta_x \sqrt{\br} \Lambda_\beta}{\sqrt{|\nAC|}}.
	\label{eq:t1}
\end{align}
By Lemmas~\ref{lemma:subg_matrix} and \ref{lemma:param.est.1}, 
\begin{align}
	\Pb\left( (\Gc_\noise \cap \Gc_{\PCR, \beta})^c \mid \Ec \right) \le 
	\Pb\left(\Gc_\noise^c \mid \Ec\right) + \Pb\left(\Gc_{\PCR, \beta}^c \mid \Ec \right) \lesssim |\nARC|^{-10}. 
	\label{eq:hp1} 
\end{align}

\noindent {\em Term 2.} 
Define the event 
\begin{align}
	\Gc_z \coloneqq \left\{ \left| \left \langle \bxi_z, \bDelta_\beta \right \rangle \right| \le C_z \sigma \sqrt{L} \Lambda_\beta \right\},
	\label{eq:t2}
\end{align}
where $C_z > 0$ is an absolute constant. Define the sigma-field $\Hc_z \coloneqq \sigma(\Ec, \bX, \by)$. Then, $\bDelta_\beta$ is $\Hc_z$-measurable and $\bxi_z$ is independent of $\Hc_z$. By Lemma~\ref{lemma:hoeffding},
\begin{align}
	\Pb\left( \Gc_{\PCR, \beta} \cap \Gc_z^c \mid \Ec \right)
	= \Ex\left[ \boldsymbol{1}\{\Gc_{\PCR, \beta}\} \cdot \Pb\left(\Gc_z^c \mid \Hc_z \right) \mid \Ec \right] \lesssim |\nARC|^{-10}. 
	\label{eq:hp2}
\end{align}

\noindent {\em Term 3.} 
From \eqref{eq:pop.params}, observe that $\balpha^* = \bQ \balpha^*$, where $\bQ = \bbX^\dagger \bbX$. 
Hence, by \eqref{eq:proof.proj}, we have 
\begin{align}
	\left\| \bDelta_{\talpha} \right \|_2
	\le \left\| \bhQ - \bQ \right\|_\txtop \cdot \left \| \balpha^* \right\|_2. 
\end{align}
On $\Gc_\noise$, applying Lemma~\ref{thm:wedin} with \eqref{eq:riesz.bound} yields 
\begin{align}
	\left\| \bDelta_{\talpha} \right \|_2
	\lesssim
	\frac{\eta_x \sqrt{|\nAR|}}{s^2_{\br}}. \label{eq:delta.g.bound}
\end{align}
Define the event
\begin{align}
	\Gc_y \coloneqq \left\{ \left| \left \langle \bDelta_{\talpha}, \bxi_y \right \rangle \right| 
	\le C_y  \frac{\sigma \eta_x \sqrt{L |\nAR|}}{s^2_{\br}} \right\},
	\label{eq:t3}
\end{align}
where $C_y > 0$ is an absolute constant. 
Define the sigma-field $\Hc_y \coloneqq \sigma(\Ec, \bX)$. Then, $\bDelta_{\talpha}$ is $\Hc_y$-measurable and $\bxi_y$ is independent of $\Hc_y$. By Lemma~\ref{lemma:hoeffding},
\begin{align}
	\Pb\left( \Gc_\noise \cap \Gc_y^c \mid \Ec \right)
	= \Ex\left[ \boldsymbol{1}\{\Gc_\noise\} \cdot \Pb\left(\Gc_y^c \mid \Hc_y \right) \mid \Ec \right] \lesssim |\nARC|^{-10}. 
	\label{eq:hp3}
\end{align} 

\noindent {\em Term 4.} On $\Gc_\noise$, we use \eqref{eq:beta.l2.bound.2} and \eqref{eq:delta.g.bound}
\begin{align}
	\left| \left \langle \bDelta_{\talpha}, \bXi_X^\top \bbeta^* \right \rangle \right|
	\le \left\| \bDelta_{\talpha} \right\|_2 \cdot \left \| \bXi_X \right \|_\txtop \cdot \left \| \bbeta^* \right \|_2
	\lesssim \frac{\eta_x^2 \sqrt{|\ARC|}}{s^3_{\br}}. 
	\label{eq:t4}
\end{align}

\noindent {\em Term 5.} On $\Gc_\noise \cap \Gc_{\PCR, \beta}$, we use \eqref{eq:delta.g.bound} and \eqref{eq:simplify}  
\begin{align}
	\left| \left \langle \bDelta_{\talpha}, \bbX^\top \bDelta_\beta \right \rangle \right|
	\le \left\| \bDelta_{\talpha} \right\|_2 \cdot \left\| \bbX \right \|_\txtop \cdot \left\| \bDelta_\beta \right\|_2
	\lesssim \frac{\eta_x \sqrt{|\AR|} \Lambda_\beta}{s_{\br}}. 
	\label{eq:t5}
\end{align}

\noindent {\em Term 6.} On $\Gc_\noise \cap \Gc_{\PCR, \beta}$, we use \eqref{eq:delta.g.bound} 
\begin{align}
	\left| \left \langle \bDelta_{\talpha}, \bXi_X^\top \bDelta_\beta \right \rangle \right|
	\le \left\| \bDelta_{\talpha} \right\|_2 \cdot \left\| \bXi_X \right \|_\txtop \cdot \left\| \bDelta_\beta \right\|_2
	\lesssim \frac{\eta_x^2 \sqrt{|\AR|} \Lambda_\beta}{s^2_{\br}}. 
	\label{eq:t6}
\end{align}

\noindent {\em Collecting terms.} On the master event, $\Gc_\star \coloneqq \Gc_\noise \cap \Gc_{\PCR, \beta} \cap \Gc_y \cap \Gc_z$, we combine \eqref{eq:t1}, \eqref{eq:t2}, \eqref{eq:t3}, \eqref{eq:t4}, \eqref{eq:t5}, and \eqref{eq:t6} to obtain 
\begin{align}
	\left| R \right|
	\lesssim 
	\frac{\eta_x \sqrt{\br} \Lambda_\beta}{\sqrt{|\nAC|}} 
	+ \sigma \sqrt{L} \Lambda_\beta
	+ \frac{\sigma  \eta_x \sqrt{L |\nAR|}}{s^2_{\br}}
	+ \frac{\eta_x^2 \sqrt{|\ARC|}}{s^3_{\br}}
	+ \frac{\eta_x \sqrt{|\AR|} \Lambda_\beta}{s_{\br}}
	+ \frac{\eta_x^2 \sqrt{|\AR|} \Lambda_\beta}{s^2_{\br}}. 
\end{align}
Taking a union bound over \eqref{eq:hp1}, \eqref{eq:hp2}, and \eqref{eq:hp3}, we conclude $\Pb(\Gc_\star^c \mid \Ec) \lesssim |\nARC|^{-10}$. 
Leveraging \eqref{eq:simplify} to simplify the above bound concludes the proof.
\end{proof}

\section{Proof of Theorem~\ref{thm:var.est.asymp}} 

\begin{proof}[Proof of Theorem~\ref{thm:var.est.asymp}] 
Condition on $\Ec$. 
Define the index set $\Sc$ as in \eqref{eq:index}. 
%
%
For $s = (u, v) \in \Sc$, define the population weights $\bomega = [\omega_{uv}]$ as 
\begin{align}
	\omega_{uv} = \begin{cases}
		\left(\beta^*_\ell\right)^2, &(u,v) = (\ell, j), \ell \in \AR,
		\\
		\left(\alpha^*_q\right)^2, &(u,v) = (i, q), q \in \AC,
		\\
		\left(\beta^*_\ell \alpha^*_q \right)^2,  &(u,v) = (\ell, q), \ell \in \AR, q \in \AC. 
	\end{cases}
\end{align}
Analogously, define the empirical weights $\hbomega = [\homega_{uv}]$ as 
\begin{align}
	\homega_{uv} = \begin{cases}
		\hbeta^2_\ell, &(u,v) = (\ell, j), \ell \in \AR,
		\\
		\halpha_q^2, &(u,v) = (i, q), q \in \AC,
		\\
		\hbeta_\ell^2 \halpha_q^2,  &(u,v) = (\ell, q), \ell \in \AR, q \in \AC. 
	\end{cases}
\end{align}
Accordingly, we rewrite the lead asymptotic variance and the plug-in estimator as
\begin{align}
	\upsilon^2 = \sum_{s \in \Sc} \omega_s \sigma^2_s,
	\quad
	\hupsilon^2 = \sum_{s \in \Sc} \homega_s \hsigma^2_s. 
\end{align}
The estimation error then decomposes as 
\begin{align}
	\hupsilon^2 - \upsilon^2 = \sum_{s \in \Sc} \omega_s \left(\hsigma_s^2 - \sigma_s^2\right) + \sum_{s \in \Sc} \left(\homega_s - \omega_s \right) \hsigma^2_s. 
\end{align}
We proceed to bound each term separately.

\medskip \noindent {\em Term 1.} 
Observe that
\begin{align}
	\left| \sum_{s \in \Sc} \omega_s \left(\hsigma_s^2 - \sigma_s^2\right) \right|
	&\le \sum_{s \in \Sc} \omega_s \left| \hsigma_s^2 - \sigma_s^2 \right| 
	\\
	&\le \max_{s \in \Sc} \left| \hsigma_s^2 - \sigma_s^2 \right| \cdot \left(\left\| \bbeta^* \right\|_2^2 + \left\| \balpha^* \right\|_2^2 + \left\| \bbeta^* \right\|_2^2 \cdot \left\| \balpha^* \right\|_2^2\right).
	\label{eq:var.t0}
\end{align}

\noindent {\em Term 2.}
Next, note that
\begin{align}
	\left| \sum_{s \in \Sc} \left(\homega_s - \omega_s \right) \hsigma^2_s \right| 
	\le \sum_{s \in \Sc} \left|\homega_s - \omega_s \right| \hsigma^2_s
	\le \sigma_+^2 \sum_{s \in \Sc} \left|\homega_s - \omega_s \right|. 
\end{align} 
Expanding, we have
\begin{align}
	\sum_{s \in \Sc} \left| \homega_s - \omega_s \right|
	\le \sum_{\ell \in \AR} \left| \hbeta_\ell^2 - \left(\beta^*_\ell\right)^2 \right| 
	+ \sum_{q \in \AC} \left| \halpha_q^2 - \left(\alpha^*_q\right)^2 \right| 
	+ \sum_{\ell \in \AR} \sum_{q \in \AC} \left| \hbeta_\ell^2 \halpha_q^2 - \left(\beta^*_\ell \alpha^*_q \right)^2 \right|. 
\end{align}
To enable progress, we utilize the following algebraic fact: for vectors $\hbx = [\hx_i] \in \Rb^n$ and $\bx = [x_i] \in \Rb^n$, 
\begin{align}
	\sum_{i=1}^n \left| \hx^2_i - x^2_i \right| \le \left\| \hbx - \bx \right\|_2 \cdot \left( 2 \left\| \bx \right\|_2 + \left\| \hbx - \bx \right\|_2 \right). 
	\label{eq:alg.simple}
\end{align} 
Hence, by \eqref{eq:alg.simple}, we have
\begin{align}
	\sum_{\ell \in \AR} \left| \hbeta_\ell^2 - \left(\beta^*_\ell\right)^2 \right| 
	&\le \left \| \bDelta_\beta \right\|_2 \cdot \left( 2 \left\| \bbeta^* \right\|_2 + \left\| \bDelta_\beta \right\| \right),
	\\
	\sum_{q \in \AC} \left| \halpha_q^2 - \left(\alpha^*_q\right)^2 \right|
	&\le \left \| \bDelta_\alpha \right\|_2 \cdot \left( 2 \left\| \balpha^* \right\|_2 + \left\| \bDelta_\alpha \right\| \right). 
	\label{eq:var.t12}
\end{align}
By a similar argument, we obtain 
\begin{align}
	 \sum_{\ell \in \AR} \sum_{q \in \AC} \left| \hbeta_\ell^2 \halpha_q^2 - \left(\beta^*_\ell \alpha^*_q \right)^2 \right|
	 &\le \sum_{\ell \in \AR} \sum_{q \in \AC} \halpha_q^2 \cdot \left| \hbeta_\ell^2 - \left(\beta^*_\ell\right)^2 \right| + \sum_{\ell \in \AR} \sum_{q \in \AC}  \left(\beta^*_\ell\right)^2 \cdot \left| \halpha_q^2 - \left(\alpha^*_q\right)^2 \right|
	 \\
	 &= \left\| \hbalpha \right\|_2^2  \sum_{\ell \in \AR}  \left| \hbeta_\ell^2 - \left(\beta^*_\ell\right)^2 \right| + \left\| \bbeta^* \right\|_2^2 \sum_{q \in \AC} \left| \halpha_q^2 - \left(\alpha^*_q\right)^2 \right|.
	 \label{eq:var.t3}
\end{align}
Leveraging $\| \hbalpha \|_2 \le \| \balpha^* \|_2 + \| \bDelta_\alpha \|_2$ with \eqref{eq:var.t12} and \eqref{eq:var.t3},
\begin{align}
	\sum_{s \in \Sc} \left| \homega_s - \omega_s \right|
	&\lesssim 
		\left\| \bDelta_\beta \right\|_2  \left( \left\| \bbeta^* \right\|_2 + \left\| \bDelta_\beta \right\|_2 \right) \left(1 + \left[ \left\| \balpha^* \right\|_2 + \left\| \bDelta_\alpha \right\|_2 \right]^2 \right) 
		\\
		&\quad + \left\| \bDelta_\alpha \right\|_2  \left(\left\| \balpha^* \right\|_2 + \left\| \bDelta_\alpha \right\|_2 \right) \left( 1 +  \left\| \bbeta^* \right\|_2^2 \right). 
		\label{eq:var.t1}
\end{align}

\noindent {\em Deterministic bound.} 
Collecting \eqref{eq:var.t0} and \eqref{eq:var.t1} yields the deterministic bound
\begin{align}
	\left| \hupsilon^2 - \upsilon^2 \right|
	&\lesssim 
	\max_{s \in \Sc} \left| \hsigma_s^2 - \sigma_s^2 \right| \cdot \left(\left\| \bbeta^* \right\|_2^2 + \left\| \balpha^* \right\|_2^2 + \left\| \bbeta^* \right\|_2^2 \cdot \left\| \balpha^* \right\|_2^2\right)
	\\
	&\quad 
	+ \sigma^2_+ \cdot \left\| \bDelta_\beta \right\|_2  \left( \left\| \bbeta^* \right\|_2 + \left\| \bDelta_\beta \right\|_2 \right) \left(1 + \left[ \left\| \balpha^* \right\|_2 + \left\| \bDelta_\alpha \right\|_2 \right]^2 \right) 
	\\
	&\quad + \sigma^2_+ \cdot \left\| \bDelta_\alpha \right\|_2 \left(\left\| \balpha^* \right\|_2 + \left\| \bDelta_\alpha \right\|_2 \right) \left( 1 +  \left\| \bbeta^* \right\|_2^2 \right). 
\end{align}

\noindent {\em High-probability bound.} 
Define the event
\begin{align}
	\Gc_\sigma \coloneqq \left\{ \max_{s \in \Sc} \left| \hsigma^2_s - \sigma_s^2 \right| \le \Lambda_\sigma \right\}.
\end{align}
By \eqref{eq:var.bound}, we have $\Pb(\Gc_\sigma^c \mid \Ec) \le p_\sigma$. 
%
%
Continuing, define $\Gc_{\PCR, \beta}$ as in \eqref{eq:events.pcr.noise} and recall $\Pb(\Gc_{\PCR, \beta}^c \mid \Ec) \lesssim |\nARC|^{-10}$ by Lemma~\ref{lemma:param.est.1}. 
Analogously, define
\begin{align}
	\Gc_{\PCR, \alpha} \coloneqq \left\{ \left\| \bDelta_\alpha \right \|_2 \le C_0 \Lambda_\alpha \right\},
\end{align}
where $C_0 > 0$ is an absolute constant and 
\begin{align}
	\Lambda_\alpha \coloneqq \frac{\sigma \br} { \sqrt{|\nAC|} \cdot \min\left\{\sqrt{|\nAR|}, \sqrt{|\nAC|}\right\}} 
		+ \frac{\sqrt{\sigma \br} \cdot L^{1/4}}{\sqrt{|\nAC|} \cdot |\nAR|^{1/4}}. 
\end{align}
Following the proof of Lemma~\ref{lemma:param.est.1} applied to the transposed regression problem of regressing $\bz$ on $\hbX$, we conclude that $\Pb(\Gc_{\PCR, \alpha}^c \mid \Ec) \lesssim |\nARC|^{-10}$. 
Therefore, by \eqref{eq:beta.l2.bound.2} and \eqref{eq:riesz.bound}, on $\Gc_\sigma \cap \Gc_{\PCR, \alpha} \cap \Gc_{\PCR, \beta}$, 
\begin{align}
	&\left| \hupsilon^2 - \upsilon^2 \right|
	\lesssim 
	\Lambda_\sigma \left( \frac{\br}{\min\{|\nAR|, |\nAC|\}} + \frac{\br^2}{|\nARC|} \right)
	\\
	&\qquad + \left(\sigma^2_+ + \Lambda_\sigma \right)
	\left[ \Lambda_\beta \left( \frac{\sqrt{\br}}{\sqrt{|\nAR|}} + \Lambda_\beta \right) \left\{ 1 + \left(\frac{\sqrt{\br}}{\sqrt{|\nAC|}} + \Lambda_\alpha \right)^2 \right\}
	+ \Lambda_\alpha \left(\frac{\sqrt{\br}}{\sqrt{|\nAC|}} + \Lambda_\alpha \right) \left(1 + \frac{\br}{|\nAR|}\right) 
	\right].
\end{align}
Taking a union bound, the above inequality holds with probability at least $1 - p_\sigma - \Oc(|\nARC|^{-10})$. 
The variance-perturbation envelope in \eqref{eq:gamma} is then obtained by observing $\Lambda_\alpha \lesssim \Phi / \sqrt{|\nAC|}$ and $\Lambda_\beta \lesssim \Phi / \sqrt{|\nAR|}$. 
Finally, since $\Gamma / \upsilon^2 = o(1)$ by assumption, $\hupsilon^2 / \upsilon^2 \xrightarrow{p} 1$ and $\hupsilon / \upsilon \xrightarrow{p} 1$.  
Combining this with the condition $\Psi / \upsilon = o(1)$ from Theorem~\ref{thm:normality} and applying Slutsky's theorem yields 
\begin{align}
	\frac{\hA_{ij} - A_{ij}}{\hupsilon} &= 
	\frac{\hA_{ij} - A_{ij}}{\upsilon} \cdot \frac{\upsilon}{\hupsilon} 
	\rightsquigarrow \Nc(0,1). 
\end{align}
The proof is complete. 
\end{proof}


\section{Proof of Theorem~\ref{thm:var.est}} \label{sec:proof.var.est} 
%
%
We divide our proof of Theorem~\ref{thm:var.est} into two parts: Appendix~\ref{sec:proof.var.est.bounded} proves \eqref{eq:var.est.bounded} and Appendix~\ref{sec:proof.var.est.general} proves \eqref{eq:var.est.general}.
In both arguments, we will anchor on the proofs in Appendix~\ref{sec:proof.consistency} that established Theorem~\ref{thm:consistency}. 
There are, however, two primary distinctions in the proofs below compared to those in Appendix~\ref{sec:proof.consistency}:
(i) the subgaussian noise terms $\varepsilon_{ij}$ are replaced by the sub-exponential noise terms $\tY_{ij}^2 - B_{ij}$, which satisfy
\begin{align}
	\left\| \tY_{ij}^2 - B_{ij} \right\|_{\psi_1} 
	&\le\left \| \tY_{ij}^2 \right\|_{\psi_1} 
	\\
	&\le 2 \cdot \left\| A^2_{ij} \right\|_{\psi_1} + 2 \cdot \left\| \varepsilon^2_{ij} \right\|_{\psi_1} 
	\\
	&= 2 \cdot \left\| A_{ij} \right\|^2_{\psi_2} + 2 \cdot \left\| \varepsilon_{ij} \right\|^2_{\psi_2} 
	\\
	&\lesssim 1 + \sigma^2_{ij} 
	\\
	&\lesssim 1 + \sigma^2; 
	\label{eq:subexp.noise.psi1}
\end{align}
(ii) $\rank(\bA_{\AR, \AC})$ is replaced by $\rank(\bB_{\AR, \AC})$, which satisfies 
\begin{align}
	\rank(\bB_{\AR, \AC}) \le \rank(\bSigma_{\AR, \AC}) + \rank(\bA_{\AR, \AC})^2
	= \btau + \br^2, \label{eq:rank.B} 
\end{align}
where $\btau \coloneqq \rank(\bSigma_{\AR, \AC})$. 
%
%
With these observations in hand, we are ready to prove Theorem~\ref{thm:var.est}.

\subsection{Bounded Noise} \label{sec:proof.var.est.bounded}
%

\begin{proof}[Proof of \eqref{eq:var.est.bounded}] 
By construction, we have 
\begin{align}
	\left| \hsigma^{2, \clip}_{ij} - \sigma^2_{ij} \right|
	&= 
	\left|\hB^\clip_{ij} - B_{ij}  +  A^2_{ij} - \left(\hA^{\clip}_{ij}\right)^2 \right|
	\le \left| \hB^\clip_{ij} - B_{ij} \right| + \left| \left(\hA^{\clip}_{ij} \right)^2 - A^2_{ij} \right|. \label{eq:var.est.bounded.1}
\end{align} 
To enable progress, we decompose the second term on the RHS of \eqref{eq:var.est.bounded.1} as
\begin{align}
	\left| \left(\hA^{\clip}_{ij}\right)^2 - A^2_{ij} \right|
	&= 
	\left| \left(\hA^{\clip}_{ij}- A_{ij} \right) \cdot \left(\hA^{\clip}_{ij} + A_{ij} \right) \right|
	\\
	&\le 2 \cdot \left| \hA^{\clip}_{ij}- A_{ij} \right|, &&\because \left| \hA^{\clip}_{ij} + A_{ij} \right| \le 2
	\label{eq:var.est.bounded.2}
\end{align} 
which can be further bounded by a direct application of Theorem~\ref{thm:consistency}. 

Accordingly, we turn our attention to the first term on the RHS of \eqref{eq:var.est.bounded.1}. 
Since the entries $\varepsilon_{ij}$ of $\bE$ are assumed to be bounded, it follows that $\tY_{ij}^2 - B_{ij}$ remain subgaussian random variables \citep{vershynin2018high}. 
Hence, we can again directly apply Theorem~\ref{thm:consistency} to obtain a similar bound with the primary changes being (i) $\| \varepsilon_{ij} \|_{\psi_2} \lesssim \sigma$ is replaced with $\| \tY^2_{ij} - B_{ij} \|_{\psi_2} \lesssim 1 + \sigma^2$, as per \eqref{eq:subexp.noise.psi1}; and (ii) $\rank(\bA_{\AR, \AC}) = \br$ is replaced with $\rank(\bB_{\AR, \AC}) \le \br^2 + \btau$, as per \eqref{eq:rank.B}. 
It is not difficult to verify that the resulting bound dominates the bound on \eqref{eq:var.est.bounded.2}. 
This completes the proof. 
\end{proof}

\subsection{General subgaussian Noise} \label{sec:proof.var.est.general}
%

\begin{proof}[Proof of \eqref{eq:var.est.general}] 
We begin with the decomposition
\begin{align}
	\Ex\left[ \hsigma^2_{ij} - \sigma^2_{ij} \right]
	&=
	\Ex\left[ \hB_{ij} - B_{ij}  -  \left(\hA^{\clip}_{ij}\right)^2 + A^2_{ij} \right]
	\\
	&= \Ex\left[ \hB_{ij} - B_{ij}\right] + \Ex\left[A^2_{ij} - \left(\hA^{\clip}_{ij}\right)^2 \right]
	\\
	&= \Ex\left[ \hB_{ij} - B_{ij}\right] + \Ex\left[\left(\hA^{\clip}_{ij} - A_{ij} \right) \cdot \left(\hA^{\clip}_{ij} + A_{ij} \right) \right]. 
	\label{eq:var.est.general.1}
\end{align}
We bound the second term on the RHS of \eqref{eq:var.est.general.1} as 
\begin{align}
	\Ex\left[\left(\hA^{\clip}_{ij} - A_{ij} \right) \cdot \left(\hA^{\clip}_{ij} + A_{ij} \right) \right]
	&\le
	\left| \Ex\left[\left(\hA^{\clip}_{ij} - A_{ij} \right) \cdot \left(\hA^{\clip}_{ij} + A_{ij} \right) \right] \right|
	\\
	&\le
	\Ex\left[ \left| \left(\hA^{\clip}_{ij} - A_{ij} \right) \cdot \left(\hA^{\clip}_{ij} + A_{ij} \right) \right| \right] && \because \text{~Jensen's inequality} 
	\\
	&=
	\Ex\left[ \left|\hA^{\clip}_{ij} - A_{ij} \right| \cdot \left| \hA^{\clip}_{ij} + A_{ij}  \right| \right] 
	\\
	&\le 2 \cdot \Ex\left[ \left| \hA^{\clip}_{ij} - A_{ij} \right| \right]. && \because \left| \hA^{\clip}_{ij} + A_{ij} \right| \le 2
	\label{eq:var.est.general.2}
\end{align}
%
%
Consider any $\epsilon > 0$. 
Define the event $\Gc \coloneqq \{ \mid \hA^{\clip}_{ij} - A_{ij} \mid \le \epsilon \}$ and its complement $\Gc^c \coloneqq \{| \hA^{\clip}_{ij} - A_{ij} \mid > \epsilon \}$. 
%
Then, we have  
\begin{align}
	\Ex\left[ \left| \hA^{\clip}_{ij} - A_{ij} \right| \right]
	&= 
	\Ex\left[ \left| \hA^{\clip}_{ij} - A_{ij} \right| \cdot \boldsymbol{1}\{\Gc\} \right] 
	+ \Ex\left[ \left| \hA^{\clip}_{ij} - A_{ij} \right| \cdot \boldsymbol{1}\{\Gc^c\} \right].  
\end{align}
On the event $\Gc$, we have $| \hA^{\clip}_{ij} - A_{ij} \mid \le \epsilon$. 
On the event $\Gc^c$, we utilize $| \hA^{\clip}_{ij} - A_{ij} \mid \le 2$, which holds by the construction of $\hA^\clip_{ij}$. 
Accordingly, 
\begin{align}
	\Ex\left[ \left| \hA^{\clip}_{ij} - A_{ij} \right| \right]
	&\le
	\epsilon + 2 \cdot \Pb \left( \left|\hA^\clip_{ij} - A_{ij} \right| > \epsilon \right), \label{eq:var.est.general.3}
\end{align}
which can be bounded by calling Theorem~\ref{thm:consistency} and choosing $\epsilon$ to be the RHS of \eqref{eq:thm.hp.special}. 

Next, we bound the first term on the RHS of \eqref{eq:var.est.general.1}.  
To this end, define the event 
\begin{align}
	\Gc \coloneqq \left\{ \left\| \bX \circ \bX - \bB_{\nAR, \nAC} \right\|_{\txtop} 
	\le C \sqrt{|\nAR|} + \left(1+\sigma^2 \right) \sqrt{|\nAC|} L^{3/2} \right\}
\end{align}
for some absolute constnat $C > 0$, 
which holds with probability at least $1-\mathcal{O}( |\nARC|^{-10})$ by Lemma~\ref{lemma:dg}. 
We utilize the decomposition 
\begin{align}
	\Ex \left[ \hB_{ij} - B_{ij} \right]
	&=
	\Ex\left[ \hB_{ij} - B_{ij} \mid \Gc \right] \cdot \Pb(\Gc) + \Ex \left[\hB_{ij} - B_{ij} \mid \Gc^c \right] \cdot \Pb(\Gc^c)
	\\
	&\le \Ex\left[ \hB_{ij} - B_{ij} \mid \Gc \right] + \Ex \left[\hB_{ij} - B_{ij} \mid \Gc^c \right] \cdot \Pb(\Gc^c), && \because \Pb(\Gc) \le 1
	\label{eq:var.est.general.4}
\end{align}
to arrive at the inequality 
\begin{align}
	\Ex \left[ \hB_{ij} - B_{ij} \right]
	&\le
	\left| \Ex \left[ \hB_{ij} - B_{ij} \right] \right|
	\\ & \le \left| \Ex \left[ \hB_{ij} - B_{ij} \mid \Gc \right] \right|
	+ \left| \Ex \left[ \hB_{ij} - B_{ij} \mid \Gc^c \right] \right| \cdot \Pb(\Gc^c). 
	\label{eq:var.est.general.5}
\end{align}
We will bound each term on the RHS of \eqref{eq:var.est.general.5} separately. 
For the second term, we have 
\begin{align}
	\left| \Ex \left[ \hB_{ij} - B_{ij} \mid \Gc^c \right] \right| \cdot \Pb(\Gc^c)
	&\le \Ex \left[ \left| \hB_{ij} - B_{ij} \right| \mid \Gc^c \right] \cdot \Pb(\Gc^c)
	&&\because \text{Jensen's inequality} 
	\\ 
	&\lesssim M \cdot |\nARC|^{-10}.  
	&& \because \left| \hB_{ij} - B_{ij} \right| \le M
	\label{eq:var.est.general.6} 
\end{align}
The first term, on the other hand, can be controlled following the proof of Theorem~\ref{thm:consistency} with the primary changes being 
(i) $\| \varepsilon_{ij} \|_{\psi_2} \lesssim \sigma$ is replaced with $\| \tY^2_{ij} - B_{ij} \|_{\psi_1} \lesssim 1 + \sigma^2$, as per \eqref{eq:subexp.noise.psi1}; 
(ii) Lemma~\ref{lemma:subg_matrix} is replaced with Lemma~\ref{lemma:dg}; 
and (iii) $\rank(\bA_{\AR, \AC}) = \br$ is replaced with $\rank(\bB_{\AR, \AC}) \le \br^2 + \btau$, as per \eqref{eq:rank.B}. 
For these reasons, we do not write out the proof in its entirety to reduce redundancies. 
Nevertheless, we obtain
\begin{align}
	\left|\Ex\left[ \hB_{ij} - B_{ij} \mid \Gc \right] \right|
	 \lesssim 
	\frac{\left(1 + \sigma^4 \right) \left(\br^2 + \btau \right) }{\min \left\{ \sqrt{|\nAR|} \cdot L^{-3/2}, \sqrt{|\nAC|} \right\}}.
\label{eq:var.est.general.7}
\end{align}
Observe that the RHS of \eqref{eq:var.est.general.7} dominates $\epsilon$, as defined in \eqref{eq:var.est.general.3}. Hence, simplifying the resulting inequality completes the proof. 
\end{proof}

\end{document}